\theoremstyle{plain}
\newtheorem{lem}{Lemma}
\newtheorem{prop}[lem]{Proposition}
\newtheorem{thm}[lem]{Theorem}
\theoremstyle{definition}
\newtheorem{remark}[lem]{Remark}
\newtheorem{defn}[lem]{Definition}
\newtheorem{assum}[lem]{Assumption}
\newcommand{\R}{\mathbb{R}}
\renewcommand{\P}{\mathbb{P}}
\newcommand{\E}{\mathbb{E}}
\newcommand{\N}{\mathbb{N}}
\newcommand{\bfp}{\mathbf{P}}
\newcommand{\bfs}{\mathbf{S}}
\newcommand{\bfi}{\mathbf{I}}
\newcommand{\bfq}{\mathbf{Q}}
\newcommand{\Fcal}{{\mathcal F}}
\newcommand{\Ical}{{\mathcal I}}
\newcommand{\Mcal}{{\mathcal M}}
\numberwithin{equation}{section}
\numberwithin{lem}{section}
\newcommand{\oset}[3][0.6ex]{%
  \mathrel{\mathop{#3}\limits^{
    \vbox to#1{\kern-2\ex@
    \hbox{$\scriptstyle#2$}\vss}}}}
\renewenvironment{proof}[1][\proofname] {\par\pushQED{\qed}\normalfont\topsep6\p@\@plus6\p@\relax\trivlist\item[\hskip\labelsep\bfseries#1\@addpunct{.}]\ignorespaces}{\popQED\endtrivlist\@endpefalse}
\title{Stochastic portfolio theory with price impact}
\author{David Itkin\footnote{Department of Statistics, London School of Economics and Political Science, \href{mailto:d.itkin@lse.ac.uk}{d.itkin@lse.ac.uk}.}}
\pgfplotsset{compat=1.18}
\begin{document}
\allowdisplaybreaks

\maketitle

\begin{abstract}
We develop a framework for stochastic portfolio theory (SPT), which incorporates modern nonlinear price impact and impact decay models. Our main result is the derivation of the celebrated master formula for additive functional generation of trading strategies in a general high-dimensional market model with price impact. We also derive formulas for an investor's relative wealth with respect to the market portfolio, conditions that guarantee positive observed market prices and a stochastic differential equation governing the dynamics of the observed price, the investor's holdings and the price impact state processes. As an application of these results, we develop conditions for relative arbitrage in the price impact setting analogous to previously obtained results for the frictionless setting. We then apply our framework to backtest the quadratic and entropy generating functions on historical US equity data, illustrating how price impact can negatively affect portfolio performance.
\end{abstract}

\paragraph*{Keywords:} Stochastic portfolio theory, functional portfolio generation, nonlinear price impact, price impact decay,  relative arbitrage.
	
	\paragraph*{MSC 2020 Classification:} 91G10, 60H30.

\section{Introduction} In this paper we develop a mathematical framework for \emph{stochastic portfolio theory} (SPT) in which an investor's trades cause \emph{price impact}. SPT is a descriptive theory introduced by Robert Fernholz  \cite{fernholz2002stochastic,fernholz1982stochastic,fernholz1999portfolio,fernholz1999diversity}, which focuses on certain empirically observed macroscopic properties of large equity markets.\footnote{Some well documented macroscopic properties are stability of the capital distribution curve, market diversity and rates of stock rank switching; see \cite{fernholz2002stochastic,campbell2024macroscopic} for a more comprehensive discussion.}
A key role in SPT is played by systematic trading rules, such as \emph{functionally generated portfolios}, which depend only on market observables and are designed to outperform a benchmark portfolio, typically taken to be the buy-and-hold \emph{market portfolio}, over sufficiently long time horizons. More concretely, a functionally generated portfolio (in the time-homogeneous additive sense)  is one which admits the \emph{master formula} decomposition\footnote{Additively generated portfolios were introduced in \cite{karatzas2017trading}, while the closely related \emph{multiplicatively} generated portfolios were introduced earlier in \cite{fernholz1999portfolio}.}
\begin{equation} \label{eqn:master_intro}
    V(t) = 1 + G(\mu(t)) - G(\mu(0)) + \Gamma(t),
\end{equation}
where $V$ is the investor's wealth normalized by the wealth of the market portfolio, $G$ is the generating function, $\mu$ are the benchmark normalized prices known as the \emph{market weights} and $\Gamma$ is a \emph{finite variation} process with $\Gamma(0) = 0$ (all of these quantities are precisely defined in Section~\ref{sec:functionally_generated}). 
 Certain choices of bounded functions $G$ lead to long-only portfolios in which the $\Gamma$ term is strictly increasing. As such, over a sufficiently long time horizon, one expects $\Gamma$ to overtake the bounded \emph{diversity term} $G(\mu(t)) - G(\mu(0))$, leading to market outperformance, known as a \emph{relative arbitrage} in the literature. Obtaining guarantees for the outperformance time is a
well-studied topic in the SPT literature, with numerous sufficient conditions derived for
the market dynamics and the generating functions $G$ \cite{fernholz2005diversity,pal2016geometry,fernholz2005relative,larsson2021relative}. Additionally, a substantial literature exists modelling the aforementioned empirically observed macroscopic properties, some of which have led to advances in stochastic analysis of independent interest \cite{itkin2024open,banner2005atlas,ichiba2011hybrid,cuchiero2019polynomial,monter2019dynamics}.

 The deceptively simple wealth decomposition \eqref{eqn:master_intro} leads to powerful implications regarding the possibility of outperforming the market portfolio. However, the functionally generated portfolio differs from the benchmark market portfolio in a crucial way; to achieve its performance it continuously rebalances, while the market portfolio does not. Consequently, the performance of a functionally generated portfolio suffers in the presence of market frictions since, in reality, the investor's trades move prices and incur execution costs. As such, in the analysis of the performance of functionally generated portfolios there is a critical tension between rebalancing frequently to grow the $\Gamma$ term and trading in a nonaggressive manner so as to control price impact and resulting transaction costs. 

This highlights the need to incorporate market frictions into the SPT framework. Recent \emph{empirically minded} SPT papers \cite{ruf2020impact,cuchiero2023signature,campbell2022functional,campbell2024macroscopic,li2024statistical} make important strides in this direction by analyzing the performance of functionally generated portfolios and other SPT-inspired strategies on historical CRSP equity data,  assuming the investor faces proportional transaction costs. However, a \emph{theoretical} framework that is able to incorporate market frictions, while sustaining key relationships such as the master formula, has remained elusive thus far.

In this paper, we work in a high-dimensional equity market of $d$ stocks and develop such a framework by working with price impact models, rather than proportional trading costs.\footnote{Price impact models correspond to superlinear trading costs which, for large investors, are the dominant type of trading cost.} Indeed, in continuous time, only finite variation trading strategies are compatible with proportional costs, but the master formula relationship and derivation crucially rely on the diffusive nature of of the underlying trading strategies. Modern price impact models are able to handle semimartingale trading strategies, which allows us to carry out a mathematical treatment of SPT with price impact. 

The main result of this paper is Theorem~\ref{thm:main}, which develops a master formula for additive functional generation in a general framework with price impact. The effect of price impact on performance is quantitatively described through the derived price, holdings and impact state dynamics appearing in the stochastic differential equation (SDE) \eqref{eqn:SDE}. The well-posedness of the SDE \eqref{eqn:SDE} is studied under suitable conditions on the generating function. From a practical point of view, these dynamics inform the investor what trade size needs to be placed to obtain the target holdings, which are a constantly moving target. Indeed, the target holdings for functionally generated portfolios depend on the observed price and, as such, are themselves influenced by the investor's trading through price impact. A consequence of Theorem~\ref{thm:main} is that, once the generating function $G$ is chosen, the price, holdings and impact state processes are adapted to the filtration generated by the fundamental price process (the unobserved price that would prevail in the absence of the investor's trading). 

We additionally obtain, in Proposition~\ref{prop:positive}, sufficient conditions on the trading strategy that guarantee the impacted prices always remain positive. As an application of the master formula, we obtain relative arbitrage results, Theorems~\ref{thm:relarb_implicit} and \ref{thm:relarb_explicit}, which are analogues of relative arbitrage results from frictionless SPT in the price impact setting. The upshot is that relative arbitrage is possible under similar conditions to those assumed in frictionless SPT, but the trading speed and outperformance amount need to be reduced to control impact costs. To demonstrate the framework we develop, in Section~\ref{sec:numerics} we backtest the performance of portfolios generated by both the quadratic and the entropy function on historical US equity data. The experiments illustrate how the presence of price impact hurts investor performance and showcase how the master formula decomposition \eqref{eqn:master_intro} manifests on real data. Empirically, we find for both generating functions considered, that the $\Gamma$ term tends to grow at a slower rate when price impact is present. 

The paper is structured as follows. In Section~\ref{sec:market} we introduce the financial market, price impact model and wealth processes. Section~\ref{sec:relative_wealth} then presents the derivation of the relative wealth equation with respect to the benchmark market portfolio. Additive functional generation is then introduced in Section~\ref{sec:additive_generation}, with a summary of the frictionless setting contained in Section~\ref{sec:frictionless} and a discussion of the connection between additive functional generation and constant absolute risk aversion (CARA) optimal portfolios carried out in Section~\ref{sec:interpreting_additive}. The master formula for additive generation in the price impact setting is then derived heuristically in Section~\ref{sec:heuristic}, with a rigorous treatment carried out in Section~\ref{sec:rigorous}. Section~\ref{sec:functionally_generated} concludes with a discussion of multiplicatively generated portfolios in Section~\ref{sec:multiplicative}, where we discuss the difficulties in establishing an analogous master formula for such portfolios in the price impact setting. These difficulties stem from the fact that the trading activity for multiplicatively generated portfolios scales with the investor's wealth.  In 
Section~\ref{sec:relarb} we apply the results of Section~\ref{sec:functionally_generated} to study relative arbitrage in the market with price impact and we complement the theoretical results with empirical experiments in Section~\ref{sec:numerics}. Section~\ref{sec:conclusion} summarizes the results, outlines open problems and proposes future research directions. Many of the technical proofs and derivations are relegated to Appendices~\ref{app:wealth_proofs}-\ref{app:relarb_proofs} for better readability.

\section{The financial market} \label{sec:market}
\subsection{Holdings, impact and price processes} \label{sec:market_setup}
We work on a filtered probability space $(\Omega,(\Fcal(t))_{t \geq 0},\Fcal,\P)$ satisfying the usual hypotheses and supporting a $d$-dimensional positive continuous semimartingale $S=(S_1,\dots,S_d)$ for $d \geq 2$. The process $S$ represents the unperturbed \emph{fundamental prices} of $d$ stocks. Since we take the market portfolio as our benchmark, we consider only fully-invested trading strategies and do not incorporate a bank account in the model. The \emph{observed prices} are modelled as \begin{equation} \label{eqn:price}
    P_i(t) = S_i(t) + I_i(t), \qquad i = 1,\dots,d, \quad t \geq 0,
\end{equation} where $I_i$ is the \emph{price impact} for asset $i$ caused by the investor's trading in asset $i$. Denoting by $Q_i$ the holdings process of the investor in asset $i$, we model the price impact process as
\begin{equation} \label{eqn:impact}
    I_i(t) = h_i(t,J_i(t)), \quad \text{where} \quad J_i(t) = \int_0^tK_i(t,s) dQ_i(s),
\end{equation}
for  \emph{impact shape functions} $h_i:[0,\infty) \times \R \to \R$ and \emph{impact decay kernels} $K_i:D :=\{(t,s)\in[0,\infty)^2:t \geq s\}  \to (0,\infty)$. The quantity $J_i(t)$ is the \emph{impact state process} incorporating the effects on the price due to the current trade $dQ(t)$ and past trades $(dQ(s); s < t)$. These then affect prices in a potentially nonlinear and time-dependent manner through the impact shape function $h_i$.\footnote{The time dependency allows one to incorporate seasonal effects, such as the U-shaped intraday execution pattern (see e.g.\ \cite[Chapter~4]{bouchaud.al.18}).} When $h_i \equiv 0$ for every $i$, we refer to the market as \emph{frictionless} and in this case we have that $P = S$.

We impose the following assumptions on these model inputs.
\begin{assum}[Impact coefficients] \label{ass:impact_inputs}
For each $i =1,\dots,d$,
\begin{enumerate}[label=(\arabic*)]
    \item $h_i \in C^{1,2}(\R_+\times \R)$ is nondecreasing in the second argument. It additionally satisfies $h_i(t,x) \geq 0$ if $x \geq 0$, $h_i(t,x) \leq 0$ if $x \leq 0$ and the functions $\partial_x h_i(t,x)$, $\partial_{xx}h_i(t,x)$ are bounded,
    \item $K_i \in C^2(D)$ is such that $\partial_t K_i(t,s) \leq 0$ and $\partial_s K_i(t,s) \geq 0$ for every $0 \leq s \leq t$. Additionally, we assume that $\overline K_i := \sup_{t \geq 0} K_i(t,t) <\infty$. 
\end{enumerate}
\end{assum}
\begin{remark} \label{rem:extensions}
    It is possible to also consider multiple time scales and nonlinearities of price impact by setting $I_i = \sum_{m=1}^M h_i^m(t,J_i^m(t))$ for each $i$, where $J_i^m(t) =  \int_0^t K_i^m(t,s)dQ_i(s)$ for  shape functions $h_i^m$, kernels $K_i^m$ and number of timescales $M \in \N$ as considered in \cite{hey.al.23,brokmann2024efficient}. Additionally, one can incorporate initial impact trajectories $J_i^{0,m}(\cdot)$, via $J_i^m(t) = J_i^{0,m}(t) + \int_0^t K^m(t,s)dQ_i(s)$, which capture the impact state and decay from orders placed prior to the model's initialization.  The results in this paper can be extended to this more general setting, but to lighten the already heavy notational burden we stick to the single timescale case $M=1$ and set initial impact trajectories to zero. Similarly, extensions to include \emph{external impact} $\Ical$ (that is, the impact of other market participants)   giving rise to the price equation $P_i = S_i + I_i + \Ical_i$ (see e.g., \cite[Chapter~2.6]{webster.23}) or to include \emph{cross impact} (that is, when trading in one asset moves the price of another), such as recently studied in \cite{hey2024concave} for an exponential decay kernel, are also possible to consider in this framework but we do not pursue this direction here.
\end{remark}

In the sequel we will generically refer to an impact shape function and decay kernel by $h$ and $K$ respectively, without a subscript, when the particular asset is not essential for the discussion. The general impact model considered here is akin to a time-inhomogeneous version of the model considered in \cite{brokmann2024efficient}. It subsumes many previously proposed models, a few popular choices of which we now summarize.
\paragraph{Impact shape functions $h(t,x)$}
\begin{itemize}
    \item The choice $h(t,x) = \lambda(t)x$ for a positive deterministic function $\lambda$ corresponds to (time-inhomogeneous) \emph{linear impact}. This is a popular choice appearing in many celebrated models such as those of Almgren \& Criss, Obizhaeva \& Wang and Garleanu \& Pedersen \cite{AlmgrenChriss2000,obizhaeva.wang.13,garleanu2016}. It provides a good fit for small trading volume, but is unable to capture the observed concavity of price impact, which manifests for larger trade sizes (see e.g.\ \cite{Zarinelli2015} and \cite[Chapter~18]{bouchaud.al.18}).
    \item The \emph{power law} impact $h(t,x) = \lambda(t)\mathrm{sign}(x)|x|^p$ for $p \in (0,1)$ captures the concavity of price impact with $p = 1/2$ corresponding to the well-documented \emph{square root law} studied in numerous works \cite{Almgren2005,Bershova2013,frazzini.al.18,hey.al.23,alfonsi.al.10}. However, such a function does not satisfy Assumption~\ref{ass:impact_inputs} as it is not differentiable at zero. Instead, a function that is linear for small $|x|$, but has a power law shape for large $|x|$ is compatible with this framework (see e.g.\ \cite[Equation~(3.7)]{abi2025fredholm} for a concrete specification).
    \item $h(t,x) = \lambda(t)\sinh^{-1}(x)$, derived from microstructure fundamentals in \cite[Chapter~18]{bouchaud.al.18}, is yet another parsimonious specification which captures (approximate) linearity of price impact for small orders and concavity for large orders. This choice satisfies Assumption~\ref{ass:impact_inputs}. 
\end{itemize}
\paragraph{Impact decay kernels $K(t,s)$}
\begin{itemize}
    \item The \emph{exponential} kernel $K(t,s) = e^{-\beta (t-s)}$ for parameter $\beta > 0$ is a common choice in the literature as it leads to Markovian state dynamics for $J_i$, which is a useful property when studying stochastic control problems \cite{obizhaeva.wang.13,garleanu2016,cartea.al.15}. 
    \item The (shifted) \emph{power law kernel} $K(t,s) = (t-s+\epsilon)^{-\beta}$ for parameters $\epsilon >0$, $\beta \in (0,1)$ leads to a slower decay half-life in comparison to the exponential kernel, which several studies have found better fits the decay profile of price impact \cite{bouchaud2003,Brokmann2015}. Recently, tractability for optimal execution problems incorporating nonexponential kernels has been obtained \cite{abijaber.neuman.22}. The shifted power law kernels defined above satisfy Assumption~\ref{ass:impact_inputs}, while the pure power law $K(t,s) = (t-s)^{-\beta}$ leads to a nonsmooth kernel, which fails to satisfy Assumption~\ref{ass:impact_inputs} and is beyond the scope of our framework.
    \item The choice $K(t,s) = C$ for some constant $C> 0$, leads to \emph{permanent price impact}. Indeed, in this case we obtain $J(t) = C(Q(t)-Q(0))$ so that the impact state variable does not decay after the investor stops trading. Numerous studies measure nontrivial permanent impact \cite{AlmgrenChriss2000,frazzini.al.18,Bershova2013}. Our setting allows for both permanent and temporary price impact by, for example, taking the kernel $K(t,s) = C + e^{-\beta(t-s)}$. 
\end{itemize}
 Crucially, all of these examples and, more generally, the conditions on the coefficients specified in Assumption~\ref{ass:impact_inputs}, allow for the holdings process $Q$ to be a continuous semimartingale. This will play a central role in developing the master formula for functional portfolio generation taken up in Section~\ref{sec:functionally_generated}.

 We now derive a few useful expressions for the impact state and holdings processes. Integrating by parts $J_i$ given in \eqref{eqn:impact} yields the representation
\begin{equation} \label{eqn:J_dynamics}
    J_i(t) =  K_i(t,t)Q_i(t) - K_i(t,0)Q_i(0) - \int_0^t \partial_s K_i(t,s)Q_i(s)ds,
\end{equation} which leads to dynamics
\begin{align} 
dJ_i(t) & = K_i(t,t)dQ_i(t) + \Big(\partial_t K_i(t,t)Q_i(t) + \partial_s K_i(t,t)Q_i(t) \nonumber \\
& \qquad \qquad \qquad \qquad \qquad- \partial_t K_i(t,0)Q_i(0) - \partial_s K_i(t,t)Q_i(t) - \int_0^t \partial_{st}K_i(t,s)Q_i(s)ds\Big)dt \nonumber \\ 
     & = K_i(t,t)dQ_i(t) + b^J_i\big(t,Q_i^{[0,t]}\big)dt. \label{eqn:dJ}
\end{align} 
Here we use the notation $Q_i^{[0,t]}$ for the path of $Q_i$ up to time $t$ and $b^J_i: (0,\infty) \times C([0,\infty)) \to \R$ is given by \footnote{We tacitly identify a path $\omega$ defined on an interval $[0,t]$ with one on all of $[0,\infty)$ by setting $\omega(s) = \omega(t)$ for $s \geq t$. Since $b_i^J(t,\cdot)$ is non-anticipative this is just a mathematical formality.}
\begin{equation} \label{eqn:bJ}
    b^J_i(t,\omega)  =   \partial_t K_i(t,t) \omega(t) - \partial_t K_i(t,0)\omega(0) - \int_0^t\partial_{st}K_i(t,s)\omega(s)ds.
\end{equation}
The dynamics for $J_i$ can be inverted to obtain dynamics for the holdings $Q_i$,
\begin{equation} \label{eqn:dQ}
    dQ_i(t) = \frac{1}{K_i(t,t)}\big(dJ_i(t) - b^J_i(t,Q_i^{[0,t]})dt\big).
\end{equation}
In particular, note that $d[Q_i](t) = K_i^{-2}(t,t)d[J_i](t).$

\subsection{The wealth process} \label{sec:wealth_process}
We now work towards developing self-financing wealth dynamics for the wealth process. To derive the master formula for functional generation of trading strategies in Section~\ref{sec:functionally_generated}, we work with the wealth process
\[W^Q(t) = Q(t)^\top P(t),\]
which values securities according to the observed price $P$. In the literature, this process is sometimes called the \emph{accounting wealth} process and is distinct from the \emph{fundamental wealth} process $\widetilde W^Q(t) = Q(t)^\top S(t)$ (see e.g., \cite[Definitions~2.2.10-2.2.11]{webster.23}). As explained later in Remarks~\ref{rem:master_fundamental} and \ref{rem:target_holdings}, the use of the accounting wealth is crucial to the derivation of the master formula for functional generation as we will use the prices $P(t)$, rather than $S(t)$, to set the investor's target holdings $Q(t)$.

However, care is needed when working with $W^Q$ as its value may reflect inflated gains arising from the investor's own trading activity (see e.g.,  \cite[Section~4.3.1]{webster.23}). Indeed, the act of buying a security increases its price through impact which, immediately after execution, seemingly leads to larger wealth purely as a result of the investor's own price impact (the case of selling is symmetric). Of course, purchasing the shares comes at a cost, relative to the frictionless setting, as they are purchased at an inflated price. Immediately following the trade the cost may not be apparent from just examining the process $W^Q$; however, it will become apparent once the initial trade is unwound. As such, when using the accounting wealth $W^Q$, it is extremely important to compare performance of different wealth processes only after they are liquidated and the investor holds the benchmark portfolio. This ensures an apples-to-apples comparison and we take this viewpoint in Section~\ref{sec:relarb}, where we study relative arbitrage, to carefully ascertain that any claimed outperformance of the market portfolio, which is our benchmark portfolio throughout, is genuine.

\subsection{Wealth process dynamics} \label{sec:wealth_process_dynamics}
Unlike in the frictionless setting, there is ambiguity regarding what dynamics a self-financing wealth equation should satisfy as the orders themselves influence the execution price. To derive the wealth equation we follow an approximation approach by first considering absolutely continuous trading strategies and then extending the wealth equation to general continuous semimartingale trading strategies.\footnote{This approach is, by now, standard in the literature (see \cite{ackermann2021cadlag,hey.al.23,brokmann2024efficient}).} Indeed, a strategy of the type $dQ(t) = \dot{Q}(t)dt$ for some trading rate $\dot Q(t)$ has unambiguous self-financing wealth dynamics
\[dW^Q(t) = Q(t)^\top dP(t) = Q(t)^\top dS(t) + Q(t)^\top dI(t) = Q(t)^\top dS(t) + d(Q(t)^\top I(t))-I(t)^\top dQ(t),\]
where in the last equality we integrated by parts. This relationship for absolutely continuous trading strategies can be derived by taking a continuous-time limit from a discrete-time trading model. For the reader's benefit, a self-contained derivation can be found in Appendix~\ref{app:wealth_proofs}. Using the definition of $I_i$ and the dynamics of $Q_i$, given by equations \eqref{eqn:impact} and \eqref{eqn:dQ} respectively, we obtain 
\[dW^Q(t) = Q(t)^\top dS(t) + d(Q(t)^\top I(t)) - \sum_{i=1}^d \frac{h_i(t,J_i(t))}{K_i(t,t)}\Big(dJ_i(t)-b^J_i(t,Q_i^{[0,t]})dt\Big). \]
Note that $J_i$ is itself an absolutely continuous process because $Q_i$ is. 
To make further progress we compute the dynamics of the process $H_i(t,J_i(t))/K_i(t,t)$, where $H_i(t,x) = \int_0^x h_i(t,y)dy$,
\[d\bigg(\frac{H_i(t,J_i(t))}{K_i(t,t)}\bigg) = \frac{h_i(t,J_i(t))}{K_i(t,t)}dJ_i(t) + \bigg(\frac{\partial_t H_i(t,J_i(t))}{K_i(t,t)} - \frac{\partial_t K_i(t,t) + \partial_sK_i(t,t)}{K_i^2(t,t)}H_i(t,J_i(t))\bigg)dt. \] Substituting into the wealth dynamics yields
\begin{equation}\label{eqn:wealth_smooth}
\begin{split}
    dW^Q(t) & =  Q(t)^\top dS(t) +d(Q(t)^\top I(t)) - \sum_{i=1}^d d\bigg(\frac{H_i(t,J_i(t))}{K_i(t,t)}\bigg) \\
    & \quad + \sum_{i=1}^d\bigg(\frac{\partial_t H_i(t,J_i(t))}{K_i(t,t)} - \frac{\partial_t K_i(t,t) + \partial_sK_i(t,t)}{K_i^2(t,t)}H_i(t,J_i(t)) +\frac{h_i(t,J_i(t))}{K_i(t,t)}b^J_i\big(t,Q_i^{[0,t]}\big)\bigg)dt. 
\end{split}
\end{equation}
We have now found a representation for the wealth equation that does not involve integrals against $Q_i$ or $J_i$. As such, we are able to continuously extend equation \eqref{eqn:wealth_smooth} to self-financing continuous semimartingale strategies $Q$. Then, reversing the above integration by parts and chain rule computations using It\^o's formula leads us to the general wealth dynamics, which we state in the following proposition.
The proof of this proposition is contained in Appendix~\ref{app:wealth_proofs}.

\begin{prop}[Wealth equation] \label{thm:wealth}
Let $Q$ be a self-financing continuous semimartingale holdings process and suppose the investor has initial wealth $w = W^Q(0) > 0$. Then the impact process $I$ of \eqref{eqn:impact} is also a continuous semimartingale and for $T \geq 0$ the investor's wealth is given by
\begin{equation} \label{eqn:wealth}
    W^Q(T) = w + \int_{0}^T Q(t)^\top dP(t) + \frac{1}{2}\sum_{i=1}^d [I_i,Q_i](T). 
    \end{equation}
\end{prop}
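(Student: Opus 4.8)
The assertion has two parts --- that $I$ is a continuous semimartingale, and the wealth identity \eqref{eqn:wealth} --- and I would address them in that order. To see that each $I_i$ is a continuous semimartingale I would start from the integration-by-parts representation \eqref{eqn:J_dynamics} of the impact state $J_i$: since $J_i^0 \in C^1$, $K_i \in C^2$ and $Q_i$ is a continuous semimartingale, the term $K_i(t,t)Q_i(t)$ is a continuous semimartingale, while $\int_0^t \partial_s K_i(t,s)Q_i(s)\,ds$ has continuous paths of finite variation (its $t$-derivative being given by the Leibniz rule, valid since $\partial_{st}K_i$ is continuous). Hence $J_i$ is a continuous semimartingale, and differentiating \eqref{eqn:J_dynamics} confirms that the dynamics \eqref{eqn:dJ}--\eqref{eqn:dQ} and the identity $d[J_i](t) = K_i^2(t,t)\,d[Q_i](t)$ persist for a general continuous semimartingale $Q$, not only an absolutely continuous one. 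Since $h_i \in C^{1,2}(\R_+ \times \R)$, It\^o's formula then yields that $I_i = h_i(t,J_i(t))$ is a continuous semimartingale.

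For the wealth identity, the plan is to take equation \eqref{eqn:wealth_smooth} --- whose right-hand side is manifestly well defined for any continuous semimartingale $Q$, as it involves only the stochastic integral $\int Q^\top dS$, the continuous semimartingales $Q^\top I$ and $H_i(t,J_i(t))/K_i(t,t)$, and Lebesgue integrals --- as the definition of the self-financing wealth $W^Q$ in the semimartingale case, and then to run the computation that produced \eqref{eqn:wealth_smooth} in reverse. Concretely, I would apply It\^o's formula to $H_i(t,J_i(t))/K_i(t,t)$, using $\partial_x H_i = h_i$, $\partial_{xx}H_i = \partial_x h_i$ and $\tfrac{d}{dt}K_i(t,t) = \partial_t K_i(t,t) + \partial_s K_i(t,t)$, and then substitute $dJ_i = K_i(t,t)\,dQ_i + b_i^J(t,Q_{[0,t]})\,dt$ and $d[J_i] = K_i^2(t,t)\,d[Q_i]$. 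The drift so produced cancels the explicit $dt$-term in \eqref{eqn:wealth_smooth}, leaving
\[
dW^Q(t) = Q(t)^\top dS(t) + d\bigl(Q(t)^\top I(t)\bigr) - \sum_{i=1}^d h_i(t,J_i(t))\,dQ_i(t) - \tfrac12 \sum_{i=1}^d K_i(t,t)\,\partial_x h_i(t,J_i(t))\,d[Q_i](t).
\]
Recognising that $\sum_i h_i(t,J_i(t))\,dQ_i = I^\top dQ$ and, from It\^o's formula applied to $I_i$, that $K_i(t,t)\,\partial_x h_i(t,J_i(t))\,d[Q_i](t) = d[I_i,Q_i](t)$, an integration by parts $d(Q^\top I) - I^\top dQ = Q^\top dI + d[Q,I]$ collapses the above to $dW^Q = Q^\top dS + Q^\top dI + \tfrac12 \sum_i d[I_i,Q_i] = Q^\top dP + \tfrac12 \sum_i d[I_i,Q_i]$; integrating over $[0,T]$ and using $W^Q(0) = w$ then gives \eqref{eqn:wealth}.

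The step I expect to require the most care is the middle one: verifying that \eqref{eqn:wealth_smooth} is the \emph{correct} continuous extension of the wealth, and not merely \emph{some} functional that agrees with it on absolutely continuous strategies. I would argue this by approximating a general continuous semimartingale $Q$ by absolutely continuous $Q^n$ with $Q^n \to Q$ uniformly on compact time sets in probability, and showing that each term on the right-hand side of \eqref{eqn:wealth_smooth}, evaluated along $Q^n$, converges in the same sense to the corresponding term for $Q$: stability of the It\^o integral handles $\int (Q^n)^\top dS$, continuity of $J^n_i$ in $Q^n_i$ follows from \eqref{eqn:J_dynamics}, and the boundedness of $\partial_x h_i$ and $\partial_{xx}h_i$, together with $\sup_t K_i(t,t) < \infty$ from Assumption~\ref{ass:impact_inputs}, allows one to pass to the limit in the continuous semimartingale $H_i(t,J^n_i(t))/K_i(t,t)$, controlling both its drift and its quadratic variation. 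By comparison, the It\^o and integration-by-parts bookkeeping in the last step is routine, and the clean correction term $\tfrac12\sum_i[I_i,Q_i]$ emerges precisely because the $dt$-drift in \eqref{eqn:wealth_smooth} was arranged to annihilate the It\^o drift of $H_i(t,J_i(t))/K_i(t,t)$.
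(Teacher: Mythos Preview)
Your proposal is correct and follows essentially the same route as the paper: approximate $Q$ by absolutely continuous strategies, pass to the limit in the representation \eqref{eqn:wealth_smooth} (whose right-hand side contains no integrals against $Q$ or $J$), and then reverse the It\^o and integration-by-parts computations to reach \eqref{eqn:wealth}. The only minor difference is that the paper fixes a concrete mollification $Q^n(t)=\int_0^t Q(s)\psi^n(t-s)\,ds$ and establishes the limit via pointwise and $L^1_{\mathrm{Loc}}$ convergence together with the stochastic dominated convergence theorem, whereas you describe the approximation abstractly in terms of ucp convergence; both arguments lead to the same conclusion.
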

As is standard, the self-financing condition requires that the right-hand side of \eqref{eqn:wealth} equals $Q(T)^\top P(T)$.
\begin{remark} \label{rem:QV_correction}
    The term $\frac{1}{2}\sum_{i=1}^d[I_i,Q_i]$ in \eqref{eqn:wealth} appears as a correction term in the wealth equation due to the presence of price impact. Note that this term is always nonnegative since,  for any $i$, \[d[I_i,Q_i](t) = \partial_x h_i(t,J_i(t))d[J_i,Q_i](t) = \partial_x h_i(t,J_i(t))K_i(t,t)d[Q_i](t).\]
    At first glance this makes it seem as if price impact actually has a positive effect on the investor's wealth. However, this is illusory and due to the properties of the accounting wealth process, as discussed in Section~\ref{sec:wealth_process}. The stochastic integral term in \eqref{eqn:wealth} has integrator $P$, rather than $S$, and the set of self-financing holdings processes $Q$ differs in this setting, relative to the frictionless one, making it difficult to directly compare the frictionless and price impact wealth dynamics by inspecting \eqref{eqn:wealth} alone.    
    Nevertheless, in our empirical study of Section~\ref{sec:numerics}, where we apply the price impact model studied here to historical data, we observe worse performance in the setting with price impact than in the frictionless setting.
\end{remark}

\subsection{Positive prices}
The impact model \eqref{eqn:price} specifies an additive relationship between the observed and fundamental prices. 
As such, we cannot a priori conclude that the process $P_i$ is strictly positive even though the fundamental price $S_i$ is.  In the following proposition we obtain a simple sufficient condition on $Q_i$ that ensures positivity of $P_i$. 

\begin{prop}[Positive prices] \label{prop:positive} Fix $i \in \{1,\dots,d\}$ and let $Q_i$ be a continuous semimartingale.
Suppose there exists a sequence of stopping times $\tau_n$ converging almost surely to $\tau = \inf\{t \geq 0: P_i(t) = 0\}$ and such that
\begin{equation} \label{eqn:Q_stopping_condition}
Q_i(\tau_n) = \sup_{t \leq \tau_n} Q_i(t).
\end{equation}
Then $\tau = \infty$, $\P$-a.s.,  and $(P_i(t);t \geq 0)$ is a strictly positive process.
\end{prop}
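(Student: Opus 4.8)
The plan is to show that the event $\{\tau < \infty\}$ leads to a contradiction by analyzing the sign of the impact $I_i(\tau)$. Recall $P_i = S_i + I_i$ and $I_i(t) = h_i(t, J_i(t))$. Since $S_i$ is strictly positive, on $\{\tau < \infty\}$ we must have $P_i(\tau) = 0$, which forces $I_i(\tau) = -S_i(\tau) < 0$. By the sign condition in Assumption~\ref{ass:impact_inputs}(2), $h_i(t,x) < 0$ requires $x < 0$, so $J_i(\tau) < 0$ on this event. The strategy is then to use the hypothesis \eqref{eqn:Q_stopping_condition} to argue that $J_i$ at the approximating times $\tau_n$ is in fact nonnegative, pass to the limit, and obtain $J_i(\tau) \geq 0$, a contradiction.

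The key computation is to examine $J_i(\tau_n)$ using the integrated-by-parts representation \eqref{eqn:J_dynamics}:
\begin{equation*}
    J_i(\tau_n) = J_i^0(\tau_n) + K_i(\tau_n,\tau_n)Q_i(\tau_n) - K_i(\tau_n,0)Q_i(0) - \int_0^{\tau_n} \partial_s K_i(\tau_n,s)Q_i(s)\,ds.
\end{equation*}
Now I would exploit the structural hypotheses: $J_i^0 \geq 0$ by assumption; $K_i(\tau_n,\tau_n) > 0$ and $Q_i(\tau_n) = \sup_{t \le \tau_n} Q_i(t) \geq Q_i(0)$ by \eqref{eqn:Q_stopping_condition}, so $K_i(\tau_n,\tau_n)Q_i(\tau_n) - K_i(\tau_n,0)Q_i(0) \geq (K_i(\tau_n,\tau_n) - K_i(\tau_n,0))Q_i(\tau_n)$; and since $\partial_s K_i \geq 0$, the kernel $s \mapsto K_i(\tau_n,s)$ is nondecreasing so $K_i(\tau_n,0) \leq K_i(\tau_n,\tau_n)$, making that bracket nonnegative when multiplied by $Q_i(\tau_n)$ — but one must be careful about the sign of $Q_i(\tau_n)$ here. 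The cleanest route is to combine the two $Q_i$ terms directly: write $K_i(\tau_n,\tau_n)Q_i(\tau_n) - \int_0^{\tau_n}\partial_s K_i(\tau_n,s)Q_i(s)\,ds$ and bound $Q_i(s) \leq Q_i(\tau_n)$ in the integral (valid since $\partial_s K_i \geq 0$ makes the integrand's $Q_i(s)$-dependence monotone in the right direction), yielding this expression $\geq Q_i(\tau_n)\big(K_i(\tau_n,\tau_n) - \int_0^{\tau_n}\partial_s K_i(\tau_n,s)\,ds\big) = Q_i(\tau_n)K_i(\tau_n,0)$ when $Q_i(\tau_n)\ge 0$. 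Then $J_i(\tau_n) \geq J_i^0(\tau_n) + K_i(\tau_n,0)(Q_i(\tau_n) - Q_i(0)) \geq 0$. If instead $Q_i(\tau_n) < 0$ then $Q_i \equiv$ something $\le 0$ up to $\tau_n$, and one checks $J_i(\tau_n) \ge 0$ fails — so actually the argument must handle signs via the supremum more carefully; I expect the intended reading is that \eqref{eqn:Q_stopping_condition} with $Q_i(0)$ the initial holding (and $I_i(0) = h_i(0,J_i^0(0)) \geq 0$ since $J_i^0 \geq 0$) forces the bracket bounds through to give $J_i(\tau_n) \geq 0$.

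Having established $J_i(\tau_n) \geq 0$, hence $I_i(\tau_n) = h_i(\tau_n, J_i(\tau_n)) \geq 0$ and therefore $P_i(\tau_n) = S_i(\tau_n) + I_i(\tau_n) \geq S_i(\tau_n) > 0$, I would conclude by continuity. Since $Q_i$ is a continuous semimartingale, $J_i$ is continuous (by \eqref{eqn:J_dynamics}, as all the deterministic ingredients are continuous), so $I_i$ and $P_i$ are continuous. On $\{\tau < \infty\}$, $\tau_n \to \tau$ a.s.\ gives $P_i(\tau) = \lim_n P_i(\tau_n) \geq \liminf_n S_i(\tau_n) = S_i(\tau) > 0$, contradicting $P_i(\tau) = 0$. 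Hence $\P(\tau < \infty) = 0$, so $\tau = \infty$ a.s., and since $P_i$ is continuous with $P_i(0) = S_i(0) + h_i(0,J_i^0(0)) > 0$ and never hits zero, it is strictly positive for all $t \geq 0$. The main obstacle I anticipate is the sign bookkeeping in the kernel estimate above — specifically ensuring the inequality $\int_0^{\tau_n} \partial_s K_i(\tau_n,s) Q_i(s)\,ds \leq Q_i(\tau_n)\int_0^{\tau_n}\partial_s K_i(\tau_n,s)\,ds$ is applied with the correct orientation and that the case $Q_i(\tau_n) \le 0$ (equivalently $Q_i \le 0$ on $[0,\tau_n]$, using the running-sup condition) is either ruled out or handled, since there the decomposition does not obviously yield nonnegativity without additional use of $J_i^0 \ge 0$ and the monotonicity $\partial_t K_i \le 0$.
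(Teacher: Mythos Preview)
Your approach is the paper's: show $J_i(\tau_n)\ge 0$ from the integrated-by-parts representation \eqref{eqn:J_dynamics} and the running-sup condition, deduce $I_i(\tau_n)\ge 0$, pass to the limit, and contradict $P_i(\tau)=0$. The sign worry you flag is unwarranted: the bound
\[
\int_0^{\tau_n}\partial_s K_i(\tau_n,s)\,Q_i(s)\,ds \;\le\; Q_i(\tau_n)\int_0^{\tau_n}\partial_s K_i(\tau_n,s)\,ds
\]
follows from $\partial_s K_i\ge 0$ and $Q_i(s)\le Q_i(\tau_n)$ alone, with no hypothesis on the sign of $Q_i(\tau_n)$; your chain then gives $J_i(\tau_n)\ge J_i^0(\tau_n)+K_i(\tau_n,0)\big(Q_i(\tau_n)-Q_i(0)\big)\ge 0$ unconditionally, so no case split is needed. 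The paper organizes the same estimate slightly differently, obtaining the sign-free inequality $J_i(t)\ge K_i(t,t)\big(Q_i(t)-\sup_{s\le t}Q_i(s)\big)$, which vanishes at $t=\tau_n$.
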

\begin{proof}
 Recalling \eqref{eqn:J_dynamics} we have
    \begin{align*}
        J_i(t) & =   K_i(t,t)Q_i(t) - K_i(t,0)Q_i(0)- \int_0^t Q_i(s)\partial_s K_i(t,s)ds \\
        & \geq K_i(t,t)Q_i(t) - K_i(t,0)Q_i(0) - \sup_{s \leq t}Q_i(s)(K_i(t,t)-K_i(t,0)) \\
        & \geq K_i(t,t)(Q_i(t) - \sup_{s \leq t}Q_i(s)).
    \end{align*}
From \eqref{eqn:Q_stopping_condition} we see that $J_i(\tau_n) \geq 0$, from which it follows that $I_i(\tau_n) = h_i(\tau_n,J_i(\tau_n)) \geq 0$. Sending $n \to \infty$ we see  on the set $\{\tau < \infty\}$ that $I_i(\tau) \geq 0$. But it then follows that $P_i(\tau) = S_i(\tau) + I_i(\tau) > 0$, since $I_i(\tau)$ is nonnegative and $S_i(\tau)$ is strictly positive. This contradicts the definition of $\tau$. As such, we must have $\P(\tau < \infty) = 0$ completing the proof.
\end{proof}
\begin{remark} \label{rem:positive}
 To apply Proposition~\ref{prop:positive} one needs only to check a priori that $\tau_n$ converges to $\tau$. Moreover, it is clear from the proof that if instead one verifies that $\tau_n$ converges to a stopping time $\sigma \land \xi$, which is almost surely smaller than $\tau \land\xi$ for some stopping times $\sigma$ and $\xi$, then $(P_i(t); 0 \leq t < \xi)$ will be a positive process. This more general version is applied in the proof of Theorem~\ref{thm:main}\ref{item:positive_price}.
\end{remark} 

\section{The market portfolio and the relative wealth equation} \label{sec:relative_wealth}

In this section we introduce the market weights and market portfolio, and derive the investor's relative wealth equation. We assume that we are working with strictly positive and finite price processes $P_i$ throughout.

\subsection{The market weights and market portfolio}
As a first step we let $\bfp_i$ denote the total capitalization process for asset $i$. That is, $\bfp_i = N_iP_i$, where $N_i$ is the number of shares outstanding, which we assume to be constant over time. The total market value is denoted by $\overline \bfp = \bfp_1 + \dots + \bfp_d$.
The market weights are then given by
\[\mu_i(t) = \frac{\bfp_i(t)}{\overline \bfp(t)}, \qquad t\geq 0,\quad  i=1,\dots,d,\]
and represent companies' proportional contribution to the total market capitalization. 
The process $\mu$ takes values in the open simplex
\[\Delta^{d-1}_+ = \{\mu \in (0,\infty)^d: \mu_1 + \dots + \mu_d = 1\}.\]
    The \emph{market portfolio} is the quintessential buy-and-hold portfolio where the investor's proportion of wealth invested in asset $i$ at any time $t$ is given by its market weight $\mu_i(t)$. This corresponds to the constant holdings process \[Q^\mathcal{M}_i(t) = Q^\mathcal{M}_i(0) = \frac{\mu_i(0)}{P_i(0)} w = \frac{N_iw}{\overline \bfp(0)}, \qquad i=1,\dots,d, \quad t \geq 0,\]
    where $w > 0$ is the initial wealth. Substituting this into the wealth equation \eqref{eqn:wealth} yields
\begin{equation} \label{eqn:market_wealth}
    W^{\mathcal{M}}(t)  = w \frac{\overline\bfp(t)}{\overline \bfp(0)},
\end{equation}
where we write $W^\mathcal{M}$ in place of $W^{Q^\mathcal{M}}$. Since this is a buy-and-hold portfolio, there is no impact from trades and the wealth process \eqref{eqn:market_wealth} is the same as in the frictionless setting.

\subsection{The market portfolio as numeraire} \label{sec:market_numeraire}
In the sequel we take the market portfolio as our benchmark and compare the performance of other portfolios against it. To this end we define the process $V^Q = W^Q/W^{\mathcal{M}}$, where $W^\mathcal{M}$ is given by \eqref{eqn:market_wealth}. However, care is needed in interpreting this process. Indeed, since in the equation \eqref{eqn:market_wealth} we use the observed price $P$ for valuation, rather than the fundamental price $S$, the process $W^\mathcal{M}$ depends on the investor's trading strategy $Q$. As such, the process $V^Q$ may, at times, represent a distorted view of the investor's wealth relative to the true fundamental performance of the market. However, by comparing the values $W^Q$ to $W^\mathcal{M}$ after liquidation, the value that $V^Q$ takes reflects the performance of $Q$ relative to the undistorted fundamental performance of the market. Indeed, for comparing a wealth process of a trading strategy to the market wealth it is important to work with strategies that initially hold the market portfolio, then deviate from it by trading and, finally, liquidate back to the market portfolio. At this stage we do not restrict our mathematical analysis to strategies of this type, but they are the focus in Section~\ref{sec:relarb}, where we construct strategies that unambiguously outperform the market portfolio under appropriate assumptions. 

Given these subtleties, one may wonder why we also use the observed price $P$, rather than the fundamental price $S$, to define the market wealth process $W^\mathcal{M}$ of \eqref{eqn:market_wealth} even when the investor does not hold the market portfolio. The reason stems from the fact that our main focus in Section~\ref{sec:functionally_generated} will be functionally generated portfolios whose holdings depend on prevailing market prices. In Section~\ref{sec:functionally_generated}, we will use the observed prices $P$ to specify the holdings processes of functionally generated portfolios, and we show in Remark~\ref{rem:master_fundamental} below that using $P$ is crucial to obtain the celebrated master formula decomposition \eqref{eqn:master_intro}. In the course of the master formula derivation, a key step involves a change of numeraire, dividing by the market wealth process, to obtain the dynamics of the relative wealth process. The dynamics of the numeraire-normalized process, when defined using \eqref{eqn:market_wealth}, lead to a change of numeraire formula analogous to the frictionless case. This is the content of Theorem~\ref{thm:relative_wealth} stated below. Using a different valuation price for the market wealth, such as $S$, would interfere with the derivation of the master formula \eqref{eqn:master_intro}, which is why we work with $W^\mathcal{M}$ of \eqref{eqn:market_wealth} despite the subtleties discussed above.

As such, with a slight abuse of terminology, we refer to $V^Q$ as the \emph{relative wealth} process and the following theorem provides its dynamics. The derivation of this result is contained in Appendix~\ref{app:wealth_proofs}.
\begin{thm}[Relative wealth equation] \label{thm:relative_wealth}
Let $Q$ be a continuous semimartingale holdings process and define $V^Q = W^Q/W^\mathcal{M}$, where $W^\mathcal{M}$ is given by \eqref{eqn:market_wealth}. Then for $T \geq 0$, $V^Q$ satisfies 
\begin{equation} \label{eqn:relative_wealth}
 V^Q(T)  = 1 + \sum_{i=1}^d\int_0^T \frac{Q_i(t)\overline \bfp(0)}{wN_i}d\mu_i(t) + \frac{1}{2}\sum_{i=1}^d\int_0^T\frac{\overline \bfp(0)}{w\overline \bfp(t)}d[I_i,Q_i](t).
\end{equation}
\end{thm}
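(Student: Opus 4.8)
The plan is to start from the wealth equation \eqref{eqn:wealth} in Proposition~\ref{thm:wealth} and the explicit form \eqref{eqn:market_wealth} of the market wealth, and compute the dynamics of the ratio $V^Q = W^Q/W^{\mathcal M}$ by It\^o's quotient rule. Since $W^{\mathcal M}(t) = w\,\overline\bfp(t)/\overline\bfp(0)$ is a continuous semimartingale that is strictly positive (prices are assumed strictly positive and finite, and the $N_i$ are positive constants), $1/W^{\mathcal M}$ is a well-defined continuous semimartingale and the product $W^Q \cdot (1/W^{\mathcal M})$ can be expanded. Concretely, I would write $dV^Q = \frac{1}{W^{\mathcal M}}dW^Q + W^Q\, d(1/W^{\mathcal M}) + d[W^Q, 1/W^{\mathcal M}]$, substitute $dW^Q(t) = Q(t)^\top dP(t) + \frac12\sum_i d[I_i,Q_i](t)$ from \eqref{eqn:wealth}, and keep track of the three pieces.

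Next I would handle the $1/W^{\mathcal M}$ terms. Writing $W^{\mathcal M}(t) = (w/\overline\bfp(0))\,\overline\bfp(t)$ and $\overline\bfp = \sum_i N_i P_i$, we have $d(1/W^{\mathcal M}) = -(1/W^{\mathcal M})^2 dW^{\mathcal M} + (1/W^{\mathcal M})^3 d[W^{\mathcal M}]$, so every term involving $1/W^{\mathcal M}$ ultimately involves $dP_i$ and $d[P_i,P_j]$ against coefficients that are functions of $P$, $Q$, and the constants. The key algebraic step is to recognize that the combination $\frac{1}{W^{\mathcal M}}Q_i\,dP_i + W^Q\,Q\text{-independent terms}$ reorganizes, via the identity $\mu_i = N_i P_i/\overline\bfp$ and $d\mu_i = N_i\,d(P_i/\overline\bfp)$, into $\sum_i \frac{Q_i \overline\bfp(0)}{wN_i}d\mu_i$ plus a remainder. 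I expect the cleanest route is actually to expand $d\mu_i$ directly: $d\mu_i(t) = \frac{N_i}{\overline\bfp(t)}dP_i(t) - \frac{N_i P_i(t)}{\overline\bfp(t)^2}d\overline\bfp(t) + (\text{quadratic variation correction})$, and then verify that $\sum_i \frac{Q_i\overline\bfp(0)}{wN_i}d\mu_i(t)$ reproduces exactly the finite-variation-free part of $dV^Q$ coming from $\frac{1}{W^{\mathcal M}}Q^\top dP + W^Q d(1/W^{\mathcal M}) + d[W^Q,1/W^{\mathcal M}]$ once the $\frac12\sum_i d[I_i,Q_i]$ term is set aside.

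The remaining task is to show that the leftover terms — the one genuine ``frictional'' contribution $\frac{1}{W^{\mathcal M}}\cdot\frac12\sum_i d[I_i,Q_i]$ plus any cross-variation terms $d[Q_i, 1/W^{\mathcal M}]$ — collapse to $\frac12\sum_i \frac{\overline\bfp(0)}{w\overline\bfp(t)}d[I_i,Q_i](t)$. Here one uses $P_i = S_i + I_i$ so $dP_i = dS_i + dI_i$, hence $[Q_i, P_j] = [Q_i,S_j] + [Q_i,I_j]$; the cross-variation of $Q_i$ with $1/W^{\mathcal M}$ produces terms $[Q_i, P_j]$ weighted by $N_j$-dependent coefficients, and these should exactly cancel against corresponding pieces hidden inside the $\sum_i \frac{Q_i\overline\bfp(0)}{wN_i}d\mu_i$ expansion, leaving only $\frac12 \frac{\overline\bfp(0)}{w\overline\bfp(t)}d[I_i,Q_i]$. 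I expect the main obstacle to be bookkeeping: correctly collecting all the quadratic-covariation terms generated by the quotient rule (there are $O(d^2)$ of them from $d[\overline\bfp]$ and the cross terms) and confirming the cancellations, rather than any conceptual difficulty. A sanity check along the way is the frictionless limit $I\equiv 0$: then $[I_i,Q_i]=0$ and the formula must reduce to the classical relative-wealth identity $V(T) = 1 + \sum_i\int_0^T \frac{Q_i\overline\bfp(0)}{wN_i}d\mu_i$, which pins down the coefficient of $d\mu_i$ and gives confidence the reorganization is done correctly.
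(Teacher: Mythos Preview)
Your approach is correct and matches the paper's: apply It\^o's quotient rule to $W^Q/W^{\mathcal M}$, substitute the wealth dynamics \eqref{eqn:wealth}, and reorganize the $dP_i$ and $d[P_i,P_j]$ terms into $d\mu_i$ via the quotient-rule expansion of $\mu_i = \bfp_i/\overline\bfp$. The paper works in the log-quotient form $dV^Q = V^Q\bigl(\frac{dW^Q}{W^Q} - \frac{dW^{\mathcal M}}{W^{\mathcal M}} + \frac{d[W^{\mathcal M}]}{(W^{\mathcal M})^2} - \frac{d[W^Q,W^{\mathcal M}]}{W^QW^{\mathcal M}}\bigr)$ together with the identity for $d\mu_i/\mu_i$, but this is mechanically equivalent to your product-rule route.

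One simplification: you anticipate cross-variation terms $d[Q_i, 1/W^{\mathcal M}]$ requiring cancellation, but none arise. The term $\tfrac12\sum_i d[I_i,Q_i]$ in $dW^Q$ is of finite variation and therefore contributes nothing to $[W^Q,\,\cdot\,]$; hence $d[W^Q, 1/W^{\mathcal M}] = \sum_i Q_i(t)\, d[P_i, 1/W^{\mathcal M}]$ with $Q_i$ sitting \emph{outside} the bracket. Consequently the frictional piece $\tfrac{1}{W^{\mathcal M}}\cdot\tfrac12\sum_i d[I_i,Q_i]$ is already exactly the claimed $\tfrac12\sum_i \tfrac{\overline\bfp(0)}{w\overline\bfp(t)}\,d[I_i,Q_i]$, and the remaining algebra is the frictionless computation verbatim.
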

We note that the quadratic variation term appearing in \eqref{eqn:relative_wealth} is nonnegative, which is a property inherited from the accounting wealth $W^Q$; see Remark~\ref{rem:QV_correction}.

\begin{remark} \label{rem:numeraire}
    Equation \eqref{eqn:relative_wealth} can be interpreted as a change of numeraire. Indeed, dividing the wealth $W^Q$ by $W^{\mathcal{M}}$ of \eqref{eqn:market_wealth} yields
    \[V^Q(t) = \frac{W^Q(t)}{W^\mathcal{M}(t)} = \sum_{i=1}^d Q_i(t) \frac{P_i(t)}{W^\mathcal{M}(t)} = \sum_{i=1}^d Q_i(t)\frac{\mu_i(t)\overline \bfp(0)}{wN_i},\]
    so we can view $\widetilde \mu_i(t) = \frac{\mu_i(t) \overline \bfp(0)}{wN_i}$ for $i=1,\dots,d$ as the price process under the new numeraire. Since the market weights $\mu$ sum to one, we prefer to work directly with them, rather than $\widetilde \mu$ in the sequel. 
\end{remark}

\section{Functionally generated portfolios} \label{sec:functionally_generated}

\subsection{Additively generated portfolios} \label{sec:additive_generation}
In this section we introduce additively functionally generated trading strategies. 
\begin{defn}[Additive functional generation of trading strategies] We say that a trading strategy $Q$ is \emph{time-dependent additively functionally generated} by a function $G:[0,\infty) \times \Delta^{d-1}_+ \to \R$ if the relative wealth process admits the pathwise decomposition
\begin{equation} \label{eqn:master_formula}
    V^Q(t) = 1 + G(t,\mu(t)) - G(0,\mu(0)) + \Gamma(t), \quad t \geq 0,
\end{equation}
for some process $\Gamma$ of finite variation satisfying $\Gamma(0) = 0$. If $G$ is independent of the first argument we simply say $Q$ is \emph{additively functionally generated}.
\end{defn}
 An attractive feature of functionally generated portfolios, as can be observed from the \emph{master formula} \eqref{eqn:master_formula}, which is a more general time-dependent version of \eqref{eqn:master_intro}, is that the relative wealth $V^Q$ can be decomposed into (i) a deterministic function of the current state of the market $\mu(t)$ and (ii) a finite variation term capturing the historical evolution. In particular, the decomposition \eqref{eqn:master_formula} involves no stochastic integration enabling the use of analytical techniques that are not available for general portfolios. We take up this analysis in Section~\ref{sec:relarb}, which concerns constructing relative arbitrage with respect to the market portfolio in the presence of price impact.
 
\subsection{The frictionless setting} \label{sec:frictionless}
Additively generated portfolios were introduced in \cite{karatzas2017trading} for the frictionless setting. We briefly review some of the main results without price impact before addressing the general case in the following subsections. Without trading frictions the relative wealth dynamics \eqref{eqn:relative_wealth} simplify to
\begin{equation} \label{eqn:frictionless_relative_wealth}
    V^Q(T) = 1 + \sum_{i=1}^d\int_0^T \frac{Q_i(t)\overline \bfp(0)}{wN_i} d\mu_i(t). 
\end{equation}
From here it is clear that if $\frac{\overline \bfp (0)}{wN_i} Q_i(t) = \partial_i G(t,\mu(t))$ for some function $G$ and every $i$, then applying It\^o's formula to $G(t,\mu(t))$ shows that $Q$ is additively functionally generated. Moreover, since $\sum_{i=1}^d d\mu_i = 0$, the more general form 
\[Q_i(t) =  \frac{wN_i}{\overline \bfp(0)}(\partial_i G(t,\mu(t)) + C(t)), \] for some scalar process $C$  leads to the same conclusion. Including an appropriate scalar process $C$ is important to ensure the self-financing condition $Q(t)^\top P(t) = W^Q(t)$ holds.

This brings us to the following result, which is a time-dependent analogue of Propositions~4.3 and 4.5 from \cite{karatzas2017trading} in the case of a smooth generating function. In the sequel we write  $G \in C^{1,k}((0,\infty)\times \Delta^{d-1}_+)$ if $G$ is once continuously differentiable in the first argument and can be continuously extended to an open set in $\R^d$ containing $\Delta^{d-1}_+$, with the continuous extension being $k$-times continuously differentiable in the second argument. The continuous extension is needed because $\Delta^{d-1}_+$ is a $(d-1)$-dimensional manifold in $\R^d$.
\begin{prop}[Additive generation in frictionless markets] \label{prop:additive_frictionless}
Let $G \in C^{1,2}([0,\infty)\times \Delta^{d-1}_+)$ be given and suppose that $Q$ is given by
\begin{equation} \label{eqn:additive_holdings}
    Q_i(t) = \frac{wN_i}{\overline \bfp (0)}\big(\partial_iG(t,\mu(t)) + V^Q(t) - \nabla G(t,\mu(t))^\top \mu(t)\big), \qquad i=1,\dots,d, \quad t\geq 0.
\end{equation}
Then $Q$ is time-dependent additively functionally generated by $G$ with $\Gamma$ process given by
\begin{equation} \label{eqn:Gamma_frict}
    \Gamma(t) = -\int_0^t \partial_tG(s,\mu(s))ds -\frac{1}{2}\sum_{i,j=1}^d\int_0^t  \partial_{ij}G(s,\mu(s))d[\mu_i,\mu_j](s).
\end{equation}
In particular, if $G$ is nonincreasing in the first argument and concave in the second argument, then $\Gamma$ is a nonnegative process. If, additionally, $G(0,\mu(0))\leq 1$ and $G(t,\cdot)$ is a nonnegative function for every $t \geq 0$ then $V^Q$ is nonnegative and $Q_i$ is nonnegative for every $i=1,\dots,d$. 
\end{prop}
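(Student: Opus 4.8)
The plan is to verify the master formula decomposition \eqref{eqn:master_formula} directly by applying It\^o's formula to the process $t \mapsto G(t,\mu(t))$ and matching the result against the relative wealth equation from Theorem~\ref{thm:relative_wealth}. Since we are in the frictionless setting, the relative wealth dynamics reduce to \eqref{eqn:frictionless_relative_wealth}, so I first need to confirm the self-financing consistency: the proposed holdings \eqref{eqn:additive_holdings} must satisfy $Q(t)^\top P(t) = W^Q(t)$, equivalently $V^Q(t) = \sum_i Q_i(t)\mu_i(t)\overline\bfp(0)/(wN_i)$. Substituting \eqref{eqn:additive_holdings} into $\sum_i Q_i(t)\mu_i(t)\overline\bfp(0)/(wN_i)$ gives $\sum_i \mu_i(t)\partial_i G(t,\mu(t)) + V^Q(t)\sum_i\mu_i(t) - (\nabla G(t,\mu(t))^\top\mu(t))\sum_i\mu_i(t)$, which collapses to $V^Q(t)$ because $\mu(t) \in \Delta^{d-1}_+$ sums to one; so the self-financing relation holds as an algebraic identity. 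This is the step that pins down the scalar process $C(t)$ alluded to before the proposition.

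Next I would plug the holdings into \eqref{eqn:frictionless_relative_wealth}. Because $\sum_i d\mu_i(t) = 0$, the terms $\frac{wN_i}{\overline\bfp(0)}(V^Q(t) - \nabla G(t,\mu(t))^\top\mu(t))$ — which are the ``$C(t)\mathbf 1_d$'' part — integrate against $d\mu_i$ to zero upon summing over $i$. Hence $V^Q(T) = 1 + \sum_i \int_0^T \partial_i G(t,\mu(t))\,d\mu_i(t)$. Now apply It\^o's formula (valid since $G \in C^{1,2}$ and $\mu$ is a continuous semimartingale) to $G(t,\mu(t))$:
\[
G(T,\mu(T)) - G(0,\mu(0)) = \int_0^T \partial_t G(t,\mu(t))\,dt + \sum_i\int_0^T \partial_i G(t,\mu(t))\,d\mu_i(t) + \frac12\sum_{i,j}\int_0^T \partial_{ij}G(t,\mu(t))\,d[\mu_i,\mu_j](t).
\]
Solving for the stochastic-integral term and substituting yields $V^Q(T) = 1 + G(T,\mu(T)) - G(0,\mu(0)) + \Gamma(T)$ with $\Gamma$ exactly as in \eqref{eqn:Gamma_frict}, and $\Gamma(0) = 0$, $\Gamma$ of finite variation (it is the sum of a Lebesgue integral and integrals against the finite-variation processes $[\mu_i,\mu_j]$). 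This establishes that $Q$ is time-dependent additively generated by $G$.

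For the sign assertions: if $G$ is nonincreasing in $t$ then $-\int_0^t\partial_t G(s,\mu(s))ds \geq 0$; if $G(t,\cdot)$ is concave on $\Delta^{d-1}_+$ then its Hessian is negative semidefinite along the simplex, so $-\frac12\sum_{i,j}\partial_{ij}G\,d[\mu_i,\mu_j] \geq 0$ since $(d[\mu_i,\mu_j])$ is a matrix-valued nonnegative measure — here one should note the quadratic covariation matrix is positive semidefinite and that concavity of the extension can be taken compatibly, or argue via the increments $d\mu$ lying in the tangent space $\{\sum_i x_i = 0\}$. Thus $\Gamma$ is nondecreasing, hence nonnegative. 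Finally, if additionally $G(0,\mu(0)) \leq 1$ and $G(t,\cdot) \geq 0$, then $V^Q(t) = 1 - G(0,\mu(0)) + G(t,\mu(t)) + \Gamma(t) \geq 0$ termwise; and nonnegativity of each $Q_i$ follows from \eqref{eqn:additive_holdings} once one shows $\partial_i G(t,\mu(t)) + V^Q(t) - \nabla G(t,\mu(t))^\top\mu(t) \geq 0$. I expect this last point to be the main obstacle: it is the standard ``support function'' argument — concavity of $G(t,\cdot)$ gives $G(t,\mathbf e_i) \leq G(t,\mu) + \nabla G(t,\mu)^\top(\mathbf e_i - \mu) = G(t,\mu) + \partial_i G(t,\mu) - \nabla G(t,\mu)^\top\mu$ (suitably interpreted for the extension on the simplex, using directional derivatives within $\Delta^{d-1}_+$), so the bracket is at least $G(t,\mathbf e_i) - G(t,\mu)$; combined with $G(t,\mathbf e_i) \geq 0$ this is not immediately nonnegative, so one instead uses $V^Q(t) - \nabla G(t,\mu(t))^\top\mu(t) + \partial_i G \geq V^Q(t) - G(t,\mu(t)) + G(t,\mathbf e_i) \geq (1 - G(0,\mu(0))) + \Gamma(t) + G(t,\mathbf e_i) \geq 0$, invoking $G(0,\mu(0)) \leq 1$, $\Gamma \geq 0$ and $G(t,\cdot)\geq 0$. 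I would double-check the concavity inequality on the manifold $\Delta^{d-1}_+$ carefully, since that is where the extension-to-an-open-set hypothesis in the definition of $C^{1,2}$ is actually used.
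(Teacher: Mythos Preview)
Your proposal is correct and follows essentially the same approach as the paper's proof: substitute \eqref{eqn:additive_holdings} into the frictionless relative wealth equation, kill the scalar part via $\sum_i d\mu_i=0$, apply It\^o's formula to $G(t,\mu(t))$, and read off $\Gamma$. For the nonnegativity of $Q_i$ the paper bounds $V^Q(t)\geq G(t,\mu(t))$ first and then cites \cite[Proposition~4.5]{karatzas2017trading} for $\partial_iG+G-\nabla G^\top\mu\geq 0$, whereas you supply that support-function/concavity argument explicitly; this is the same idea, and your caution about evaluating $G(t,\mathbf e_i)$ on the boundary of the simplex is well placed but easily handled by taking limits along the segment from $\mu$ toward $\mathbf e_i$ using nonnegativity and concavity.
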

\begin{proof}
    The functional generation claim follows directly from \eqref{eqn:frictionless_relative_wealth} and It\^o's formula. If $t\mapsto G(t,\mu)$ is nonincreasing and $\mu \mapsto G(t,\mu)$ is concave then both integral terms in the representation for $\Gamma$ are nonnegative and hence $\Gamma$ is nonnegative. 

    Next we suppose that $G(0,\mu(0)) \leq 1$ and $G(t,\cdot)$ is nonnegative for every $t \geq 0$.
    From the master formula \eqref{eqn:master_formula} and nonnegativity of both $\Gamma$ and $G$ we have
    \[V^Q(t) = 1 + G(t,\mu(t)) - G(0,\mu(0))  + \Gamma(t) \geq G(t,\mu(t))  \geq 0, \]
        establishing the nonnegativity claim for $V^Q$. To obtain the nonnegativity of $Q_i$, we recall \eqref{eqn:additive_holdings}, which yields
        \begin{align}
            Q_i(t) & = \frac{wN_i}{\overline \bfp(0)}\big(\partial_i G(t,\mu(t)) + V^Q(t) - \nabla G(t,\mu(t))^\top \mu(t)\big) \nonumber \\
            & \geq  \frac{wN_i}{\overline \bfp(0)}\big(\partial_i G(t,\mu(t)) +  G(t,\mu(t))  - \nabla G(t,\mu(t))^\top \mu(t)\big), \label{eqn:Qi_bound}
        \end{align}
        where we again used the nonnegativity of $\Gamma$ and that $G(0,\mu(0)) \leq 1$. The nonnegativity of $G(t,\cdot)$ and concavity of $G(t,\cdot)$ guarantee that the right-hand side of \eqref{eqn:Qi_bound} is nonnegative; see the proof of \cite[Proposition~4.5]{karatzas2017trading}, specifically equation~(7.7) therein. This establishes the nonnegativity of $Q_i$ and completes the proof.
\end{proof}

As a consequence of the master formula \eqref{eqn:master_formula} we have the explicit expression 
\begin{equation} \label{eqn:frictionless_additive_holdings}
    Q_i(t)  = \frac{wN_i}{\overline \bfp (0)} \big(\partial_{i}G(t,\mu(t)) + 1 + G(t,\mu(t)) - G(0,\mu(0)) - \nabla G(t,\mu(t))^\top \mu(t) + \Gamma(t)\big) 
\end{equation}
for the holdings process, where $\Gamma$ is given by \eqref{eqn:Gamma_frict}. From here we see that the constant function $G(t,\mu) = 0$ generates the market portfolio. Additionally, the portfolios additively generated by the (time-independent) quadratic function $G(t,\mu)  = 1 - \frac{1}{2}\sum_{i=1}^d \mu_i^2$ and entropy function $G(t,\mu) = -\sum_{i=1}^d \mu_i \log \mu_i$ were studied in \cite{karatzas2017trading,fernholz2018volatility}. In the context of multiplicative functional generation (see Section~\ref{sec:multiplicative} below) the diversity-$p$ function, $G(t,\mu) = (\sum_{i=1}^d \mu_i^p)^{1/p}$ for parameter $p > 0$ and geometric weighted mean function $G(t,\mu) = \prod_{i=1}^d \mu_i^{p_i}$ for constants $p_i >0$ with $p_1 + \dots + p_d =1$ have also been widely studied (see e.g., \cite{fernholz2002stochastic}) and these also serve as natural candidates for additive generation. Indeed, although the holdings they generate differ substantially between the multiplicative and additive cases, they do share similarities regarding the generating function's appearance in the respective master formula decompositions for additive and multiplicative generation. 

\begin{remark} \label{rem:rank-based}
    In the frictionless setting, an important class of generating functions consists of \emph{rank-based} functions, which are of the form $G(t,\mu) = F(t,\mu_{()})$ for a twice continuously differentiable function $F$ and where $\mu_{()}$ is the vector of decreasing order statistics of $\mu$. Master formulas for rank-based generating functions have been developed for both multiplicative and additive generation in the frictionless case (see \cite{fernholz2002stochastic,karatzas2017trading}), but the holdings process $Q$ induced by such a generating function is not a semimartingale. As such, in the sequel where we return to the price impact setting, our results will not be able to accommodate rank-based generating functions because the price impact setting of this paper requires semimartingale holding strategies $Q$. Indeed, the quadratic variation of $Q$ appears in the wealth equation \eqref{eqn:wealth}. Possible approaches to incorporate rank-based generating functions are to either (i) extend the derived wealth equation \eqref{eqn:wealth_smooth} to nonsemimartingales $Q$ and attempt to reconstruct, by approximation with smooth generating functions, a master formula \eqref{eqn:master_formula} for an appropriate process $\Gamma$ or (ii) directly impose a price impact model that is rank-based, as such a model would only require that the specified holdings in each ranked asset be a semimartingale, rather than in each individual asset as in the present setting.  Developing such extensions is an interesting open problem which we leave for future research.
\end{remark}

\subsection{Interpreting additive generation} \label{sec:interpreting_additive}
In the previous subsections we introduced additive generation and the master formula decomposition, which will play a crucial role in the study of relative arbitrage in Section~\ref{sec:relarb}. However, as elegant as the master formula \eqref{eqn:master_formula} is, it does not directly provide a financial interpretation for the strategy's behaviour. In this section we develop a financial interpretation for additively generated portfolios by showing that they arise as optimizers to certain exponential utility maximization problems. Consequently, they share the well-studied characteristics of exponential utility maximizing portfolios. Moreover, although not without its drawbacks, exponential utility has the CARA property which allows for financial tractability in the presence of price impact. Indeed, the trading volume for CARA strategies does not scale proportionally with the investor's wealth. This is the reason exponential utility, or mean-variance preferences, are often used in the literature on portfolio selection with frictions (see e.g., \cite{garleanu2016,Caye2020,schied2010optimal}). The fact that additively generated portfolios share this property plays a crucial role in the development of the master formula with price impact in the subsequent sections. 

To illustrate the connection to exponential utility most clearly, we work in the frictionless setting and assume that the market weights satisfy an autonomous SDE,
\[d\mu(t) = b(t,\mu(t))dt + \sigma(t,\mu(t))dW(t),\] 
where $b$ and $\sigma$ are sufficiently regular functions taking values in $\R^d$ and $\R^{d \times d}$ respectively, and $W$ is a standard Brownian motion. In this context we consider the value function
\[u(t,\mu,v) = \sup_Q\E_{t,\mu,v}\big[-\exp(-\gamma V^Q(T))\big],\] where the supremum is taken over all admissible predictable self-financing trading strategies $Q$, $\gamma >0$ is the investor's risk-aversion parameter, $\E_{t,\mu,v}[\cdot]$ denotes expectation when the processes $\mu(t)$ and $V^Q(t)$ are initiated at time $t < T$ with initial values $\mu$ and $v$ respectively. The corresponding Hamilton--Jacobi--Bellman (HJB) equation is
\begin{align*} 
0 & = \partial_t u  + \sum_{i=1}^d b_i \partial_{\mu_i} u + \sum_{i,j=1}^d\frac{1}{2}a_{ij}\partial_{\mu_i\mu_j}u  \\
& + \sup_{\substack{Q \in \R^d \\ \sum_{i=1}^d Q_i\frac{\overline \bfp(0)\mu_i}{wN_i}  = v}} \Big\{\sum_{i=1}^d b_i \frac{Q_i\overline \bfp(0)}{wN_i}\partial_v u + \sum_{i,j=1}^d \frac{Q_i\overline \bfp(0)}{wN_i}a_{ij}\partial_{v\mu_j}u + \frac{1}{2}\sum_{i,j=1}^d \frac{Q_i\overline \bfp(0)}{wN_i}a_{ij}\frac{Q_j\overline \bfp(0)}{wN_j}\partial_{vv}u\Big\} , 
\end{align*}
where $a = \sigma \sigma^\top$ and we omit function evaluations for brevity.
The translation invariance property of exponential utility allows us to write $u(t,\mu,v) = -\exp(-\gamma v +f(t,\mu))$ in terms of a reduced value function $f$. Substituting this form and solving for the optimal $Q^*$ in feedback form yields 
\begin{equation} \label{eqn:Q*}
    Q^*_i(t,\mu) = \frac{wN_i}{\overline \bfp(0)}\frac{1}{\gamma}\big(\partial_i f(t,\mu) + (a^{-1}b)_i(t,\mu) - C(t,\mu)\big), \qquad i=1,\dots,d 
\end{equation}
for an explicit scalar term $C(t,\mu)$ that enforces the self-financing condition. In the case that $a^{-1}b(t,\cdot)$ is the gradient of some function $g(t,\cdot)$, we see that $Q^*$ is additively functionally generated by $G(t,\mu) = \frac{1}{\gamma}(f(t,\mu)+g(t,\mu))$.\footnote{Models with the special structure that $a^{-1}b(t,\cdot)$ is a gradient appear frequently and have previously been studied in the literature. For instance, it is known that this is the structural form for worst-case models in robust growth-maximization problems \cite{kardaras2012robust,kardaras2021ergodic}.} Even beyond this special case, due to the appearance of the gradient of the reduced value function $f$, the strategy $Q^*$ shares similarities to additively generated portfolios. In particular, as the investor's wealth increases, the profits are reinvested in the market portfolio, leaving the magnitude of the residual trading activity unchanged --- a key property of CARA utility optimal portfolios. 

In the context of price impact, which we consider in the next section, this is a desirable property as the impact an investor generates scales with the trading size and intensity. If the holdings do not scale with wealth, then neither does the impact, which makes the analysis tractable. However, the holdings $Q^*$ do depend on the risk-aversion parameter $\gamma$, and in the context of utility maximization, the investor's choice of $\gamma$ regulates the trading intensity through \eqref{eqn:Q*}. In our context of additive functional generation and price impact, one can instead view scaling a generating function $G$ as tuning the typical trade volume so that the strategy operates within its capacity constraints. Indeed, as we will see in Section~\ref{sec:relarb} and, particularly, in Theorem~\ref{thm:relarb_explicit}, the trading speed constant $\nu$ introduced there plays an analogous role to the inverse risk-aversion parameter $1/\gamma$ here.

\subsection{Heuristic approach with price impact} \label{sec:heuristic}
We now return to the setting with price impact. The discussion in this section is heuristic. Letting $Q$ be given by \eqref{eqn:additive_holdings} leads from \eqref{eqn:relative_wealth} to the relative wealth dynamics
\begin{equation} \label{eqn:relative_wealth_gradient} 
dV^Q(t) = \sum_{i=1}^d \partial_i  G(t,\mu(t))d\mu_i(t) + \frac{1}{2}\sum_{i=1}^d \frac{\overline \bfp (0)}{w\overline \bfp(t)}d[I_i,Q_i](t), 
\end{equation}
where, as before, we used the fact that $\sum_{i=1}^d d\mu_i(t) = 0$ to eliminate the scalar term $V^Q(t)-\nabla G(t,\mu(t))^\top \mu(t)$. Using It\^o's formula for $G(t,\mu(t))$ we obtain 
    \begin{equation} \label{eqn:master_addtive_partial}
    dV^Q(t) = dG(t,\mu(t))- \partial_t G(t,\mu(t))dt - \frac12\sum_{i,j=1}^d \partial_{ij}G(t,\mu(t))d[\mu_i,\mu_j](t) + \frac{1}{2}\sum_{i=1}^d\frac{\overline \bfp (0)}{w\overline \bfp(t)}d[I_i,Q_i](t).
    \end{equation}
Using \eqref{eqn:impact} and \eqref{eqn:dJ}, we have that 
\begin{equation} \label{eqn:dIQ}
    d[I_i,Q_i](t) = \partial_xh_i(t,J_i(t))d[J_i,Q_i](t) = \partial_x h_i(t,J_i(t))K_i(t,t)d[Q_i](t).
\end{equation} From \eqref{eqn:additive_holdings} we see that
\begin{align*} 
    d[Q_i] & =\frac{w^2N_i^2}{\overline \bfp ^2(0)}d\big[\partial_iG(\cdot,\mu) + V^Q-\nabla G(\cdot,\mu)^\top \mu\big] = \frac{w^2N_i^2}{\overline \bfp^2 (0)}d\big[\partial_iG(\cdot,\mu) + G(\cdot,\mu)-\nabla G(\cdot,\mu)^\top \mu\big],
\end{align*}
where in the final equation we used that $V^Q - G(\cdot,\mu)$ is of finite variation by \eqref{eqn:master_addtive_partial}.
Hence, plugging into \eqref{eqn:master_addtive_partial} we obtain
\begin{equation} \label{eqn:additive_master_impact}
\begin{aligned}
    dV^Q(t) & = dG(t,\mu(t))- \partial_t G(t,\mu(t))dt - \frac{1}{2}\sum_{i,j=1}^d \partial_{ij}G(t,\mu(t))d[\mu_i,\mu_j](t) \\ 
     & \qquad + \frac{1}{2}\sum_{i=1}^d \frac{wN_i^2}{\overline \bfp(0)\overline \bfp(t)}\partial_xh_i(t,J_i(t))K_i(t,t)d\big[\partial_iG(\cdot,\mu) + G(\cdot,\mu)-\nabla G(\cdot,\mu)^\top \mu\big](t).
\end{aligned}
\end{equation}
The upshot is that $Q$ is also additively functionally generated in the general price impact setting. Indeed, we see that $V^Q$ satisfies \eqref{eqn:master_formula} with
\begin{equation} \label{eqn:Gamma}
\begin{aligned}
    \Gamma(t) & = -\int_0^t \partial_t G(s,\mu(s))ds - \frac{1}{2}\int_0^t\sum_{i,j=1}^d \partial_{ij}G(s,\mu(s))d[\mu_i,\mu_j](s) \\
& \qquad + \frac{1}{2}\int_0^t \sum_{i=1}^d \frac{wN_i^2}{\overline \bfp(0) \overline \bfp(s)}\partial_xh_i(s,J_i(s))K_i(s,s)d\big[\partial_iG(\cdot,\mu) + G(\cdot,\mu)-\nabla G(\cdot,\mu)^\top \mu\big](s).
\end{aligned} 
\end{equation}
Substituting the derived wealth \eqref{eqn:additive_master_impact} into the holdings equation \eqref{eqn:additive_holdings} establishes that \eqref{eqn:frictionless_additive_holdings} continues to hold, with $\Gamma$ now given by \eqref{eqn:Gamma} rather than by \eqref{eqn:Gamma_frict}.

 In comparison to the frictionless setting, the holdings process has additional terms that arise owing to the presence of price impact. Notably, these terms depend only on the historical market weight trajectory $\mu$ and the historical impact state trajectory $J$. In the case of linear price impact $h_j(t,x) = \lambda_j(t)x$, the dependence on $J$ vanishes. It is also worth emphasizing that although the first two terms in \eqref{eqn:Gamma} have the same functional form as those in \eqref{eqn:Gamma_frict} for the frictionless case, these will not have the same trajectories. This is because the market weight process $\mu$ differs from the frictionless case, as the investor's trades affect prices through the impact model. In Section~\ref{sec:numerics} we will see that, empirically, in the frictionless case the $\Gamma$ process tends to be larger than in the case with frictions despite the additional positive term appearing in the second line of \eqref{eqn:Gamma}.   

 Next we make two remarks concerning the importance of using the observed prices $P$ in the derivation of the master formula \eqref{eqn:master_formula}.

\begin{remark} \label{rem:master_fundamental}
    We highlight that to obtain the equation \eqref{eqn:master_addtive_partial} for the relative wealth, which is devoid of stochastic integration, from \eqref{eqn:relative_wealth_gradient} it was crucial to define $V^Q$ using the accounting wealth $W^Q$ and the numeraire $W^\mathcal{M}$.
    Indeed, if one instead uses the fundamental price $S$ to mark-to-market the investor's wealth, $\widetilde W^Q(t) = \sum_{i=1}^d Q_i(t)S_i(t)$, it is possible to show  (see for example \cite[Equation~(2.5)]{webster.23}) that the self-financing dynamics of $\widetilde W^Q$ are
    \[d\widetilde W^Q(t) = Q(t)^\top dS(t) - I(t)^\top dQ(t) - \frac{1}{2}d[I,Q](t).\]
    The additional stochastic integral $I(t)^\top dQ(t)$, absent from the accounting wealth dynamics \eqref{eqn:wealth}, is the key difficulty in directly establishing a master formula for the fundamental relative wealth process $\widetilde V^Q(t) = \widetilde W^Q/\widetilde W^\mathcal{M}$, where $\widetilde W^{\mathcal{M}} = \sum_{i=1}^d Q^\mathcal{M}_i(t)S_i(t)$ is the fundamental market wealth. An analogous calculation to the proof of Theorem~\ref{thm:relative_wealth} leads to the dynamics
    \begin{equation} \label{eqn:tildeVQ} 
    d\widetilde V^Q(t) = \sum_{i=1}^d \frac{Q_i(t)\overline \bfp(0)}{wN_i}d\mu^S_i(t) - \frac{1}{\widetilde W^\mathcal{M}(t)}I(t)^\top dQ(t) - \sum_{i=1}^d\frac{1}{2\widetilde W^\mathcal{M}(t)}d[I_i,Q_i](t),  
    \end{equation}
    where 
    \begin{equation} \label{eqn:boldS}
    \mu^S_i(t) = \frac{\bfs_i(t)}{\overline \bfs(t)}, \qquad \text{with } \qquad \bfs_i(t) = N_iS_i(t) \qquad \text {and} \qquad \overline \bfs(t) = \bfs_1(t) + \dots +\bfs_d(t) 
    \end{equation}
    for $i=1,\dots,d$ and  $t \geq 0$.
     Hence, even if one chooses $Q$ to be the gradient of a function of $\mu^S$ and applies It\^o's formula, akin to how we obtained \eqref{eqn:master_addtive_partial} from \eqref{eqn:relative_wealth_gradient}, the stochastic integral $I(t)^\top dQ(t)$ would remain, hindering the development of a master formula. This highlights the importance of using market weights derived from the observed prices $P$. It is for this reason that we work with the accounting wealth throughout this paper.
     \end{remark} 
     \begin{remark} \label{rem:target_holdings}
         Using the observed price $P$, rather than the fundamental price $S$, to specify the holdings $Q$ (through the observed market weights $\mu$) admits a clear financial interpretation. Since functionally generated portfolios systematically prescribe holdings that change with prices, the investor should take their own impact into account when making their trading decisions. This ensures that the target holdings are genuinely achieved and is analogous in spirit to the handling of proportional transaction costs in \cite[Section~2.1]{ruf2020impact} for their empirical experiments. Moreover, this point of view provides a financial explanation for why the master formula \eqref{eqn:additive_master_impact} can be derived, while using the fundamental price as in Remark~\ref{rem:master_fundamental} leads to the additional term $-\frac{1}{2\widetilde W^\mathcal{M}(t)}I(t)^\top dQ(t)$ in the relative wealth equation \eqref{eqn:tildeVQ}. Indeed, this extra term tracks the change in fundamental relative wealth arising from the systematic error of not reaching the target holdings when the investor ignores their own price impact.   
     \end{remark}
     
      Finally, we conclude this section by noting, for future reference, that the initial holdings can easily be deduced from \eqref{eqn:additive_holdings} and are given by
\begin{equation} \label{eqn:q0}
    q_i^0 := Q_i(0) = \frac{wN_i}{\overline \bfp (0)} \big(\partial_i G(0,\mu(0)) + 1 - \nabla G(0,\mu(0))^\top \mu(0)\big).
\end{equation}

\subsection{Dynamics of the price, holdings and impact state processes} \label{sec:rigorous}
The discussion in Section~\ref{sec:heuristic} was heuristic. Indeed, the existence of the price process $P$ and the semimartingality of the holdings process $Q$ were taken for granted. Unlike the frictionless case, the existence and semimartingality properties are not a priori guaranteed. Indeed, functionally generated strategies lead to holdings processes $Q$ that depend on the price process $P$ (through $\mu$). However, $Q$ affects the impact state $J$, which, in turn, affects $P$ through the price impact term \eqref{eqn:impact}, so that the three processes $(P,Q,J)$ are coupled. Additionally,  
the computations deriving the wealth equation \eqref{eqn:wealth} and the master formula \eqref{eqn:additive_master_impact} required $Q$ and $\mu$ to both be semimartingales. 

In this section, we state precise conditions on the generating function $G$ under which the computations in Section~\ref{sec:heuristic} are rigorous. We also provide the explicit coupled SDE \eqref{eqn:SDE} that the price process $P$, impact process $J$ and holdings process $Q$ satisfy. A formidable amount of notation needs to be introduced to compactly present the SDE. As such, to improve readability the lengthy derivations and proofs are contained in Appendix~\ref{app:SDE_derivation}. 

In the sequel we will make use of the following notation. Where not stated explicitly, all indices $i,j,k,\ell,m$ range from $1$ to $d$. For any vector $\varphi \in \R^d$, matrix $M\in \R^{d\times d}$, tensor $T \in \R^{d \times d \times d}$ and vector $x \in \R^d$, we use the notation $\overrightarrow{\varphi}^x$, $\overleftrightarrow M^x$ and $\overset{\Longleftrightarrow}{T^x}$ for the vector, matrix and tensor, respectively, with entries
\begin{align}
\overrightarrow \varphi_i^x & = \varphi_i - \varphi^\top x, \nonumber \\
 \overleftrightarrow {M}^{x}_{ij} & = M_{ij} - (Mx)_i - (Mx)_j + x^\top M x. \label{eqn:arrowM} \\ 
  \overset{\Longleftrightarrow}{T^x}_{ijk} & = T_{ijk} - \sum_{\ell=1}^d (T_{\ell j k} + T_{i\ell k} + T_{ij\ell})x_\ell + \!\sum_{\ell,m=1}^d (T_{i\ell m} + T_{\ell j m} + T_{\ell m k})x_\ell x_m - \!\sum_{\ell,m,n=1}^d T_{\ell m n}x_\ell x_mx_n. \nonumber
\end{align} 
For $p \in (0,\infty)^d$ we define the functions 
\[\overline \bfp (p) = N_1p_1+\dots+N_dp_d, \qquad \mu_i(p) = \frac{N_ip_i}{\overline \bfp(p)}.\]
Next, given initial wealth $w > 0$, an initial price $P(0) = p^0 \in (0,\infty)^d$ and  a tuple $(t,p,J) \in [0,\infty) \times (0,\infty)^d \times \R^d$, we set
\begin{align}
A^P_{ij}(t,p,J)  & = \delta_{ij} - \frac{wN_iN_j\partial_x h_i(t,J_i)K_i(t,t)}{\overline \bfp(p^0)\overline \bfp(p)}\overleftrightarrow{\nabla^2 G}^{\mu(p)}_{ij}(t,\mu(p)), \label{eqn:AP} \\
A^Q_{ij}(t,p) & = \frac{wN_iN_j}{\overline \bfp(p^0)\overline \bfp(p)}\overleftrightarrow{\nabla^2 G}^{\mu(p)}_{ij}(t,\mu(p)), \label{eqn:AQ}\\
& \hspace{-2.05cm}  B_{jk}(t,p,J) =  - 2\partial_{jk}G(t,\mu(p)) + \sum_{\ell=1}^d \frac{wN_\ell^2}{\overline \bfp(p^0)\overline \bfp(p)}\partial_x h_\ell(t,J_\ell) K_\ell(t,t)\big((\overrightarrow{\nabla \partial_jG})^{\mu(p)}_\ell(\overrightarrow{\nabla \partial_kG})^{\mu(p)}_\ell\big)(t,\mu(p)) , \nonumber \\
\Upsilon_{i\ell m}(t,p,J) & = \frac{wN_iN_\ell N_m}{\overline \bfp(p^0)\overline \bfp^2(p)}\Big(-\overleftrightarrow{\nabla^2 G}^{\mu(p)}_{im}(t,\mu(p)) + \frac{1}{2} \oset{\xLeftrightarrow{\qquad}}{\nabla^3 G^{\mu(p)}_{i\ell m}}(t,\mu(p)) + \frac{1}{2}\overleftrightarrow{B}^{\mu(p)}_{\ell m}(t,p,J)\Big). \label{eqn:upsilon}
\end{align}
We also recall the quantity $b_i^J(t,\omega)$ defined by \eqref{eqn:bJ}.
To formulate the coefficients that appear in the dynamics of $P$, we will need invertibility of the matrix $A^P$ given by \eqref{eqn:AP}. In the setting with (nontrivial) price impact, this is equivalent to a condition on $G$ that is akin to directional concavity. This is the content of the following proposition, whose proof is deferred to  Appendix~\ref{sec:proof_AP}.
\begin{prop}[Invertibility of $A^P$]
\label{prop:AP}Suppose $G(t,\cdot) \in C^2(\Delta^{d-1}_+)$ for every $t \geq 0$ and satisfies
\begin{equation}
\label{eqn:quadratic_form}
    x^\top \nabla^2 G(t,\mu) x \leq 0, \qquad \text{whenever} \qquad x^\top {\bf 1}_d = 0, \qquad \text{ for all } \quad t \geq 0, \quad \mu \in \Delta^{d-1}_+.
\end{equation} Then for any $(t,p,J) \in [0,\infty) \times (0,\infty)^d \times \R^d$, all of the eigenvalues of $A^P(t,p,J)$ are real and greater than or equal to one. In particular, $A^P(t,p,J)$ is invertible.

Conversely, suppose $G(t^*,\cdot)$ does not satisfy \eqref{eqn:quadratic_form} for some $t^* \geq 0$. If there exists a $J^* \in \R^d$ such that $\partial_x h_i(t^*,J^*_i) > 0$ for every $i=1,\dots,d$ then there exists a $p^* \in (0,\infty)^d$ such that $A^P(t^*,p^*,J^*)$ is not invertible.
\end{prop}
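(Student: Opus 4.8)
The plan is to recognize the bilinear ``arrow'' operation in \eqref{eqn:arrowM} as a congruence transformation. Writing $\mu = \mu(p)$, $H = \nabla^2 G(t,\mu)$ and introducing the matrix $\Pi := I_d - \mu {\bf 1}_d^\top$, one checks directly that $\Pi$ is idempotent with range $\{x \in \R^d : {\bf 1}_d^\top x = 0\}$ and kernel $\operatorname{span}(\mu)$ (using ${\bf 1}_d^\top \mu = 1$), and that expanding \eqref{eqn:arrowM} gives the identity $\overleftrightarrow{H}^{\mu} = \Pi^\top H \Pi$. In particular $\widetilde H := \overleftrightarrow{\nabla^2 G}^{\mu(p)}(t,\mu(p))$ is symmetric and $y^\top \widetilde H y = (\Pi y)^\top H (\Pi y)$ for every $y \in \R^d$, with $\Pi y$ always lying in $\{x : {\bf 1}_d^\top x = 0\}$. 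The first step is thus to record the equivalence: \eqref{eqn:quadratic_form} holds at $(t,\mu)$ if and only if $\widetilde H$ is negative semidefinite.

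Given this reformulation, \eqref{eqn:AP} reads $A^P(t,p,J) = I_d - D_c \widetilde H D_N$, where $D_N = \operatorname{diag}(N_1,\dots,N_d) \succ 0$ and $D_c = \operatorname{diag}(c_1,\dots,c_d)$ with $c_i = wN_i \partial_x h_i(t,J_i) K_i(t,t) / (\overline\bfp(p^0)\overline\bfp(p)) \ge 0$, the nonnegativity coming from $\partial_x h_i \ge 0$ and $K_i > 0$ in Assumption~\ref{ass:impact_inputs} together with $w > 0$ and positivity of prices. To obtain the eigenvalue bound it suffices to show that the eigenvalues of $M := D_c \widetilde H D_N$ are real and nonpositive. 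For this I would use twice the fact that $XY$ and $YX$ have the same characteristic polynomial: first to pass from $M$ to $\widetilde H E$ with $E := D_N D_c \succeq 0$, and then from $\widetilde H E$ to the symmetric matrix $E^{1/2} \widetilde H E^{1/2}$, which is negative semidefinite because $\widetilde H$ is. Hence $M$ has real, nonpositive eigenvalues, so every eigenvalue of $A^P = I_d - M$ is real and at least $1$; in particular $0$ is not an eigenvalue and $A^P$ is invertible.

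For the converse I would run a scaling argument. Suppose \eqref{eqn:quadratic_form} fails at $t^*$, so there are $\mu^* \in \Delta^{d-1}_+$ and $x^* \ne 0$ with ${\bf 1}_d^\top x^* = 0$ and $(x^*)^\top \nabla^2 G(t^*,\mu^*) x^* > 0$; since then $\Pi x^* = x^*$, the identity from the first step gives $(x^*)^\top \widetilde H^* x^* > 0$ where $\widetilde H^* := \overleftrightarrow{\nabla^2 G(t^*,\mu^*)}^{\mu^*}$, i.e.\ $\widetilde H^*$ is \emph{not} negative semidefinite and so has a strictly positive eigenvalue. I would realize $\mu^*$ as $\mu(p)$ by taking $p_i := \mu^*_i/N_i$ (so that $\overline\bfp(p) = 1$), and then exploit the homogeneity $\mu(\lambda p) = \mu(p)$ and $\overline\bfp(\lambda p) = \lambda \overline\bfp(p)$ for $\lambda > 0$: along the ray $\{\lambda p : \lambda > 0\}$ the matrix $\widetilde H^*$ is frozen while $D_c$ scales like $\lambda^{-1}$, so $M(\lambda) := D_c(\lambda p) \widetilde H^* D_N = \lambda^{-1} M(p)$. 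The hypothesis $\partial_x h_i(t^*,J^*_i) > 0$ for every $i$ makes $D_c(p) \succ 0$, hence $E := D_N D_c(p) \succ 0$, and by Sylvester's law of inertia $E^{1/2} \widetilde H^* E^{1/2}$ has the same inertia as $\widetilde H^*$ and therefore a strictly positive eigenvalue $\rho_{\max} > 0$, which is then also an eigenvalue of $M(p)$. Choosing $\lambda = \rho_{\max}$ turns this into the eigenvalue $1$ of $M(\lambda)$, so $0$ is an eigenvalue of $A^P(t^*, \rho_{\max} p, J^*)$, and $p^* := \rho_{\max} p \in (0,\infty)^d$ is the required witness.

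I expect the only delicate points to be the bookkeeping identity $\overleftrightarrow{H}^{\mu} = \Pi^\top H \Pi$ (a direct but slightly tedious check against \eqref{eqn:arrowM}) and, in the converse, the need to keep the \emph{scale} of $p$ — which governs the prefactor $1/\overline\bfp(p)$ and hence the overall magnitude of $D_c$ — separate from the \emph{direction} of $p$, which governs $\mu(p)$ and hence $\widetilde H^*$. Once these are in place, the eigenvalue manipulations are entirely standard finite-dimensional linear algebra.
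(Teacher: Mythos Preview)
Your proposal is correct and follows essentially the same route as the paper: the identity $\overleftrightarrow{H}^{\mu}=\Pi^\top H\Pi$ is exactly the matrix form of the paper's quadratic-form identity \eqref{eqn:arrowM_PSD}, your $XY/YX$ spectrum trick is the content behind the paper's citation of \cite[Corollary~7.6.2]{horn2013matrix}, and your scaling along the ray $\{\lambda p\}$ is the same construction the paper carries out by setting $p^*_i=-\lambda^*\mu^*_i/N_i$ so that $\overline\bfp(p^*)=-\lambda^*$. The only cosmetic difference is that the paper packages the diagonal factors as $\Xi=-\mathrm{diag}(N)\,\overleftrightarrow{\nabla^2G}^{\mu}\,\mathrm{diag}(N)$ and $C=\tfrac{w}{\overline\bfp(p^0)}\mathrm{diag}(\partial_xh\circ K)\,\Xi$, whereas you split them as $D_c$ and $D_N$; the linear algebra is the same.
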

Clearly a sufficient condition for \eqref{eqn:quadratic_form} to hold is that $G(t,\cdot)$ be concave for every $t \geq 0$.
With the previous result in hand, we can define the SDE coefficients under the assumption that $G$ satisfies \eqref{eqn:quadratic_form}.
 Below we use the notation $u \circ v$ for the componentwise product of vectors, $(u\circ v)_i = u_iv_i$, and write $\mathrm{diag}(u)$ for the $d \times d$ matrix with the components of $u$ on the diagonal and zeros elsewhere. We also use the notation $u_{\cdot jk}$ for a $d$-dimensional vector with components $(u_{ijk};i=1,\dots,d)$. For $(t,p,J,\omega) \in [0,\infty) \times (0,\infty)^d \times \R^d \times (C(0,\infty);\R^d)$ we define
\begin{subequations} \label{eqn:SDE_coefficients}
\begin{align}
    &\beta^P(t,p,J)  && = (A^P)^{-1}(t,p,J), \\
    &\alpha^P(t,p,J,\omega)  &&= \beta^P(t,p,J)\Big(\partial_t h(t,J) + \partial_x h(t,J) \circ b^J(t,\omega)  \\
    &&&\qquad \qquad \quad 
    + \frac{w}{\overline \bfp(p^0)}\big(\partial_x h(t,J) \circ K(t,t) \circ N \circ (\overrightarrow{\nabla \partial_t G})^{\mu(p)}(t,\mu(p))\big)\Big), \\
    &\gamma^P_{\cdot jk}(t,p,J)  &&= \beta^P(t,p,J)\widetilde\gamma^P_{\cdot jk}(t,p,J), \\
    &\widetilde \gamma^P_{ijk}(t,p,J)   &&=  \partial_x h_i(t,J_i) K_i(t,t)\sum_{\ell,m=1}^d \Upsilon_{ilm}(t,p,J)\beta^P_{j\ell}(t,p,J)\beta^P_{km}(t,p,J) \\
     &&&\qquad + \frac{1}{2}\partial_{xx}h_i (t,J_i)K_i^2(t,t)(A^Q\beta^P)_{ij}(t,p,J)(A^Q\beta^P)_{i k}(t,p,J), \\
    &\beta^Q(t,p,J)  &&=  A^Q(t,p)\beta^P(t,p,J), \\
    &\alpha^Q(t,p,J,\omega)  &&= A^Q(t,p)\alpha^P(t,p,J,\omega) + \frac{w}{\overline \bfp(p^0)}N \circ (\overrightarrow{\nabla \partial_tG})^{\mu(p)}(t,\mu(p)), \\
    &\gamma^Q_{ijk}(t,p,J) && = (A^Q(t,p)\gamma^P_{\cdot jk}(t,p,J))_i  + \sum_{\ell,m=1}^d \Upsilon_{ilm}(t,p,J)\beta^P_{j\ell}(t,p,J)\beta^P_{km}(t,p,J),\\
   & \beta^J(t,p,J)  &&= \mathrm{diag}(K(t,t)) \beta^Q(t,p,J), \\
    &\alpha^J(t,p,J,\omega)  &&= \mathrm{diag}(K(t,t)) \alpha^Q(t,p,J,\omega) + b^J(t,\omega), \\
    &\gamma^J_{\cdot jk}(t,p,J)  &&= \mathrm{diag}(K(t,t)) \gamma_{\cdot jk}^Q(t,p,J).
\end{align}
\end{subequations}
The corresponding coupled SDE for $(P,Q,J)$ is then given by 
\begin{subequations} \label{eqn:SDE}
    \begin{align} 
 dP_i(t) &  = \alpha^P_i(t,P(t),J(t),Q^{[0,t]}_i)dt + \sum_{j=1}^d\beta^P_{ij}(t,P(t),J(t))dS_j(t) + \!\sum_{j,k=1}^d\!\gamma^P_{ijk}(t,P(t),J(t))d[S_j,S_k](t), \raisetag{0.3cm}\label{eqn:SDEP} \\
 dQ_i(t) &  = \alpha^Q_i(t,P(t),J(t),Q^{[0,t]}_i)dt + \sum_{j=1}^d\beta^Q_{ij}(t,P(t),J(t))dS_j(t) +\!\sum_{j,k=1}^d\!\gamma^Q_{ijk}(t,P(t),J(t))d[S_j,S_k](t),\raisetag{0.3cm}\label{eqn:SDEQ}  \\
 dJ_i(t) &  = \alpha^J_i(t,P(t),J(t),Q^{[0,t]}_i)dt + \sum_{j=1}^d\beta^J_{ij}(t,P(t),J(t))dS_j(t) + \!\sum_{j,k=1}^d \!\gamma^J_{ijk}(t,P(t),J(t))d[S_j,S_k](t). \raisetag{0.3cm}
\label{eqn:SDEJ}
\end{align}
\end{subequations}
We consider this SDE with initial condition $(P(0),Q(0),J(0)) = (p^0,q^0,0) \in (0,\infty)^d \times \R^d \times \R^d$, where $q^0_i$ is given by \eqref{eqn:q0},  consistent with the initial holdings derived in Section~\ref{sec:heuristic}.

We can now state our main theorem. The derivation of the SDE is contained in Appendix~\ref{app:SDE_derivation}, while the proof of this theorem is contained in Appendix~\ref{sec:main_proof}.
\begin{thm}[Additive portfolio generation with price impact] \label{thm:main} Let Assumption~\ref{ass:impact_inputs} hold and let $G \in C^{1,3}((0,\infty)\times \Delta^{d-1}_+)$ be given. Additionally assume that the third derivatives $\partial_{ijk}G$ are locally Lipschitz continuous and that the mixed derivatives $\partial_{it} G$ exist and are locally Lipschitz continuous. We further assume that $G(t,\cdot)$ satisfies \eqref{eqn:quadratic_form} for every $t \geq 0$. 
Then 
\begin{enumerate}
    \item \label{item:SDE} the SDE \eqref{eqn:SDE} has a pathwise unique strong solution on the stochastic time interval $[0,\xi)$, where  \begin{equation} \label{eqn:xi}
        \xi = \inf\{t \geq 0: \mu_i(t) = 0 \text{ for some } i \in \{1,\dots,d\} \text{ or } \overline \bfp(t) = 0\}
    \end{equation}
     is the explosion time,
    \item \label{item:right_process}the processes $P$, $J$ and $Q$ of part \ref{item:SDE} are the unique semimartingale processes satisfying $P(t) = S(t) + I(t)$ for $t \in [0,\xi)$, where $I$ is given by \eqref{eqn:impact} and $Q$ by \eqref{eqn:frictionless_additive_holdings} with $\Gamma$ given by \eqref{eqn:Gamma},
    \item \label{item:Q_additive} $Q$ is time-dependent additively functionally generated by $G$ on $[0,\xi)$ and its wealth process $V^Q$ satisfies \eqref{eqn:additive_master_impact},
    \item \label{item:long-only} if $G(\cdot, \mu)$ is nonincreasing for every $\mu \in \Delta^{d-1}_+$ and $G(t,\cdot)$ is concave for every $t \geq 0$, then the process $\Gamma$ of \eqref{eqn:Gamma} is nondecreasing. If, additionally, $G(0,\mu(p^0)) \leq 1$ and $G$ is nonnegative then on the lifetime $[0,\xi)$ of the solution, $V^Q$ is nonnegative and $Q_i$ is nonnegative for every $i=1,\dots,d$,
    \item \label{item:positive_price} 
    Suppose $G$ is time-homogeneous and concave. Set $F_i(\mu) = \partial_i G(\mu) + G(\mu) - \nabla G(\mu)^\top \mu$ and $\overline F_i = \sup_{\mu \in \Delta^{d-1}_+}F_i(\mu)$. If for every $i \in \{1,\dots,d\}$ we have
    \begin{equation} \label{eqn:positive_condition}
    \lim_{n \to \infty} \mu_i^n = 0 \implies 
         \lim_{n \to \infty} F_i(\mu^n) = \overline F_i
    \end{equation} for any sequence $(\mu^n)_{n \in \N} \subset \Delta^{d-1}_+$ then $\xi = \inf\{t \geq 0: \overline \bfp(t) = 0 \text{ or } \overline \bfp(t) = \infty\}$.
\end{enumerate}
\end{thm}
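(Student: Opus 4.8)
The plan is to establish the five parts in turn: (i)--(ii) are the analytic core, (iii)--(iv) follow by upgrading the formal computation of Section~\ref{sec:heuristic} to a rigorous one, and (v) is a separate localization argument built on Proposition~\ref{prop:positive}.

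\emph{Parts (i) and (ii).} For well-posedness I would check that the coefficients \eqref{eqn:SDE_coefficients} of \eqref{eqn:SDE}, viewed on the open set $\{p\in(0,\infty)^d\}\times\R^d\times C([0,\infty);\R^d)$, are locally Lipschitz in $(p,J)$ and affine --- hence, uniformly on compacts, locally Lipschitz --- in the path argument. This is inherited from their constituents: the entries of $(A^P)^{-1}$, which under \eqref{eqn:quadratic_form} exists everywhere and is smooth by Proposition~\ref{prop:AP}, with $\det A^P$ locally bounded away from $0$; the derivatives of $G$ through order three and the mixed derivatives $\partial_{it}G$, locally Lipschitz by hypothesis and evaluated at $\mu(p)$, which stays in a compact subset of $\Delta^{d-1}_+$ on compacts of the state space; the functions $h_i,\partial_t h_i,\partial_x h_i,\partial_{xx}h_i$, $K_i(t,t)$ and its diagonal derivatives, all smooth by Assumption~\ref{ass:impact_inputs}; and the path functional $b_i^J(t,\omega)$ of \eqref{eqn:bJ}, affine in $\omega$ with $t$-continuous coefficients. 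Writing $S=S(0)+M+B$ for its canonical decomposition and using $[S_j,S_k]=[M_j,M_k]$, \eqref{eqn:SDE} is then a standard path-dependent SDE driven by the continuous semimartingale, so a localization over an exhaustion of the state space together with the classical existence-and-uniqueness theorem yields a pathwise unique strong solution up to an explosion time; since the coefficients degenerate only as $\overline\bfp(p)\downarrow0$ or as some $p_i\downarrow0$, and since $Q,J$ stay bounded whenever $P$ remains in a compact subset of $(0,\infty)^d$, that explosion time is precisely $\xi$ of \eqref{eqn:xi}, giving (i). For (ii) I would invoke the derivation in Appendix~\ref{app:SDE_derivation}, which shows that continuous semimartingales $(P,Q,J)$ obey the structural relations $P=S+I$, $I=h(\cdot,J)$, $J=J^0+\int K\,dQ$ and the holdings formula \eqref{eqn:additive_Q_explicit} \emph{if and only if} $(P,Q,J)$ solves \eqref{eqn:SDE} with the prescribed initial data; uniqueness then follows from (i).

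\emph{Parts (iii) and (iv).} With (i)--(ii) available, $P,Q,J,\mu$ are genuine continuous semimartingales on $[0,\xi)$, so every step of Section~\ref{sec:heuristic} is legitimate: the wealth equation (Proposition~\ref{thm:wealth}), the relative wealth equation (Theorem~\ref{thm:relative_wealth}), It\^o's formula for $G(t,\mu(t))$, and the identities \eqref{eqn:dJ}, \eqref{eqn:dQ}, \eqref{eqn:dIQ} assemble into \eqref{eqn:additive_master_impact}, which is the master formula \eqref{eqn:master_formula} with $\Gamma$ as in \eqref{eqn:Gamma}, a finite-variation process because it is an integral against $dt$ and against quadratic-(co)variation measures. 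For (iv) I would read off the sign of each term of \eqref{eqn:Gamma}: $-\int_0^t\partial_t G\,ds$ is nondecreasing when $G(\cdot,\mu)$ is nonincreasing; $-\tfrac12\int_0^t\sum_{ij}\partial_{ij}G\,d[\mu_i,\mu_j]$ is nondecreasing when $G(t,\cdot)$ is concave, because $(d[\mu_i,\mu_j])_{ij}$ is a nonnegative-definite matrix measure whose kernel contains ${\bf 1}_d$ (as $\sum_j d[\mu_i,\mu_j]=0$), whence $\sum_{ij}\partial_{ij}G\,d[\mu_i,\mu_j]\le0$ by \eqref{eqn:quadratic_form}; and the last term is nondecreasing since $\partial_x h_i\ge0$, $K_i(s,s)>0$, $\overline\bfp(s)>0$ on $[0,\xi)$, and quadratic variation is nondecreasing. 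Hence $\Gamma(0)=0$ and $\Gamma$ is nondecreasing, so $\Gamma\ge0$; then $V^Q(t)=1+G(t,\mu(t))-G(0,\mu(p^0))+\Gamma(t)\ge G(t,\mu(t))\ge0$ under the extra hypotheses, and from \eqref{eqn:additive_holdings} together with $V^Q(t)\ge G(t,\mu(t))$ one gets $Q_i(t)\ge\tfrac{wN_i}{\overline\bfp(0)}(\partial_i G+G-\nabla G^\top\mu)(t,\mu(t))\ge0$ by the concavity-plus-nonnegativity estimate of \cite[Proposition~4.5]{karatzas2017trading}, exactly as in Proposition~\ref{prop:additive_frictionless}.

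\emph{Part (v) and the main obstacle.} With $G$ time homogeneous and concave, \eqref{eqn:additive_Q_explicit} collapses to $Q_i(t)=\tfrac{wN_i}{\overline\bfp(0)}\big(F_i(\mu(t))+1-G(\mu(p^0))+\Gamma(t)\big)$ with $\Gamma$ continuous and, by (iv), nondecreasing; hence $\sup_{s\le t}Q_i(s)\le\tfrac{wN_i}{\overline\bfp(0)}\big(\overline F_i+1-G(\mu(p^0))+\Gamma(t)\big)$, so $Q_i(t)-\sup_{s\le t}Q_i(s)\ge\tfrac{wN_i}{\overline\bfp(0)}\big(F_i(\mu(t))-\overline F_i\big)$. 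The plan is to run the argument of Proposition~\ref{prop:positive}, localized at $\xi$ in the manner of Remark~\ref{rem:positive}: on the event that the explosion is caused by some price reaching $0$ while $\overline\bfp$ stays bounded and bounded away from $0$ near $\xi$, pick the index $i$ whose price does so and set $\tau_n=\inf\{t:P_i(t)\le1/n\}$, a sequence of stopping times increasing to $\xi$ on that event with $\mu_i(\tau_n)\to0$. Then \eqref{eqn:positive_condition} forces $F_i(\mu(\tau_n))\to\overline F_i$, so the displayed inequality together with the lower bound $J_i(t)\ge J_i^0(t)+K_i(t,t)\big(Q_i(t)-\sup_{s\le t}Q_i(s)\big)$ from the proof of Proposition~\ref{prop:positive} (which additionally uses $J_i^0\ge0$) gives $\liminf_n J_i(\tau_n)\ge0$, hence $\liminf_n I_i(\tau_n)\ge0$ and $\liminf_n P_i(\tau_n)\ge S_i(\xi)>0$ --- contradicting $P_i(\tau_n)=1/n\to0$. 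Thus at the explosion time $\overline\bfp$ must tend to $0$ or to $\infty$, and the reverse containment is automatic since $\overline\bfp\in(0,\infty)$ throughout $[0,\xi)$. I expect two places to demand the most care: matching the explosion time with $\xi$ in (i), which is mechanical but lengthy (hence the appendix); and the case $\overline F_i=+\infty$ in (v), realised already by the entropy generator (for which $F_i(\mu)=-\log\mu_i$) and not covered by the estimate above, where one instead works with the level-crossing times $\tau_n=\inf\{t:Q_i(t)=n\}$, for which the hypothesis $Q_i(\tau_n)=\sup_{s\le\tau_n}Q_i(s)$ of Proposition~\ref{prop:positive} holds exactly, and shows that $J_i\to+\infty$ along these times so that again $\liminf_n I_i(\tau_n)\ge0$ and the same contradiction is reached.
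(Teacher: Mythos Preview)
Your proposal is correct and follows the paper's approach closely for parts (i)--(iv): localize the SDE over an exhaustion of the state space to get existence/uniqueness, reverse the heuristic derivation of Appendix~\ref{app:SDE_derivation} to identify the solution with the structural relations, and then read off the master formula and sign of $\Gamma$ exactly as in Section~\ref{sec:heuristic} and Proposition~\ref{prop:additive_frictionless}.

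The only substantive divergence is in part (v). The paper chooses the stopping times $\tau_n=\inf\{t:F_i(\mu(t))\ge \overline F_i-1/n\}$, which by definition of the hitting time and monotonicity of $\Gamma$ give the \emph{exact} equality $Q_i(\tau_n)=\sup_{s\le\tau_n}Q_i(s)$, so Proposition~\ref{prop:positive} applies verbatim (via Remark~\ref{rem:positive}). You instead take $\tau_n=\inf\{t:P_i(t)\le 1/n\}$ and redo the key estimate from inside the proof of Proposition~\ref{prop:positive} with the approximate bound $Q_i(\tau_n)-\sup_{s\le\tau_n}Q_i(s)\ge \tfrac{wN_i}{\overline\bfp(0)}(F_i(\mu(\tau_n))-\overline F_i)\to 0$. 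Both routes work; the paper's is slightly cleaner since it invokes the proposition as a black box, while yours requires tracking that $K_i(\tau_n,\tau_n)$ stays bounded so the vanishing error passes through to $J_i$. On the other hand, you correctly flag that the paper's literal choice $\overline F_i-1/n$ presupposes $\overline F_i<\infty$ (the entropy generator has $\overline F_i=+\infty$); the paper's argument is trivially repaired by replacing $\overline F_i-1/n$ with any sequence increasing to $\overline F_i$, but your separate treatment via the level sets of $Q_i$ is a clean way to handle that case explicitly.
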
 The examples introduced in Section~\ref{sec:frictionless} all satisfy the assumptions of Theorem~\ref{thm:main}. Moreover, the diversity-$p$, entropy and geometric weighted mean functions additionally satisfy \eqref{eqn:positive_condition}.
Several remarks are in order.
\begin{remark}
    The pathwise unique strong solution to \eqref{eqn:SDE} is entirely driven by the fundamental price $S$. That is, once a generating function $G$ is chosen, the processes $P,Q$ and $J$ can be viewed as functions of the path of $S$. This representation is particularly useful for backtests and trading simulations. Indeed, given a (discretized) trajectory of $S$, the Euler discretization of \eqref{eqn:SDE} provides a straightforward way to simulate the evolution of the price, impact and holdings processes and, as a result, also the investor's relative wealth process $V^Q$ given by \eqref{eqn:additive_master_impact}. We present the results of empirical backtest experiments in Section~\ref{sec:numerics}.
\end{remark}

\begin{remark}
    In the case of time-independent linear impact, $h_i(t,x) = \lambda_ix$ for every $i$, the SDE coefficients \eqref{eqn:SDE_coefficients} become independent of $J$. As such,  the SDE reduces to a $2d$-dimensional system \eqref{eqn:SDEP}--\eqref{eqn:SDEQ} for $P$ and $Q$, with $J$ given by the right-hand side of \eqref{eqn:SDEJ}.
\end{remark}
\begin{remark} \label{rem:C3}
    Depending on the choice of function $G$, the explosion time $\xi$ for the SDE \eqref{eqn:SDE} may not be maximal. For example, if $G(t,\cdot)$ is a function which is actually $C^{3}$  with Lipschitz continuous derivatives on all of $\R^d$, not just on $\Delta^{d-1}_+$, then the SDE solution can continue to exist even if $P_i$ (and hence $\mu_i$) becomes negative. This process, however, loses the interpretation of being a stock price, so we stick with the explosion time $\xi$ given by \eqref{eqn:xi}. 
\end{remark}
    By examining the SDE dynamics in \eqref{eqn:SDE}, as well as the computations leading to its derivation contained in Appendix~\ref{app:SDE_derivation}, it is clear that if the set $\{(t,p,J): A^P(t,p,J) \text{ is not invertible}\}$ is nonempty, which by Proposition~\ref{prop:AP} is equivalent to $G$ failing to satisfy \eqref{eqn:quadratic_form}, then the SDE coefficients would not be defined everywhere. Even if they are defined almost everywhere, they would no longer be locally Lipschitz continuous, leading to difficulties studying the well-posedness of the SDE \eqref{eqn:SDE}. Moreover, if this set of noninvertible points is sufficiently large and visited by the process with positive probability, then it is clear that the price process $P$ cannot be a semimartingale. This is in contrast to the frictionless setup, where generic $C^2$ generating functions are easily handled.

Regarding smoothness requirements, compared to the frictionless setting, additional regularity on $G$ is required. This is because the final term of \eqref{eqn:Gamma}, which does not appear in the frictionless case, contains the quadratic variation of $\partial_i G(\cdot,\mu)$.  The assumption that $\partial_i G$ is a $C^{1,2}$ function guarantees semimartingality of $\partial_i G(\cdot,\mu)$ and hence the existence of its quadratic variation. Furthermore, assuming the third derivatives are locally Lipschitz continuous allows us to establish well-posedness of the SDE \eqref{eqn:SDE}.  

In the sequel, to simplify the exposition and avoid making statements up to an explosion time, we will at times invoke the following assumption.
\begin{assum}[Nonexplosive market] \label{ass:nonexplosion} In the setup of Theorem~\ref{thm:main} assume that $\P(\xi < \infty) = 0$.
\end{assum}
We briefly mention a few additional sufficient conditions that ensure Assumption~\ref{ass:nonexplosion} holds in the case that $G$ satisfies \eqref{eqn:positive_condition}. If $h_i(t,x) = 0$ for some $i$, then $\overline \bfp(t) \geq S_i(t) > 0$. Additionally, if each $h_i$ is bounded by some constant $C_i$ then $\overline \bfp(t) = \sum_{i=1}^d N_i S_i(t) + \sum_{i=1}^d N_ih_i(t,J_i(t)) \leq \sum_{i=1}^d N_i S_i(t) + \sum_{i=1}^d N_iC_i < \infty$. We will not impose these stronger conditions directly, but will instead enforce the weaker Assumption~\ref{ass:nonexplosion} where needed, specifying each some instance explicitly. In Section~\ref{sec:relarb_explicit} we will establish that Assumption~\ref{ass:nonexplosion} holds under additional assumptions on the frictionless market and by imposing certain bounds on $G$ and its derivatives. Studying the SDE \eqref{eqn:SDE} and understanding whether a market explosion is possible in the general case considered in this section remain interesting problems beyond the scope of this paper.

\subsection{Multiplicative generation} \label{sec:multiplicative}
In this section we discuss multiplicative functional generation. As the name suggests, multiplicative functional generation is also a systematic way to prescribe the investor's holdings through a generating function, but, differently from additive generation, the natural parametrization of the strategy is through the proportions of wealth $\pi$ invested in the assets, rather than the number of shares $Q$. Restricting to strategies with strictly positive wealth, these are in one-to-one correspondence via the relationships
\begin{equation} \label{eqn:pi_Q_relationship}
\pi_i(t) = \frac{Q_i(t)P_i(t)}{W^Q(t)} \iff Q_i(t) = \frac{\pi_i(t)W^Q(t)}{P_i(t)}.
\end{equation} 
In the frictionless case, for a sufficiently smooth and positive generating function $G$, the following (time-homogeneous) master formula holds (see \cite[Theorem~3.1.5]{fernholz2002stochastic}),
\begin{equation} \label{eqn:master_mult}
\log V^Q(t) = \log G(\mu(t)) - \log G(\mu(0)) + \Theta(t),
\end{equation}
where $d\Theta(t) = -\frac{1}{2}\sum_{i,j=1}^d \frac{\partial_{ij} G(\mu(t))}{G(\mu(t))}d[\mu_i,\mu_j](t)$ and the strategy is given by
\begin{equation} \label{eqn:pi_fg}
\pi_i(t) = \mu_i(t)\big(\partial_i \log G(\mu(t)) + 1 - \nabla \log G(\mu(t))^\top \mu(t)\big), \qquad i=1,\dots,d.
\end{equation}
Multiplicative generation leads to a master formula decomposition for the logarithmic wealth, rather than the wealth itself. This reflects the fact that the proportions of wealth, rather than the number of shares,  are systematically specified by a function of the market weights. As in the discussion of Section~\ref{sec:interpreting_additive} for additive generation, multiplicatively generated portfolios can be shown to arise from utility maximization, where the utility function is of constant relative risk aversion (CRRA) type; i.e.,  power or logarithmic utility. In particular, as the investor's wealth increases, multiplicative generation prescribes increasingly large absolute positions $Q$ and, consequently, increasingly large trades.  

This behaviour is in contrast to additive generation, as discussed in Section~\ref{sec:interpreting_additive}, and leads to complications in establishing an analogue of \eqref{eqn:master_mult} in the price impact setting. Indeed, we will heuristically show that the relationship \eqref{eqn:master_mult} can be derived with price impact present, but the resulting process $\Theta$ depends on $V^Q$ so that the right-hand side does not yield a closed-form formula for the left-hand side. 

To see this, we let $\pi$ be given by \eqref{eqn:pi_fg}, deduce $Q$ from \eqref{eqn:pi_Q_relationship}, and substitute into \eqref{eqn:relative_wealth} to obtain, after some simplification, 
\[\frac{dV^Q(t)}{V^Q(t)} = \nabla \log G(\mu(t))^\top d \mu(t) + \frac{1}{2}\sum_{i=1}^d \frac{\overline \bfp(0)}{w\overline \bfp(t)}d[I_i,Q_i](t).\] Using It\^o's formula for $\log V^Q$ on the left-hand side and $\log G(\mu)$ on the right-hand side leads to the relationship
\begin{equation} \label{eqn:master_mult_implicit}
d\log V^Q(t) = d\log G(\mu(t)) - \frac{1}{2}\sum_{i,j=1}^d \frac{\partial_{ij}G(\mu(t))}{G(\mu(t))}d[\mu_i,\mu_j](t) + \frac{1}{2}\sum_{i=1}^d \frac{\overline \bfp(0)}{w\overline \bfp(t)}d[I_i,Q_i](t).
\end{equation}
As in the additive case, we now seek to simplify the quadratic covariation term. Using the identity \eqref{eqn:dIQ} it suffices to compute $d[Q_i](t)$. Setting $F_i(\mu) = 1 + \partial_i \log G(\mu) - \nabla \log G(\mu)^\top \mu$, we directly compute from \eqref{eqn:pi_Q_relationship} and \eqref{eqn:pi_fg} that
\[d[Q_i](t) = d\Big[\frac{\mu_i}{P_i} W^QF_i(\mu)\Big](t) = d\Big[N_i \frac{W^Q}{\overline \bfp}F_i(\mu)\Big](t) = d\Big[\frac{wN_i}{\overline \bfp(0)}V^QF_i(\mu)\Big](t) =\frac{w^2N_i^2}{\overline \bfp^2(0)}d[V^Q F_i(\mu)](t),\]
where in the third equality we recalled the form of the market wealth process given in \eqref{eqn:market_wealth}. Next, we can use the product rule to simplify the remaining quadratic variation term,
\begin{align*}
    d[V^QF_i(\mu)](t) & = (V^Q(t))^2d[F_i(\mu)](t) + 2V^Q(t)F_i(\mu(t))d[V^Q,F_i(\mu)](t) + F_i^2(\mu(t))d[V^Q](t) \\
    & = (V^Q(t))^2\big(d[F_i(\mu)](t) + 2F_i(\mu(t))d[\log V^Q,F_i(\mu)](t) + F_i^2(\mu(t))d[\log V^Q](t)\big) \\
    & = (V^Q(t))^2\big(d[F_i(\mu)](t) + 2F_i(\mu(t))d[\log G(\mu),F_i(\mu)](t) + F_i^2(\mu(t))d[\log G(\mu)](t)\big) \\
    & = (V^Q(t))^2d\Big[F_i(\mu) + \int_0^\cdot F_i(\mu)d\log G(\mu)\Big](t).
\end{align*}
where in the second equality we used the fact that $d[\log V^Q]= d[V^Q]/(V^Q)^2$; in the third equality we deduced from \eqref{eqn:master_mult_implicit} that $\log V^Q$ is given by $\log G(\mu)$ plus a finite variation term, which does not contribute to the quadratic variation; and in the final equality we used the properties of quadratic variation to write the terms more compactly. Combining these computations yields the relationship
\begin{equation} \label{eqn:master_mult_explicit}
\begin{aligned}  d\log V^Q(t) & = d\log G(\mu(t)) - \frac{1}{2}\sum_{i,j=1}^d \frac{\partial_{ij}G(\mu(t))}{G(\mu(t))}d[\mu_i,\mu_j](t)\\
& \quad  + \frac{(V^Q(t))^2}{2}\sum_{i=1}^d \frac{\overline \bfp(0)}{w\overline \bfp (t)}\partial_xh_i(t,J_i(t))K_i(t,t)d\Big[F_i(\mu) + \int_0^\cdot F_i(\mu)d\log G(\mu)\Big](t). 
\end{aligned}
\end{equation}
The first two terms coincide with the frictionless master formula \eqref{eqn:master_mult}, while the final term is new and accounts for the distortion in wealth caused by the presence of price impact. However, unlike the case of additive generation, in the multiplicative case this additional term scales with $V^Q$, reflecting the fact that trade sizes, and hence price impact, increase with one's wealth for multiplicatively generated portfolios. 

The presence of $V^Q$ on the right-hand side of \eqref{eqn:master_mult_explicit} makes the rigorous study of multiplicative generation substantially more difficult in the price impact setting. Indeed, it is far from clear whether the well-posedness of the coupled system $(P,Q,J,V^Q)$, analogous to what was established in Theorem~\ref{thm:main} for the additive case, can be established. From the mathematical point of view, one might expect the coupled system to exist locally, but for sufficiently large values of $V^Q$ the investor's trading volume would give rise to a large price distortion and could lead to finite-time blowup in the equations. In essence, for additively generated portfolios the capacity constraints of the strategy are handled mechanically due to profits being reinvested in the market portfolio as discussed in Section~\ref{sec:interpreting_additive}. This is not the case for multiplicative generation. For this reason our main focus in this paper is on additive generation, which allows us to develop a clean and rigorous theory of relative arbitrage in the next section. The study of multiplicative generation with price impact is left as an open problem for future research.

\section{Relative arbitrage} \label{sec:relarb}
In this section we discuss relative arbitrage in price impact models. That is, we investigate whether or not outperformance of the market portfolio \emph{with probability one} over a sufficiently long time horizon is possible. As previously discussed in Section~\ref{sec:relative_wealth}, care is needed when comparing wealth processes of two portfolios to ensure any outperformance is genuine and not attributable to inflated mark-to-market prices caused by the investor's own trades. This motivates the following definition of relative arbitrage

\begin{defn}[Relative arbitrage] \label{defn:relarb} For a time horizon $T > 0$, a trading strategy $Q$ is called a \emph{relative arbitrage} with respect to the market portfolio $Q^\mathcal{M}$ on $[0,T]$ if the following two conditions hold.
\begin{enumerate}
\item \label{item:liquidate}
$Q_i(0) = Q^\mathcal{M}_i(0) = \frac{wN_i}{\overline \bfp(0)}$ and  $Q_i(t) = V^Q(T)Q^\mathcal{M}_i(t) = \frac{W^Q(T)N_i}{\overline \bfp(T)}$ for all $i\in\{1,\dots,d\}$ and $t \geq T$,
\item \label{item:relarb} $\P(V^Q(T) \geq 1) = 1$ and  $\P(V^Q(T) >1) > 0$.
\end{enumerate}
If $\P(V^Q(T) > 1) = 1$ then we call $Q$ a \emph{strong relative arbitrage}.
\end{defn}
Condition \ref{item:liquidate} requires the strategy $Q$ to initially hold the market portfolio and then to hold it again from time $T$ onward. By this condition,  we have the equality of events,
\begin{align*} 
\{V^Q(T) \ge  1\} & = \Big\{V^Q(T) \sum_{i=1}^d Q^\mathcal{M}_i(T)S_i(T) \geq  \sum_{i=1}^d Q^\mathcal{M}_i(T)S_i(T)\Big\} \\
& = \Big\{\sum_{i=1}^d Q_i(T)S_i(T) \geq \sum_{i=1}^d Q^\mathcal{M}_i(T)S_i(T)\Big\}  = \{\widetilde W^Q(T) \geq  \widetilde W^\mathcal{M}(T)\},
\end{align*}
 where $\widetilde W^Q$ and $\widetilde W^\mathcal{M}$, introduced in Remark~\ref{rem:master_fundamental}, denote the investor's fundamental wealth when trading $Q$ and the market portfolio respectively. The strict inequality case $\{V^Q(T) > 1\} = \{\widetilde W^Q(T) > \widetilde W^\mathcal{M}(T)\}$ follows analogously. As such, we see that outperformance measured with respect to the accounting wealth, as in item \ref{item:relarb}, implies genuine fundamental outperformance as long as the relative wealth is measured after liquidation, as required by item \ref{item:liquidate}.  
 In essence, a relative arbitrage is a trading strategy which, before initiation and after liquidation, holds the benchmark market portfolio and upon liquidation is guaranteed to outperform the market portfolio. In the case of frictionless markets the first condition \ref{item:liquidate} need not be explicitly stated, as there is a unique undistorted price with which all portfolios are valued.
\subsection{Implicit conditions for relative arbitrage}

In \cite[Chapter~3]{fernholz2002stochastic}, it was shown that under the conditions of \emph{market nondegeneracy} and \emph{market diversity}, relative arbitrage exists in a frictionless market.\footnote{For other sufficient conditions and sharp time horizon bounds for relative arbitrage see \cite{fernholz2018volatility,larsson2021relative}.} We state these conditions here and show that they imply relative arbitrage in the present setting of price impact as well.
\begin{defn}[Market nondegeneracy and market diversity] \label{defn:market_nondegen_diverse}
Given a stopping time $T > 0$ the market is called
    \begin{enumerate}
        \item \emph{nondegenerate} on $[0,T]$ with nondegeneracy constant $\epsilon > 0$ if the log price has absolutely continuous quadratic variation process, $d[\log P](t) = a(t)dt$, and $a(\cdot)$ satisfies the uniform ellipticity condition,
        \[x^\top a(t) x \geq \epsilon \|x\|^2, \qquad \P\text{-a.s.}\]
        for every $t \leq T$ and every $x \in \R^d$,
        \item \emph{diverse} on $[0,T]$ with diversity constant $\delta > 0$ if \[\max_{i\in \{1,\dots,d\}}\mu_i(t) \leq 1-\delta, \qquad \P\text{-a.s.}\]
        for every $t \in [0,T]$.\end{enumerate}
        If $T = \infty$, we simply say the market is nondegenerate and diverse respectively.
\end{defn}
    In nondegenerate and diverse markets, relative arbitrage opportunities can be constructed using functionally generated portfolios. Here we will construct relative arbitrage using a time-dependent additively generated portfolio. Unlike the frictionless setting, where changes in holdings are cost-free, care is needed to systematically initiate and liquidate in a way that controls the evolution of the relative wealth process. 
    
    To facilitate this, given a target time horizon $T > 0$ and time-homogeneous generating function $H(\mu)$, we will use the time-dependent generating function $G(t,\mu) = \psi(t)H(\mu)$, where $\psi$ is a $C^1$ function that satisfies $\psi(0) =\psi(T) = 0$, $\psi(t) = 1$ for $t \in [T_0,T_1]$, $\psi'(t) > 0$ for  $t \in [0,T_0]$ and $\psi'(t) < 0$ for $t \in [T_1,T]$ with $0 < T_0 \leq T_1 < T < \infty$. The period $[0,T_0]$ is the \emph{initiation period}, where the investor gradually goes from holding the market portfolio to the target portfolio generated by $H$, while the period $[T_1,T]$ is the \emph{liquidation period}, where the investor gradually trades back to the market portfolio. The period $[T_0,T_1]$ will then be referred to as the \emph{main trading period}. For concreteness, given times $T_0,T_1$ and $T$, we use the following choice of $\psi$,
\begin{equation} \label{eqn:psi}
    \psi(t) = 
    \begin{cases}
    \chi\Big(\frac{t}{T_0}\Big), & 0 \leq t \leq T_1,\\
    1-\chi\Big(\frac{t-T_1}{T-T_1}\Big), & t \geq T_1,
    \end{cases}\qquad \text{where} \qquad \chi(u) = \begin{cases}
        3u^2-2u^3, & u \leq 1, \\
        1, & u \geq 1.
    \end{cases}
\end{equation}
A schematic representation of $\psi$, along with its derivative, is depicted in the left panel of Figure~\ref{fig:psi}. Typically, we will take $T_0,T_1$ and $T$ to be deterministic constants, but in Theorem~\ref{thm:relarb_implicit} we take $T_1$ to be a bounded stopping time.

We now describe how the diversity term $G(t,\mu(t)) = \psi(t)H(\mu(t))$ and the $\Gamma(t)$ term of the master formula decomposition \eqref{eqn:master_formula} evolve when using such a strategy. The right panel of Figure~\ref{fig:psi} illustrates their behaviour when using the quadratic generating function on real data in 2016; see Section~\ref{sec:numerics} for a more detailed description of the dataset and specifically chosen parameters.  Over the initiation period $[0,T_0]$, the diversity term $G(t,\mu(t))$ rapidly increases from its initial value of $G(0,\mu(0)) = 0$ to the positive value $H(\mu(T_0))$.  To compensate for this rapid increase, the term $\Gamma(t)$ decreases rapidly over $[0,T_0]$ with the term $-\int_0^t \partial_t G(s,\mu(s))ds$ being the dominant contribution in \eqref{eqn:Gamma} over this period.  Over the main trading period $[T_0,T_1]$ the diversity process $H(\mu(t))$ experiences fluctuations due to fundamental price shocks and, to a lesser extent, the  trades placed by the investor since they impact the market weight process $\mu$. Meanwhile, the $\Gamma$ process is increasing over the interval $[T_0,T_1]$ as the first term on the right-hand side of \eqref{eqn:Gamma} is zero and the remaining terms are increasing. Empirically, the main contribution in this period comes from the second derivative term in \eqref{eqn:Gamma} with the price impact term, which appears on the second line of \eqref{eqn:Gamma}, contributing only a small percentage --- typically at most 5\% of the total increase in $\Gamma$ over the main trading period. However, the total accumulation of the $\Gamma$ process tends to be smaller when price impact is present than in the frictionless case. We defer further discussion of this point to Section~\ref{sec:backtest_results}, where we also depict the growth of the $\Gamma$ processes over the main trading period, $(\Gamma(t) - \Gamma(T_0); t \in [T_0,T_1])$, across different price impact assumptions in the top right panels of Figures~\ref{fig:Quadratic} and \ref{fig:Entropy}. Finally, over the liquidation period $[T_1,T]$ the diversity and $\Gamma$ processes reverse course with $G(t,\mu(t))$ rapidly decreasing to $G(T,\mu(T)) = 0$, while the $\Gamma$ process increases monotonically to compensate. 

\begin{figure}
    \centering
    \includegraphics[width=\linewidth]{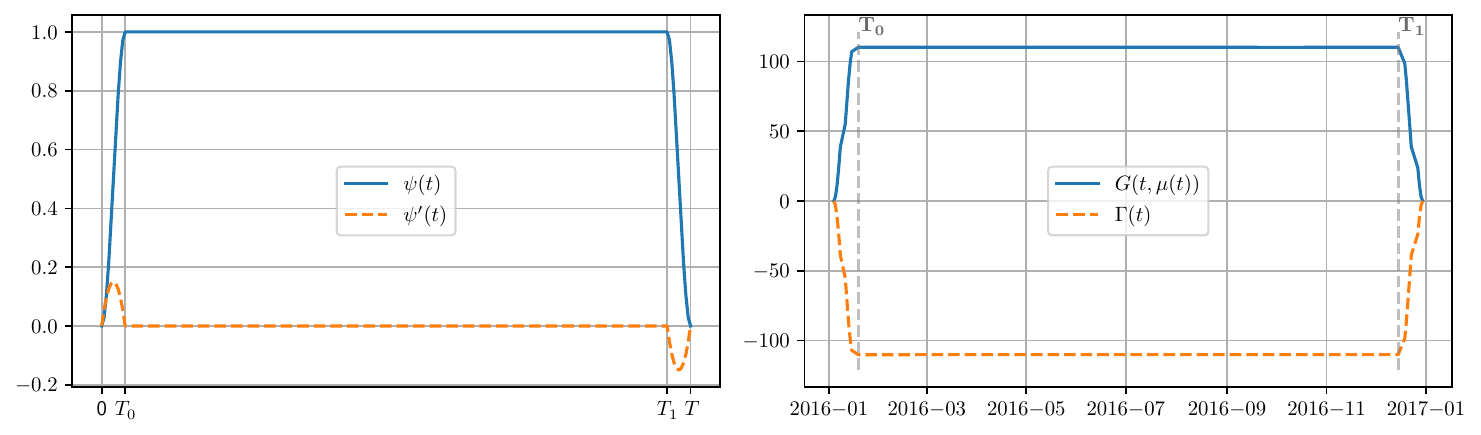}
    \caption{The left panel displays $\psi(t)$ and $\psi'(t)$ where $T_0 = 10$, $T_1 = 242$ and $T = 252$ are in units of trading days. The right panel displays the evolution of the diversity term $G(t,\mu(t))$ and the $\Gamma(t)$ term of the master formula decomposition \eqref{eqn:master_formula} during the 2016 backtest period when using the quadratic generating function.}
    \label{fig:psi}
\end{figure}

We are now ready to state a first result on relative arbitrage in the setting with price impact.
\begin{thm}[Relative arbitrage; implicit version] \label{thm:relarb_implicit} Let Assumption~\ref{ass:impact_inputs} hold. Set
\[
    G(t,\mu) = \psi(t)H(\mu), 
\]
where $H\in C^{3}(\Delta^{d-1}_+)$ is a positive, strongly concave function with locally Lipschitz third derivatives, bounded from above by a constant $\overline H > 0$
and $\psi$ is given by \eqref{eqn:psi}. We assume the resulting market satisfies Assumption~\ref{ass:nonexplosion}, is nondegenerate and diverse, with respective constants $\epsilon$ and $\delta$ independent of the time horizon.

Then there exists a strong relative arbitrage on $[0,T]$ for any $T > T^* := \frac{2d\overline H}{m\epsilon \delta^2}$, where $m$ is the strong concavity constant associated with $H$; that is,  $-\nabla^2 H(\mu) \succeq m I_{d \times d}$ for every $\mu \in \Delta^{d-1}_+$, in the sense of positive-definite matrices. Concretely, letting $T_0 > 0$ be arbitrary and setting $T_1 = T^* + T_0$ and $T = T_1 + T_0$, we have that the time-dependent additively generated strategy with generating function $G$ is a strong relative arbitrage on $[0,T]$.
\end{thm}
\begin{proof}
    We start by noting that $G$ satisfies the assumptions of Theorem~\ref{thm:main}.
    Theorem~\ref{thm:main} together with Assumption~\ref{ass:nonexplosion} guarantees that this function $G$ gives rise to valid and globally defined price, holdings and impact states processes $(P,Q,J)$ given by the SDE \eqref{eqn:SDE}. Moreover, $Q$ is time-dependent additively functionally generated by $G$ and the wealth process $V^Q$ satisfies \eqref{eqn:master_formula} with $\Gamma$ given by \eqref{eqn:Gamma}.
    Next, note that $G(\cdot,\mu)$ is nondecreasing on $[0,T_1]$ and nonincreasing on $[T_1,T]$ for each $\mu \in \Delta^{d-1}_+$. As such, we see from \eqref{eqn:Gamma} that $\Gamma$ is strictly increasing on $[T_0,T]$ and the only component that contributes to a decrease on $[0,T_0]$ is the term $t \mapsto -\int_{0}^t\partial_t G(s,\mu(s))ds$. Using these observations together with the fact that $G(0,\mu(0)) = G(T,\mu(T)) = 0$, we obtain the bound
    \begin{align}
        V^Q(T)  = 1+ \Gamma(T) & > 1-  \int_{0}^{T_0}\partial_t G(t,\mu(t))dt - \frac{1}{2}\sum_{i,j=1}^d\int_{T_0}^{T_1}\partial_{ij}G(t,\mu(t))d[\mu_i,\mu_j](t) \nonumber \\
        & = 1 -\int_{0}^{T_0}\psi'(t)H(\mu(t))dt -\frac{1}{2}\sum_{i,j=1}^d\int_{T_0}^{T_1}\partial_{ij}H(\mu(t))d[\mu_i,\mu_j](t) \nonumber  \\
        & \geq 1 - \overline H + \frac{m}{2}\sum_{i=1}^d\int_{T_0}^{T_1}d[\mu_i](t).  \label{eqn:VQ_lower_bound}
    \end{align}
In the final inequality we used the bound $H \leq\overline H$ together with $\psi(0) = 0$ and $\psi(T_0) = 1$ to bound the time integral, and strong concavity of $H$ to bound the quadratic variation term. 

It now remains to show that $ \frac{m}{2}\sum_{i=1}^d\int_{T_0}^{T_1}d[\mu_i](t) \geq \overline H$ with probability one. Applying It\^o's formula to $\mu_i$, it is easily verified that 
\[
    d[\mu_i](t) = \frac{1}{\overline \bfp^2(t)}(d[\bfp_i](t) - 2\mu_i d[\bfp_i,\overline \bfp](t) + \mu_i^2d[\overline \bfp](t)).
\] 
It then follows that
\begin{align*}
\frac{m}{2}\sum_{i=1}^d\int_{T_0}^{T_1}d[\mu_i](t) &  = \frac{m}{2}\sum_{i,j=1}^d \int_{T_0}^{T_1}\frac{1}{\overline \bfp^2(t)}(\delta_{ij} - 2\mu_i(t) + \|\mu(t)\|^2)d[\bfp_i,\bfp_j](t) \\
& =  \frac{m}{2}\sum_{i,j=1}^d\int_{T_0}^{T_1} \mu_i(t)\mu_j(t)(\delta_{ij} - 2\mu_i(t) + \|\mu(t)\|^2)\frac{d[\bfp_i,\bfp_j](t)}{\bfp_i(t)\bfp_j(t)} \\
& =  \frac{m}{2}\int_{T_0}^{T_1}\mathrm{Tr}\big(\Psi(\mu(t))d[\log \bfp ](t)\big),
\end{align*}
where $\Psi_{ij}(\mu) = \mu_i\mu_j(\delta_{ij}-2\mu_i+\|\mu\|^2)$
and $\delta_{ij}$ is the Kronecker delta. By market nondegeneracy we have that $d[\log \bfp](t) = d[\log P](t) \succeq \epsilon I_{d \times d} dt$. Hence, recalling the definition of $T_1$ and $T^*$ we obtain
\begin{equation}
\begin{aligned} \label{eqn:QV_estimates}
    \frac{m}{2}\sum_{i=1}^d\int_{T_0}^{T_1}d[\mu_i](t) & \geq \frac{m\epsilon}{2}\int_{T_0}^{T_1}\mathrm{Tr}\big(\Psi(\mu(t))\big)dt = \frac{m\epsilon}{2}\int_{T_0}^{T_1}\sum_{i=1}^d\mu_i^2(t)(1-2\mu_i(t)+\|\mu(t)\|^2)dt \\
    & \geq \frac{m\epsilon}{2}\sum_{i=1}^d \int_{T_0}^{T_1}\mu_i^2(t)(1-\mu_i(t))^2dt \geq \frac{m\epsilon\delta^2}{2d}(T_1-T_0) = \overline H,
\end{aligned}
\end{equation}
where in the final inequality we used the market diversity property $1-\mu_i(t) \geq \delta$ and the bound $\sum_{i=1}^d \mu_i^2(t) \geq 1/d$, the latter following from the Cauchy--Schwarz inequality. This completes the proof.
\end{proof}
\begin{remark}
    The proof only required diversity and nondegeneracy to hold up to the time $T_1$.
\end{remark}

Two examples of generating functions $H$ satisfying the conditions of the theorem are the quadratic function $H(\mu) =  1 - \frac{1}{2}\sum_{i=1}^d \mu_i^2$ and the entropy function $H(\mu) = -\sum_{i=1}^d \mu_i \log \mu_i$. In Section~\ref{sec:numerics} we investigate the performance of these two generating functions on historical data.

\subsection{Explicit conditions for relative arbitrage} \label{sec:relarb_explicit}
In the previous section we obtained relative arbitrage in the market with price impact, but under \emph{implicit} assumptions on the market. Indeed, we directly assumed that the observed market was nondegenerate, diverse and nonexplosive. However, in our framework, the market with price impact is derived from the  frictionless market prices $S$, the price impact specifications $h$ and $K$, and the choice of generating function $G$. In this section we develop explicit conditions on these inputs that guarantee the existence of relative arbitrage. To this end, we make the following definitions.  
\begin{defn}[Bounded volatility]
The market is said to have \emph{bounded volatility} if the almost sure bound $\sum_{i,j=1}^d [\log P_i,\log P_j](t) \leq \sigma^2 t$ holds
for every $t \geq 0$ and some $\sigma^2 >0$.
\end{defn}

\begin{defn}[Floor constraints] \label{defn:floor_constraints}
\begin{enumerate} 
    \item \! The total market capitalization satisfies a \emph{floor constraint} with bound $\kappa > 0$ on $[0,T]$ if $\overline \bfp(t) \geq \kappa$ almost surely for every $t \in [0,T]$,
    \item The market weights are said to satisfy a \emph{floor constraint} with bound $\ell > 0$ on $[0,T]$ if $\min_{i\in\{1,\dots,d\}}\mu_i(t) \geq \ell$ almost surely for every $t \in [0,T]$.
\end{enumerate}
\end{defn}
\begin{remark} \label{rem:floor_bounds} \begin{enumerate}
\item Most markets have a minimum capitalization requirement for the continued listing of a company's stock. For instance, as of January 2026, NASDAQ requires companies to have at least \$15M in the market value of their publicly held shares and a minimum bid price of \$1 per share \cite{nasdaq_continued_listing_2026}. As such, the floor constraints of Definition~\ref{defn:floor_constraints} are expected to hold in practice,
    \item  When $d = 2$, market diversity implies a market weight floor constraint since $\min_{i\in\{1,2\}}\mu_i(t) = 1 - \max_{i \in \{1,2\}}\mu_i(t) \geq \delta$, where $\delta$ is the diversity constant.
    
\end{enumerate}
   
\end{remark}
In the sequel, we will say that the \emph{frictionless market} is diverse, nondegenerate, has bounded volatility, or satisfies a floor constraint if the corresponding definition holds with $S$ in place of $P$. 
To simplify the exposition of what is to come, we will assume that the impact functions are of separable form \begin{equation} \label{eqn:h_separable}
    h_i(t,x) = \phi_i(t)\widehat h_i(x), \qquad i=1,\dots,d, \quad t \geq 0, \quad  x \in \R,
\end{equation} for some function $\widehat h_i$ and a nonnegative function $\phi_i$ satisfying $\overline \phi_i = \sup_{t \geq 0} \phi_i(t) < \infty$ for each $i$.

We will now need a technical lemma, whose proof is deferred to Appendix~\ref{app:relarb_proofs}.
\begin{lem} \label{lem:relarb}
Let $H \in C^3(\Delta^{d-1}_+)$ be concave with Lipschitz continuous third derivatives and set $F_i(\mu) = \partial_i H (\mu) + H(\mu) -\nabla H(\mu)^\top \mu$ for every $i$. We assume that there exist constants $\underline H, \overline H, \underline F_i, \overline F_i > 0$ such that 
\[\underline H \leq H(\mu) \leq \overline H, \quad \text{and} \quad \underline F_i \leq F_i(\mu) \leq \overline F_i \quad \text{ for all } \mu \in \Delta^{d-1}_+ \text{ and } i=1,\dots,d.\]
We additionally assume that the second derivatives of $H$ are all bounded; that is, $|\partial_{ij}H(\mu)| \leq \overline D$ for all $i,j \in \{1,\dots,d\}$, $\mu \in \Delta^{d-1}_+$ and some constant $\overline D > 0$.
Moreover, we assume that $H$ is strongly concave so that there exists $m > 0$ such that $-\nabla^2 H(\mu) \succeq  mI_{d \times d}$ for all $\mu \in \Delta^{d-1}_+$ in the sense of positive semidefinite ordering. 

 Let Assumption~\ref{ass:impact_inputs} hold and further assume that the impact shape functions are of the form \eqref{eqn:h_separable}.
Suppose that the frictionless market is diverse, nondegenerate, has bounded volatility and satisfies capitalization and market weight floor constraints with respective constants $(\delta_S,\epsilon_S,\sigma_S^2,\kappa_S,\ell_S)$. Define the constants 
\begin{align*} \kappa_P = \kappa_S\frac{1-\delta_S}{1-\delta_S/2}, \qquad  \epsilon_P = \epsilon_S \frac{ \ell_S^2\kappa_S^2}{2(\kappa_S + d\delta_S\kappa_P/4)^2}\Big(\frac{ \min_{i \in \{1,\dots,d\}}N_i}{\max_{i\in\{1\dots,d\}}N_i}\Big)^2 ,
\qquad \delta_P = \frac{\delta_S}{4}, \\
 T^* =  \frac{2\overline Hd}{m\epsilon_P\delta_P^2}, \qquad \ell_P = \frac{\ell_S\kappa_P}{\kappa_S + d\delta_S\kappa_P/4}.
\end{align*}
Now set
\begin{equation}  \label{eqn:time_dep_quadratic} 
G(t,\mu) = \nu \psi(t)H(\mu),\end{equation}
where we choose $\nu \in (0,\overline \nu)$,
with $\overline \nu$
  defined in \eqref{eqn:bar_nu} of Appendix~\ref{app:relarb_proofs}, and $\psi$ is given by \eqref{eqn:psi} with arbitrary $T_0>0$,  $T_1 = \inf\{t \geq T_0: \Gamma(t) \geq 0\}$
being a stopping time and $T = T_0 + T^*$.  Then the observed market
\begin{enumerate}[noitemsep]
    \item is globally nonexplosive satisfying Assumption~\ref{ass:nonexplosion},
    \item is nondegenerate on $[0,T_1]$ with constant $\epsilon_P$, \label{item:ellipticity}
    \item is diverse on $[0,T_1]$ with constant $\delta_P$,
    \item satisfies market weight floor bounds on $[0,T_1]$ with constant $\ell_P$,
    \item satisfies global capitalization floor constraints with constant $\kappa_P$.
    \end{enumerate}
    Moreover, we have $\P(T_1 < T) = 1$.
\end{lem}
The previous lemma imposes additional requirements on the generating function beyond those of Theorem~\ref{thm:relarb_implicit}. The most stringent new constraint is that the second derivatives of $H$ be bounded. The quadratic generating function satisfies this additional requirement, but the entropy function does not. We are now ready to state the main result of this section.
\begin{thm}[Relative arbitrage; explicit  version]\label{thm:relarb_explicit} 
    Let the assumptions and setup of Lemma~\ref{lem:relarb} hold. Then the trading strategy additively generated by the time-dependent generating function $G$ of \eqref{eqn:time_dep_quadratic} is a strong relative arbitrage on $[0,T]$. In particular, strong relative arbitrage exists for any $T > T^*$.
\end{thm}
\begin{proof} 
We note that since $G(T,\mu(T)) = G(0,\mu(0)) = 0$ we have from the master formula \eqref{eqn:master_formula} that $V^Q(T) = 1 + \Gamma(T)$. It therefore suffices to show that $\Gamma(T) > 0$, $\P$-a.s. To this end, note that $\P(T > T_1)=1$ courtesy of Lemma~\ref{lem:relarb}  
Additionally, note that $\Gamma$ is strictly increasing on $[T_1,T]$, which follows from the fact that the first term in \eqref{eqn:Gamma} is strictly increasing on this interval while the remaining two are nondecreasing.
This gives $\Gamma(T) > \Gamma(T_1) \geq 0$, where the nonnegativity of $\Gamma(T_1)$ follows from the definition of $T_1$, thereby establishing the strong relative arbitrage claim. Moreover, since $T_0$ was arbitrary, we see from this analysis that strong relative arbitrage exists for any $T > T^*$.
\end{proof}
Theorem~\ref{thm:relarb_explicit} establishes the existence of relative arbitrage in the price impact setting. However, to establish this we needed to impose additional assumptions on the frictionless market, namely the capitalization and market weight floors.
Moreover, the proof is considerably more challenging, as one needs the technical Lemma~\ref{lem:relarb}, which requires establishing nondegeneracy and diversity of the observed market, as well as nonexplosion of the SDE \eqref{eqn:SDE}. Many estimates are employed in the proof and, as such, the constants $\epsilon_P,\delta_P,\kappa_P,\ell_P$ and $T^*$ obtained in Lemma~\ref{lem:relarb} may not be sharp.

Nevertheless, these quantitative bounds, in terms of the fundamental input variables, provide financial intuition regarding the nature of the outperformance. Indeed, by inspecting the constants $\delta_P,\epsilon_P$ and $T^*$, we see that they are entirely independent of the impact shape functions $h$ and decay kernels $K$. The impact coefficients do, however, appear in the bound $\overline \nu$ on the \emph{trading speed} parameter $\nu$, defined in \eqref{eqn:bar_nu}. As such, the less liquid the market is, as measured by $h$ and $K$, the slower the investor needs to trade to achieve relative arbitrage. The outperformance time $T^*$ is worse in the case with price impact, since $\delta_P < \delta_S$ and $\epsilon_P < \epsilon_S$, but, if the investor appropriately regulates their trading speed $\nu$, the state of illiquidity need not further lengthen the outperformance time.

The decrease in the trading speed, however, does directly impact the \emph{size} of the achievable outperformance. To see this most clearly, consider the frictionless market and suppose that a strategy $Q$ generated by some function $G$ achieves relative arbitrage on some time horizon $T$. Then the function  $\nu G$, which generates a strategy denoted by $Q^\nu$, leads from the master formula \eqref{eqn:master_formula} and \eqref{eqn:Gamma_frict} to the relative outperformance amount
\[V^{Q^\nu}(T) - 1 = \nu (G(T,\mu(T)) - G(0,\mu(0)) + \Gamma(T)) = \nu(V^Q(T)-1). \]
  As such, a slower trading speed $\nu$ linearly reduces the original outperformance amount $V^Q(T) - 1$. In the market with price impact, the relationship is more complicated since the choice of $\nu$ itself affects the market weight process $\mu$, but a qualitatively similar relationship is expected to hold.

It follows that an investor seeking relative arbitrage may need to balance outperformance time with outperformance amount. This does not arise in the frictionless setting, as one could always suitably scale the generating function to achieve the desired outperformance level. This tradeoff is consistent with previous results on arbitrage in markets with frictions, where arbitrage opportunities may exist but cannot be scaled arbitrarily (see e.g., \cite{guasoni2015hedging}). 

\section{Historical backtests} \label{sec:numerics}
 In this section we apply the price impact framework developed in this paper to historical US equity data, with the goal of illustrating how it manifests on real data and to explore its consequences for the behaviour of additively functionally generated portfolios. As the framework requires a fixed stock universe, we preprocess the data as described in Section~\ref{sec:data} to ensure that all securities selected are available for investment throughout the entire backtesting period. A more extensive rank-based backtest analysis, akin to those in \cite{fernholz2002stochastic} and \cite{ruf2020impact}, that allows the investment universe to fluctuate over time is left for future work. Such an analysis would also require extending the price impact framework to handle rank-based generating functions; see Remark~\ref{rem:rank-based}. In Section~\ref{sec:parameters} we describe how the model parameters are selected and in Section~\ref{sec:backtest_results} we discuss the results of our backtest experiments.\footnote{The code used to produce the backtest results and all of the figures in this paper is publicly available GitHub: \url{https://github.com/ditkin25/Stochastic-portfolio-theory-with-price-impact}. \label{footnote:github}}.

\subsection{Description of the data} \label{sec:data}
We use US equity data provided by the Center for Research in Securities Prices (CRSP) for our empirical analysis. We preprocess the data using the publicly available code \cite[Notebook~6]{ruf2023github} developed by Ruf, which selects only common stock and deals in a systematic way with missing or anomalous returns and other data irregularities to ensure data quality for backtesting purposes. For our study we  focus on two separate backtesting periods:
\begin{enumerate}[noitemsep]
    \item The most recent 5 years of available daily data spanning Jan 1, 2021--Dec 31, 2025, 
    \item The single calendar year Jan 1, 2016--Dec 31, 2016.
\end{enumerate}
The first period has $T = 1255$ trading days, while the second has $T = 252$. 
For each period, we initially select the 500 largest stocks by market capitalization available on the first trading day. To ensure the availability of all assets for the entire backtesting period, we then remove  securities that either (a) delist or (b) fall below $\$1$B in market capitalization at any point during the backtest period, leaving us with $d=454$ and $d=474$ securities in periods (i) and (ii) respectively. This, of course, creates survivorship bias, which is expected to benefit both the benchmark market portfolio and the additively generated portfolios we consider. Since the goal here is to illustrate the developed theoretical price impact framework, we prefer removing the securities that delist or become too small, and acknowledging the survivorship bias this may cause, rather than attempting to adjust the SDE \eqref{eqn:SDE} in an ad hoc way that allows for the number of securities to decrease stochastically over time. 

The reason we consider these two separate backtesting periods is that over short and medium time horizons the key driver of performance in functionally generated portfolios is the diversity term $G(t,\mu(t))$ appearing in the master formula decomposition \eqref{eqn:master_formula}, while the $\Gamma(t)$ term starts to dominate over longer time horizons. It is well-documented that market diversity has decreased in recent years due to a high concentration of capital in several key technology and AI companies, known colloquially as the ``magnificent seven''. This is further exacerbated by the fact that for a fixed investment universe, diversity empirically tends to decrease over time. This behaviour is in contrast to when the ticker set is allowed to refresh according to entrances and exits of securities, where the diversity level remains more stable (see \cite[Figure~5]{campbell2024macroscopic}). As such, the quadratic and entropy additively generated portfolios we consider in the sequel are profitable over the period 2021--2025, but underperform the market portfolio even without market frictions; see the left panels of Figure~\ref{fig:frictionless}. To illustrate the effects of price impact in an increasing market diversity environment we perform an additional backtest over the calendar year 2016, where market diversity increased and both the quadratic and entropy additively generated portfolios outperformed the market portfolio, as depicted in the right panels of Figure~\ref{fig:frictionless}.
\begin{figure}
    \centering
    \includegraphics[width=0.75\linewidth]{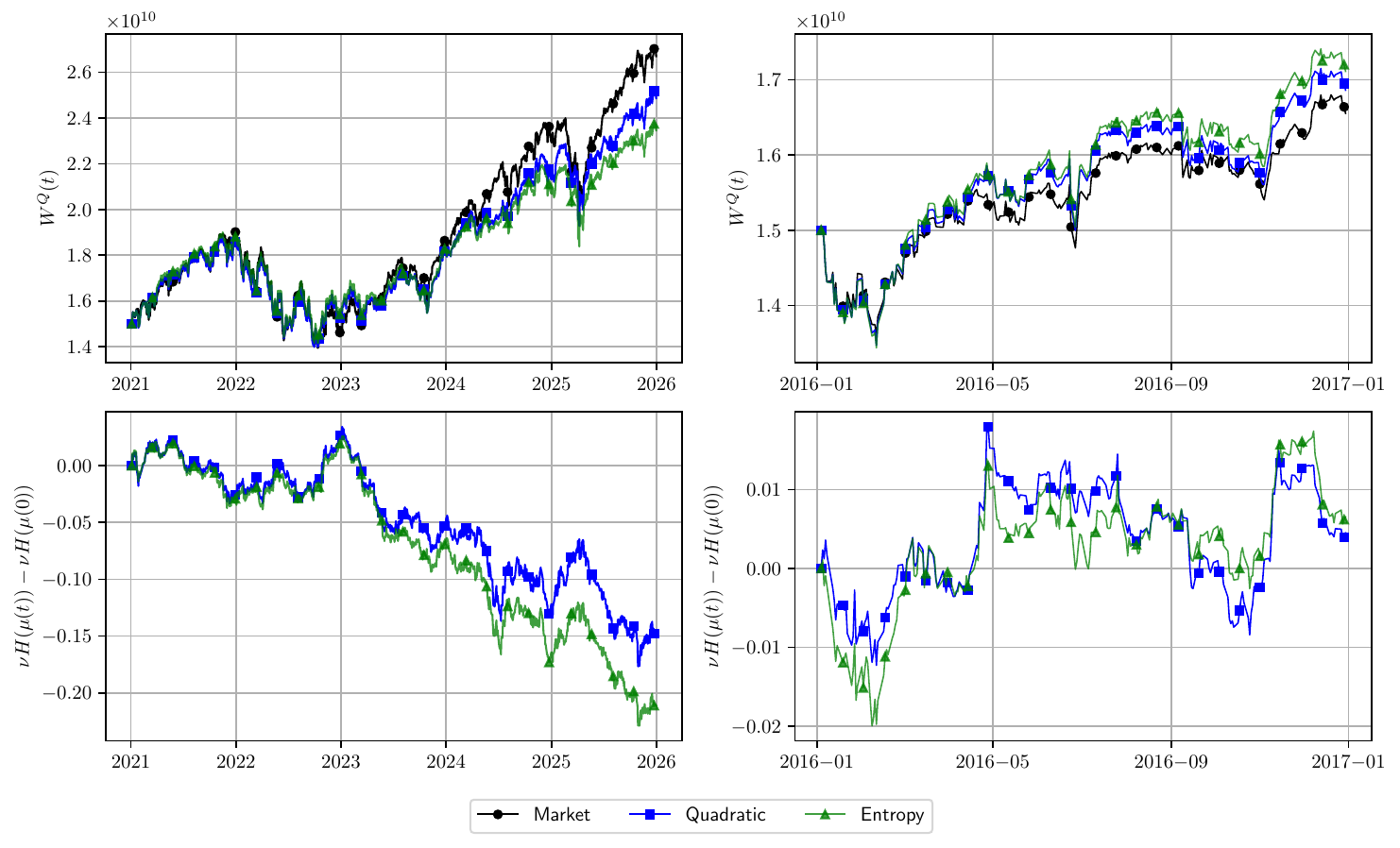}
    \caption{The top panels depict the performance of the market, quadratic and entropy portfolios without trading frictions over the periods 2021--2025 (left) and 2016 (right) respectively. The bottom panels display the change in market diversity $\nu H(\mu(t)) - \nu H(\mu(0))$ over the corresponding periods for quadratic and entropy generating functions $H$ defined in \eqref{eqn:H} with scaling parameters $\nu$ given by \eqref{eqn:nus}. }
    \label{fig:frictionless}
\end{figure}

\subsection{Specifying the model inputs} \label{sec:parameters}
\subsubsection{Impact parameters}
In this section we discuss our choice of impact shape  functions and decay kernels, along with their corresponding parameters. To simplify the illustration, we work with time-homogeneous linear price impact and exponential decay,
\[h_i(t,x) = \lambda_i x, \qquad  K_i(t,s) = e^{-\beta_i(t-s)}, \qquad i=1,\dots,d.\]  For the impact decay parameter, we take $\beta_i =2$ for all securities, corresponding to a decay half-life of $1/3$ of a trading day as reported in \cite{hey.al.23}. For the impact shape parameters $\lambda_i$, we choose each one so that a time weighted average price (TWAP) strategy that trades 1\% of average daily volume (ADV) results in a price impact of $c$ basis points (bps) at the end of the trading day. In line with the average cost estimates reported in \cite[Table~X]{frazzini.al.18}, we select $c \in \{0,5,10,15\}$ to represent frictionless, low, medium and high price impact market regimes respectively. The TWAP strategy corresponds to $dQ_i(t) = 0.01\mathrm{ADV}_idt$, where $\mathrm{ADV}_i$ is the average daily trade volume of security $i$, which we estimate for each security by averaging the security's trade volume over the calendar year prior to the start of the backtesting period; namely, 2020 for period (i) and 2015 for period (ii), hereafter referred to as the \emph{parameter selection period}.\footnote{If a security was not present for the entire  year prior to the backtest start year, we average the daily volume over the days of the previous year in which it traded.} This leads to the relationship
\[\overline S_i \times c \times 0.0001 = \lambda_i^{(c)} I_i(1) = \lambda_i^{(c)}\int_0^1 e^{-\beta_i(1-s)}\times 0.01 \times \mathrm{ADV}_i\, ds, \]
where $\overline S_i$ is the average price of the stock over the parameter selection period and we explicitly indicate the dependence of $\lambda_i$ on the impact size parameter $c$.  Isolating for $\lambda_i^{(c)}$ gives
\begin{equation} \label{eqn:lambda_raw}
\lambda_i^{(c)} = \frac{0.01\beta_i \overline S_i}{(1-e^{-\beta_i})\mathrm{ADV}_i}c.
\end{equation}

\subsubsection{Generating functions}
We study the performance of the quadratic and entropy generating functions. In accordance with the analysis of Section~\ref{sec:relarb}, where the trading rate $\nu$ was an important parameter and the initiation and liquidation periods are key components of the strategy, we will take 
\begin{equation} \label{eqn:G_data}
    G(t,\mu) = \nu \psi(t)H(\mu)
\end{equation} with either
\begin{equation} \label{eqn:H}
    H(\mu) = 1-\frac{1}{2}\sum_{i=1}^d \mu_i^2 \qquad \text{or} \qquad H(\mu) = -\sum_{i=1}^d \mu_i\log \mu_i,
\end{equation} specifying the quadratic and entropy generating functions respectively. The function $\psi$ is given by \eqref{eqn:psi} with a two-week initiation and liquidation period corresponding to $T_0 = 10$ and $T_1 = T-10$. 

The parameter $\nu$ plays an important role and needs to be chosen in a manner that leads to realistic portfolio turnover. To obtain reasonable values for $\nu$ we look to the frictionless setting and consider the time-homogeneous generating function $\nu H(\mu)$. We take the largest 500 stocks at the start of the parameter selection period and remove any stocks that delist or drop below $\$1$B in capitalization from the investment universe during the parameter selection period, following the same procedure as for the backtesting periods themselves. We compute the total dollar portfolio turnover (TO) of the frictionless strategy over the parameter selection period, defined as the minimum of the absolute total buy and total sell dollar volume. Since the signed dollar amount traded on day $t_j$ in asset $i$ is $S_i(t_j)(Q^\nu_i(t_j)-Q^\nu_i(t_{j-1}))$, where $Q^\nu_i$ is given by the right hand side \eqref{eqn:frictionless_additive_holdings} with $\nu H(\mu(t))$ in place of $G(t,\mu(t))$, it follows that 
\begin{equation} \label{eqn:TO}
\mathrm{TO} = \nu \min\Big\{\underbrace{\sum_{i=1}^d\sum_{j=1}^T S_i(t_j)(Q^1_i(t_j)-Q^1_i(t_{j-1}))^+}_{\mathrm{Buys}}, \underbrace{\sum_{i=1}^d\sum_{j=1}^TS_i(t_j)(Q^1_i(t_{j-1})-Q^1_i(t_{j}))^+\Big\}}_\mathrm{Sells},
\end{equation}
where we used the fact that $\nu \mapsto Q^\nu(t_j) - Q^\nu(t_{j-1})$ is linear in the frictionless setting. We also use the notation $x^+ = \max\{x,0\}$.  Active equity managers typically target their turnover to be a certain fraction $\alpha$ of their managed wealth, leading to the relationship $\mathrm{TO} = \alpha w$. Noting that the initial wealth $w$ also has a linear effect on $Q$, we obtain
\[\nu  = \frac{\alpha}{\min\{\mathrm{Buys},\mathrm{Sells}\}},\]
where $\mathrm{Buys}$ and $\mathrm{Sells}$ are computed as in \eqref{eqn:TO} with initial wealth $w = 1$. In our backtesting experiments we take $\alpha \in \{0.6,0.8,1\}$ in line with the proportions found for active managers in the seminal study  \cite[Table~1]{carhart1997on}. However, since the qualitative conclusions of our backtest are similar for the three different choices of $\alpha$ (its first order effect in this setting is scaling the performance of the strategies), we only present the case $\alpha = 0.8$ here and refer the interested reader to the publicly available code on GitHub (linked in footnote \ref{footnote:github}) for the results with other choices of $\alpha$. For this choice of $\alpha = 0.8$, our calibration procedure leads to the values
\begin{equation} \label{eqn:nus}
\nu_Q^{\text{2015}} = 110.7, \qquad \nu_Q^{\text{2020}} = 32.1, \qquad \nu_E^{\text{2015}} = 0.79, \qquad \nu_E^{\text{2020}}  = 0.47, 
\end{equation}
where the subscripts $Q$ and $E$ refer to the quadratic and entropy functions respectively, while the superscripts 2015 and 2020 refer to the respective year of the parameter selection period. The lower values of $\nu^\text{2020}_\cdot$ relative to $\nu^\mathrm{2015}_\cdot$ can be explained by the larger market volatility in 2020 due to the COVID-19 pandemic. Finally, we choose an initial wealth of $w = \$15$B corresponding to a mid- to large-sized institutional  equity manager.

\subsubsection{Using capitalizations rather than prices} \label{sec:caps_as_prices}

Sections~\ref{sec:relative_wealth}--\ref{sec:relarb} assume that the number of shares $N_i$ of each company is constant over time. However, in practice this assumption fails due to corporate actions such as stock splits, buybacks and issuances. As such, it is preferable to use capitalizations, rather than prices, to directly drive the SDE \eqref{eqn:SDE} in our backtest experiments. Only the impact parameters $\lambda^{(c)}$ of \eqref{eqn:lambda_raw} were calibrated using volume data expressed in units of raw share counts. These parameters therefore require adjustment when using the SDE \eqref{eqn:SDE} with capitalizations $\bfs$ of equation \eqref{eqn:boldS} in place of $S$ and ${\bf 1}_d$ in place of $N$.

To this end, we first note that the holdings processes $Q_i$ scale by $1/N_i$ when expressed in units of capitalizations, since
\begin{equation} \label{eqn:caps_holdings}
W^Q(t) = \sum_{i=1}^d P_i(t)Q_i(t) = \sum_{i=1}^d N_iP_i(t)\frac{Q_i(t)}{N_i} = \sum_{i=1}^d \bfp_i(t)\bfq_i(t),
\end{equation}
where we set $\bfq_i(t) = Q_i(t)/N_i$. It follows that, in these units and with linear impact, the prices satisfy
$P_i(t) = S_i(t) + \lambda_i^{(c)} \int_0^t K_i(t,s)dQ_i(s) =  S_i(t) + N_i\lambda_i^{(c)} \int_0^t K_i(t,s)d\bfq_i(s)$. Multiplying both sides by $N_i$, we find that the capitalizations satisfy
\begin{equation} \label{eqn:bold_lambda} 
\bfp_i(t) = \bfs_i(t) + {\boldsymbol{ \lambda}}_i^{(c)} \int_0^tK_i(t,s)d\bfq_i(s), \qquad \text{where} \qquad \boldsymbol{\lambda}_i^{(c)} = N_i^2 \lambda^{(c)}_i.
\end{equation}
As such, we compute $\boldsymbol{\lambda}_i^{(c)}$ using the formula \eqref{eqn:bold_lambda}, where $\lambda_i^{(c)}$ is given by \eqref{eqn:lambda_raw} and $N_i$ is taken to be the number of shares outstanding of asset $i$ on the first day of the corresponding backtesting period. Figure~\ref{fig:lambdas} displays $\boldsymbol{\lambda}^{(5)}$ on a log scale, sorted by the initial market capitalizations of the securities.\footnote{Note that the map $c \mapsto \boldsymbol{\lambda}_i^{(c)}$ is linear so that a plot depicting the parameters $\boldsymbol{\lambda}_i^{(c)}$ for other values $c>0$ would simply be a shifted version of this one.} That is, the values towards the left of the plot correspond to impact parameters for the largest securities, while the values towards the right correspond to impact parameters for the smaller securities. We observe a clear trend of increasing $\boldsymbol{\lambda}^{(c)}_i$ as market capitalization decreases, capturing that smaller securities have a tendency to be less liquid. As a result, trading activity in smaller securities is penalized more than trading activity in larger securities. This is, of course, in contrast to the frictionless setting, where trading in any security, small or large, incurs no cost.

\begin{figure}
    \centering
    \includegraphics[width=0.5\linewidth]{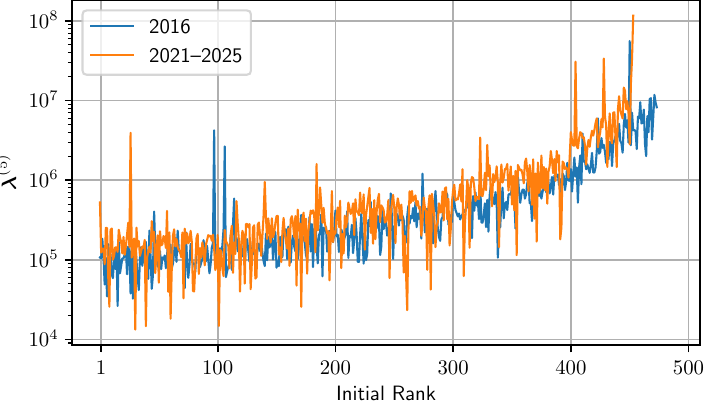}
    \caption{The figure displays $\boldsymbol{\lambda}^{(5)}$ of equation \eqref{eqn:bold_lambda}  on a log scale for both backtest periods. The values are sorted by initial market capitalization rank for each backtest period.}
    \label{fig:lambdas}
\end{figure}

\subsection{Backtest results} \label{sec:backtest_results} For each backtest, the inputs were the parameters described above as well as the time series of daily market capitalizations $(\bfs(t_j))_{j=0}^T$. To obtain the performance of each strategy we discretized the SDE \eqref{eqn:SDE}, using the observed time series of the fundamental capitalizations $\bfs$ as the driving process, and with generating function $G$ of the form \eqref{eqn:G_data}, with $H$ given by either the quadratic or entropy function of equation \eqref{eqn:H}. Figure~\ref{fig:Quadratic} displays the results for the quadratic generating function, while Figure~\ref{fig:Entropy} displays the results for the entropy generating function. In both cases, subfigure (a) corresponds to the backtesting period (i) of 2021--2025, while subfigure (b) corresponds to the backtesting period (ii) of 2016.

\begin{figure}
    \centering
    \begin{subfigure}{\textwidth}
    \centering
\includegraphics[width=0.75\linewidth]{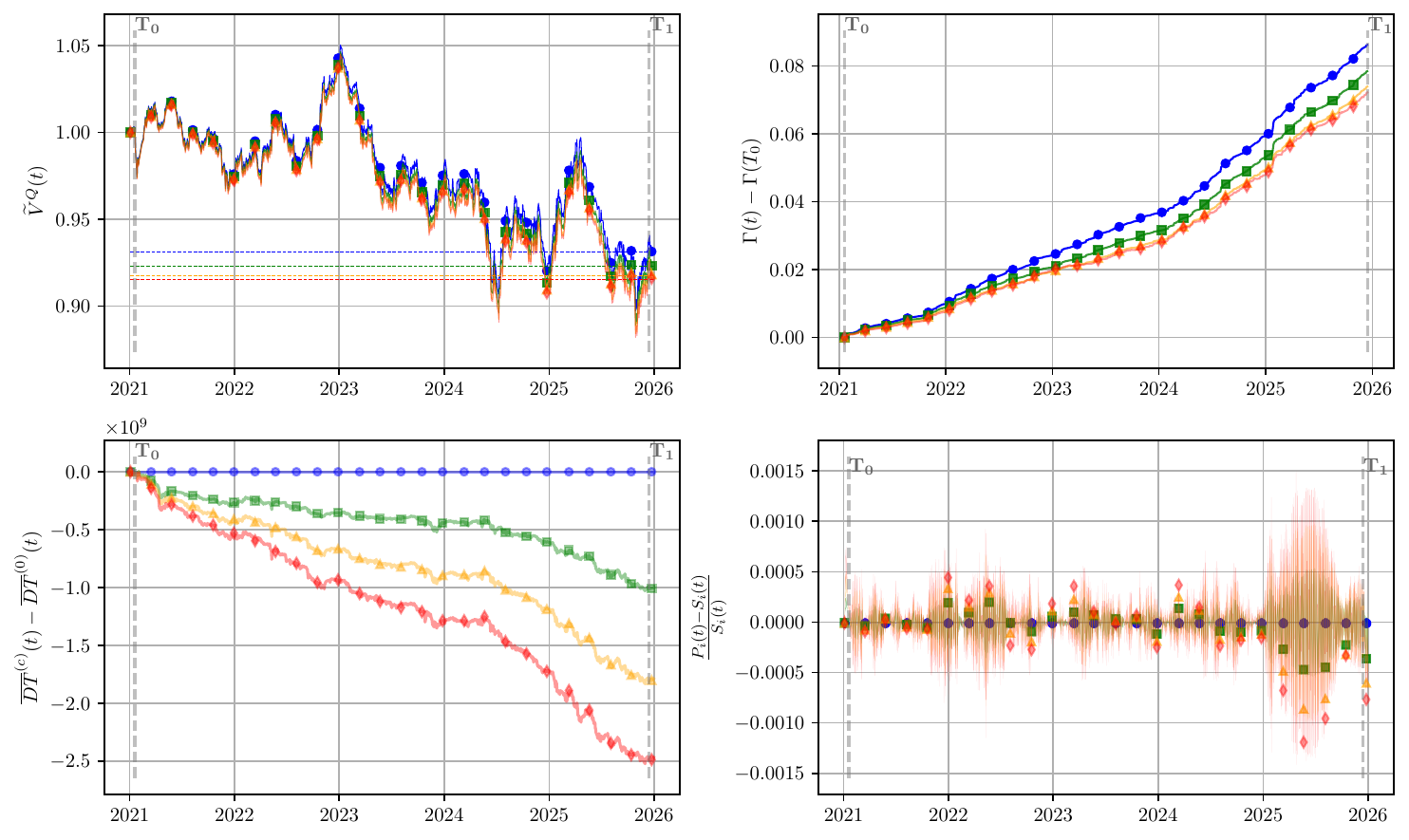}
\caption{2021--2025 quadratic generating function backtest}
    \end{subfigure}
    \vspace{0.25cm}

    \begin{subfigure}{\textwidth}
    \centering
\includegraphics[width=0.75\linewidth]{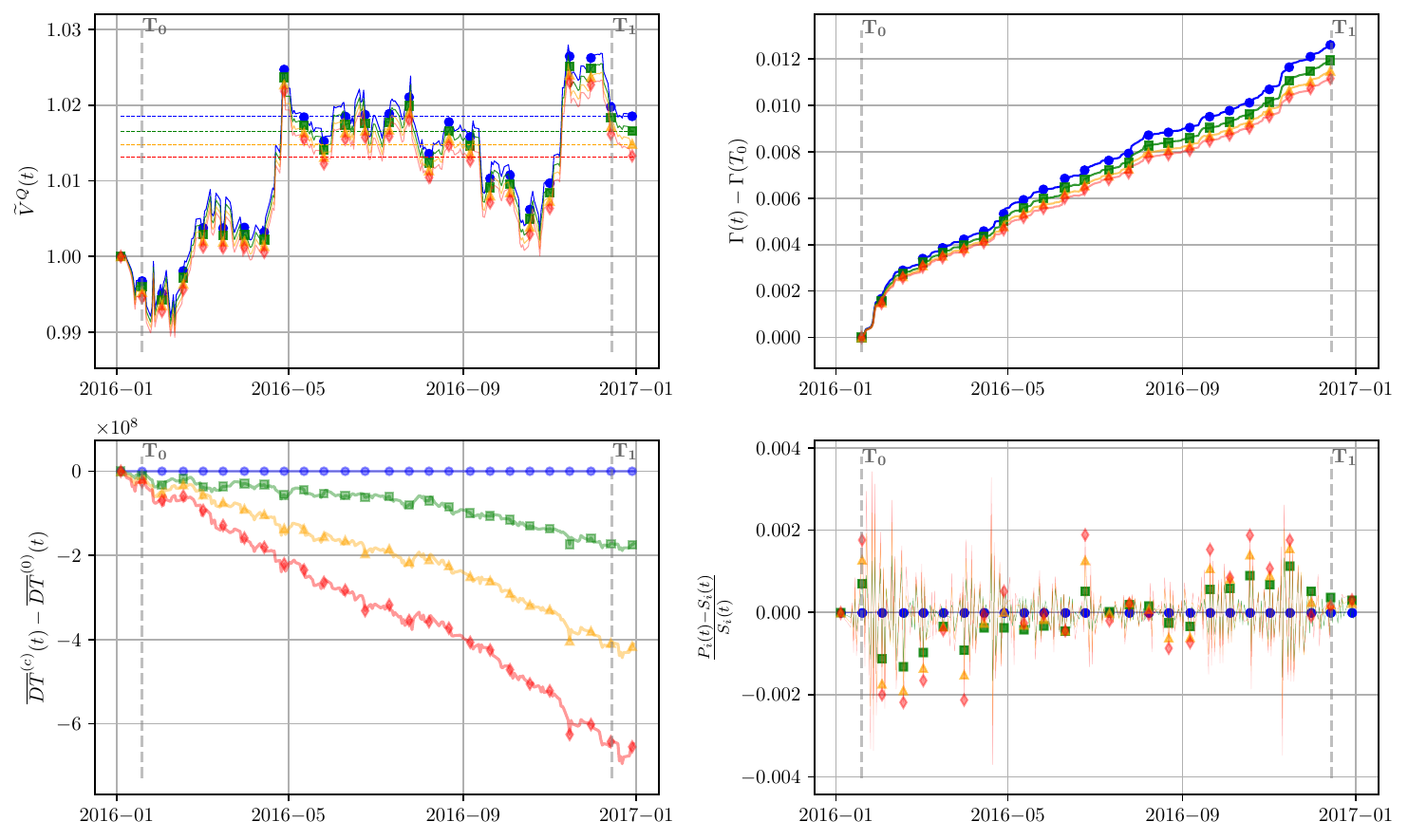}
    \caption{2016 quadratic generating function backtest}
    \end{subfigure}

\includegraphics[width=0.3\linewidth]{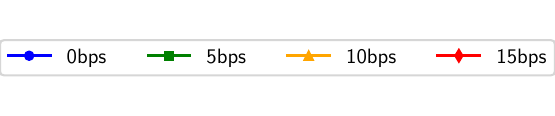}

    \caption{The figures display the performance of the quadratic generating function with $\alpha = 0.8$ and $c \in \{0,5,10,15\}$ in the two respective backtest periods. Depicted is the fundamental relative wealth $\widetilde V^Q(t)$ (top left panel), the $\Gamma$ process in the main trading period, $(\Gamma(t) - \Gamma(T_0); t \in [T_0,T_1])$ (top right panel), the difference in the cumulative absolute dollar volume traded $\overline{\mathrm{DT}}^{(c)}(t) - \overline {\mathrm{DT}}^{(0)}(t)$ (bottom left panel), and the relative impact $\frac{P_i(t)-S_i(t)}{S_i(t)}$ for the stock IBM (bottom right panel).}
    \label{fig:Quadratic}
\end{figure}

\begin{figure}
    \centering
    \begin{subfigure}{\textwidth}
    \centering
\includegraphics[width=0.75\linewidth]{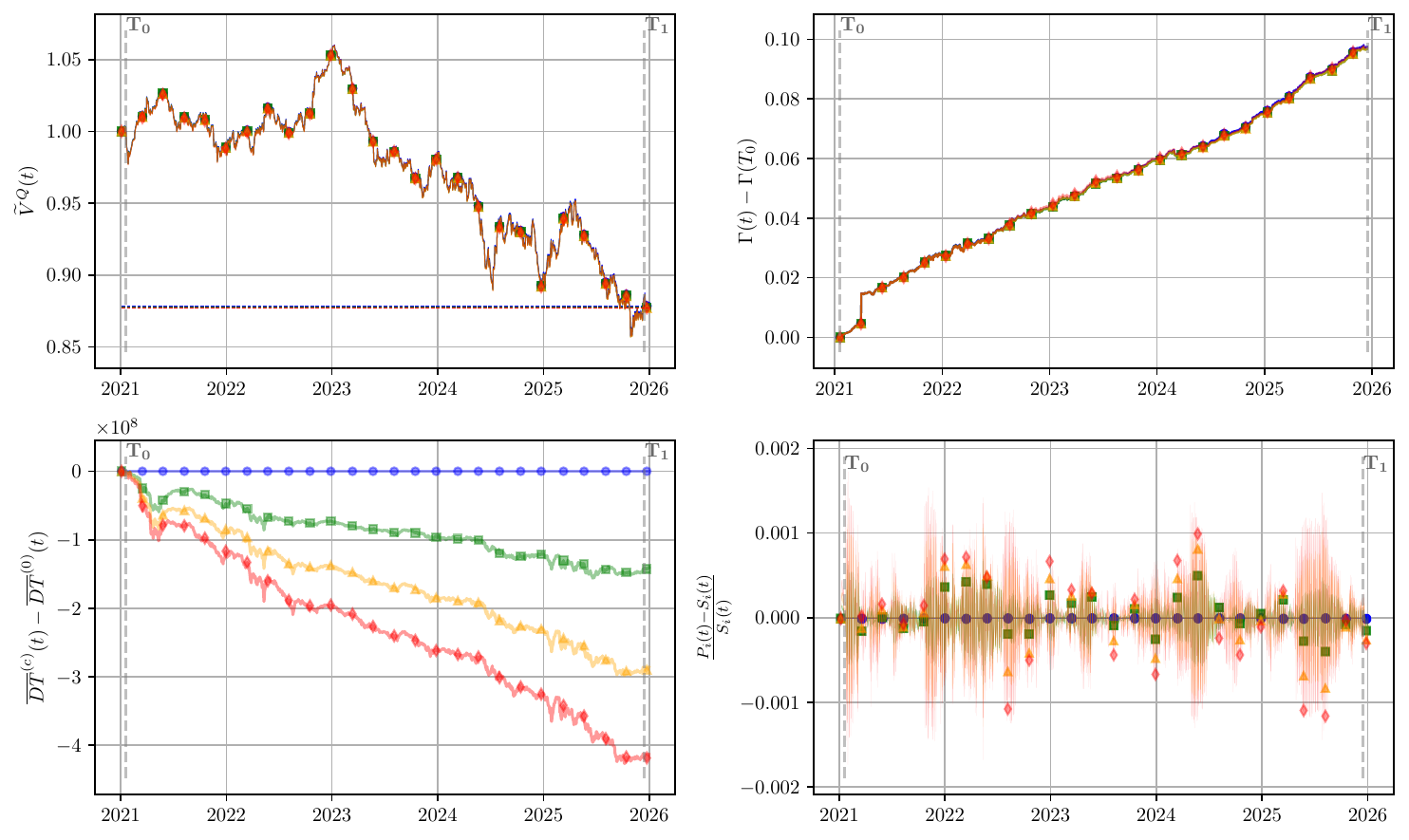}
\caption{2021--2025 entropy generating function backtest}
\label{fig:entropy_2021-2025}
    \end{subfigure}
    \vspace{0.25cm}

    \begin{subfigure}{\textwidth}
    \centering
\includegraphics[width=0.75\linewidth]{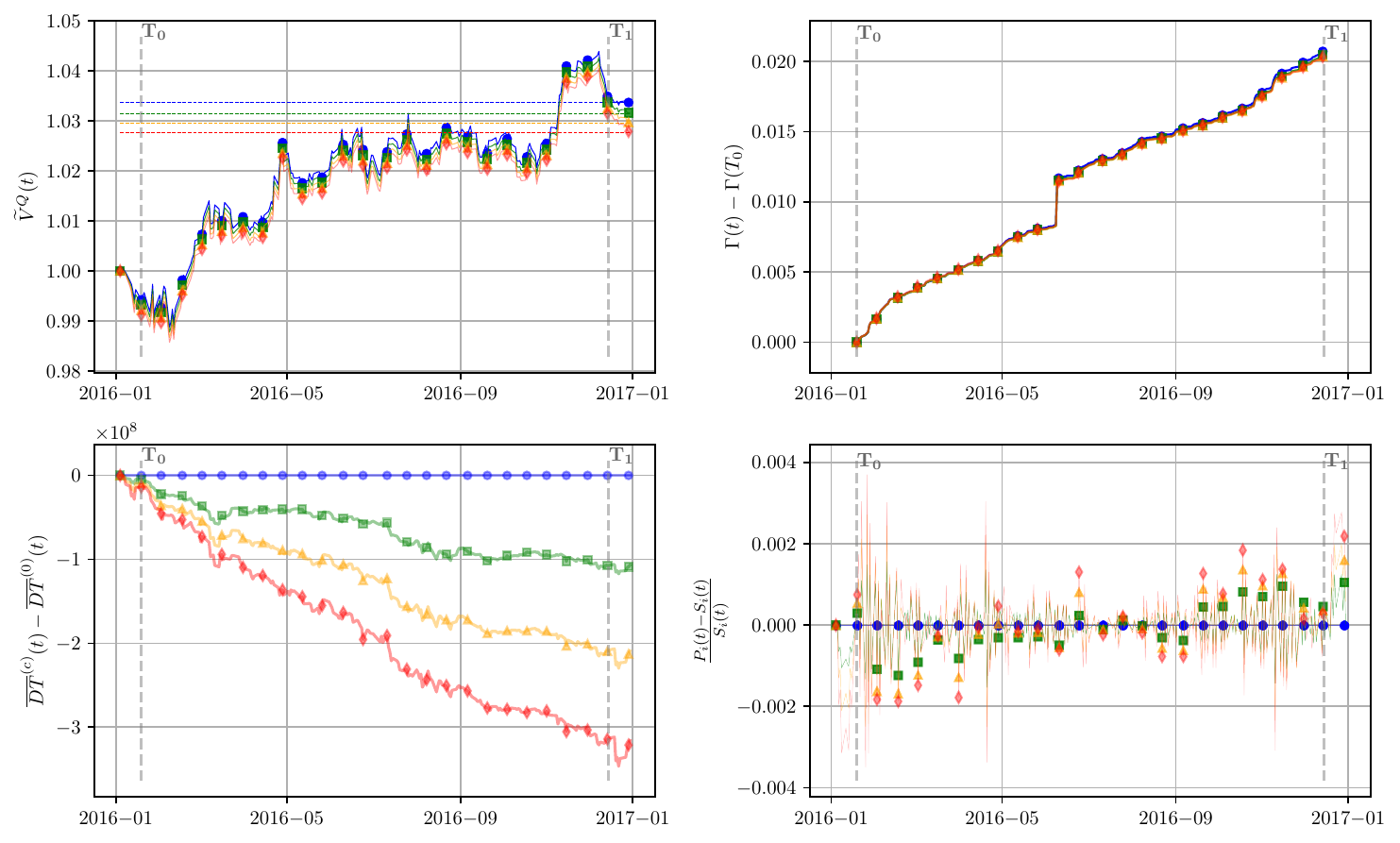}
    \caption{2016 entropy generating function backtest}
    \label{fig:entropy_2016}
    \end{subfigure}

\includegraphics[width=0.3\linewidth]{legend.pdf}

    \caption{The figures display the performance of the entropy generating function with $\alpha = 0.8$ and $c \in \{0,5,10,15\}$ in the two respective backtest periods. Depicted is the fundamental relative wealth $\widetilde V^Q(t)$ (top left panel), the $\Gamma$ process in the main trading period, $(\Gamma(t) - \Gamma(T_0); t \in [T_0,T_1])$ (top right panel), the difference in the cumulative absolute dollar volume traded $\overline{\mathrm{DT}}^{(c)}(t) - \overline {\mathrm{DT}}^{(0)}(t)$ (bottom left panel), and the relative impact $\frac{P_i(t)-S_i(t)}{S_i(t)}$ for the stock IBM (bottom right panel).  }
    \label{fig:Entropy}
\end{figure}

The upper left panel displays the relative fundamental wealth $\widetilde V^Q = \widetilde W^Q/\widetilde W^\Mcal$, defined in Remark~\ref{rem:master_fundamental}, for each strategy and choice of cost parameter $c \in \{0,5,10,15\}$.\footnote{The fundamental relative wealth $\widetilde V^Q$ is displayed to avoid distortion that the accounting wealth $V^Q$ may experience, as discussed in Section~\ref{sec:wealth_process}, although we note that for the experiments done here the relative wealth processes $V^Q$ are nearly indistinguishable to the eye from $\widetilde V^Q$.}  
In all cases, performance is best in the frictionless setting and price impact reduces the achieved wealth. In the case of 2016, where the frictionless portfolios outperformed the market portfolio, even with a cost of $c=15$bps, both the quadratic and entropy generating functions outperformed the market portfolio.
Moreover, since $G(T,\mu(T)) = G(0,\mu(0)) = 0$, courtesy of the fact that $\psi(T) = \psi(0) = 0$, we see that the difference in the final relative performance levels can be attributed to the value of $\Gamma(T)$. The upper right panel plots $\Gamma(t)-\Gamma(T_0)$ for $t \in [T_0,T_1]$, which captures the accumulation of this term during the main trading period where $\partial_t G(t,\cdot) = 0$. We see that, in the case of the quadratic generating function, the gap in the accumulated value of the $\Gamma$ process over the main trading period for different values of $c$ widens as time progresses for both backtest periods. In contrast, for the entropy portfolio, the amount $\Gamma$ accumulates over the main trading period remains in a very tight range, although the frictionless one does achieve a mildly larger accumulation value at time $T_1$ for both backtest periods. For the entropy portfolio, the accumulation period $[0,T_0]$ and, especially, the liquidation period $[T_1,T]$ have a larger effect on the trajectory of $\widetilde V^Q$ for different values of $c$ than the main trading period $[T_0,T_1]$. This is most clearly visible in the top left panel of Figure~\ref{fig:entropy_2016}, where the relative wealth time series experience a wider separation after the time $T_1$. Due to the dominance of the term $\int_{T_1}^\cdot \partial_t G(t,\mu(t))dt$ during this period over the other terms that make up the process $\Gamma$, we omit the liquidation period, and also the initiation period $[0,T_0]$ where $\int_0^\cdot \partial_t G(t,\mu(t))dt$ dominates, from the top right panels (cf.\ the right panel of Figure~\ref{fig:psi}).

Next, we investigate the effect of the price impact framework on the investor's trading activity and on observed prices. To this end, we compute the time series of cumulative total dollar volume traded, measured using the fundamental capitalizations, 
\[\overline{\mathrm{DT}}^{(c)}(t_k) = \sum_{j=1}^k\sum_{i=1}^d \bfs_i(t_j)|\bfq^{(c)}_i(t_j)-\bfq^{(c)}_i(t_{j-1})|, \qquad k=1,\dots,T,\]
where $\bfq_i^{(c)}$ is defined in \eqref{eqn:caps_holdings}, with the superscript $(c)$ indicating the dependence of the holdings process on the impact parameter $c$. The bottom left panels display  $\overline{\mathrm{DT}}^{(c)}(t) - \overline{\mathrm{DT}}^{(0)}(t)$, the difference between this quantity and its frictionless counterpart, for $c \in \{0,5,10,15\}$. We see, across both backtest periods and both generating functions, that the curves are decreasing with $c$, indicating that the additively generated portfolios trade less as price impact increases. 

To understand why this happens, we examine the key terms on the right-hand side of \eqref{eqn:frictionless_additive_holdings} specifying the holdings during the main trading period $[T_0,T_1]$. The key asset-dependent term influencing asset $i$'s holdings is $\partial_i G(t,\mu(t)) = \nu \partial_i H(\mu(t))$. The concave and additive structure of the generating function implies that in both cases the map $\mu_i \mapsto \partial_i H(\mu)$ is decreasing. Hence, heuristically, an increase in the market weight of asset $i$ typically leads to a sell order in asset $i$, which partially offsets the increase through price impact. Since the framework developed in this paper incorporates price impact when determining the investor's holdings process (see Remark~\ref{rem:target_holdings}), this results in a smaller prescribed trade relative to the frictionless setting, and explains the ordering of the curves in the lower left panels of Figures~\ref{fig:Quadratic} and \ref{fig:Entropy}. Moreover, since smaller securities tend to have higher price impact as showcased in Figure~\ref{fig:lambdas}, this affects small stocks more than large stocks. As such, in the absence of price impact the investor would place relatively large orders in small securities, but because of price impact this is penalized more heavily than for large stocks and the investor's realized trades in small securities tend to decrease.
Consequently, we see that the framework developed here, which uses the impacted prices $P$ to determine the target holdings, systematically leads to a reduction in trading volume when price impact is present.

Turning to prices, the bottom right panel in both subfigures of Figures~\ref{fig:Quadratic} and \ref{fig:Entropy} depicts the relative impact,
\[\frac{P_i(t) - S_i(t)}{S_i(t)} = \frac{\bfp_i(t) - \bfs_i(t)}{\bfs_i(t)}\]
of IBM, which serves as a representative security. As expected, the impact tends to increase with $c$, and when $c=0$ there is no price impact. Moreover, the sign of the investor's price impact can change even over short time periods. This, in line with the previous discussion regarding dollar volume traded, is due to the fact that the additively generated strategies typically prescribe buying more units of an asset when its price falls, while selling some of the holdings when its price rises. As such, even on consecutive trading days, the strategy may require the investor to switch their trading direction in any given asset, consistent with what is observed in the bottom right panels of Figures~\ref{fig:Quadratic} and \ref{fig:Entropy}. This additionally highlights the importance of considering semimartingale strategies $Q$ in the theoretical sections of this paper, so that the holdings process $Q$ can exhibit the same type of diffusive fluctuations as prices.

\section{Conclusion} \label{sec:conclusion}
We have introduced a framework for stochastic portfolio theory that is able to incorporate an investor's price impact. In this setting we derived the relative wealth equation \eqref{eqn:relative_wealth} with respect to the market portfolio and established a master formula for additive functional generation of trading strategies given by \eqref{eqn:master_formula} and \eqref{eqn:Gamma}. Unlike the frictionless case, care is needed to ensure that the trading strategy gives rise to a well-defined market with semimartingale prices; this is the content of Theorem~\ref{thm:main}, which establishes well-posedness of the SDE \eqref{eqn:SDE}. As an application of the master formula, we developed results on relative arbitrage in the price impact setting.  Theorems~\ref{thm:relarb_implicit} and \ref{thm:relarb_explicit} provide versions under implicit and explicit assumptions on the model inputs respectively. Finally, we conducted an empirical backtest study of the model on real data to validate and complement the theoretical results. We find that smaller securities tend to have higher impact parameters and, since functionally generated trading strategies constantly transact, this contributes to reduced performance of these strategies relative to the frictionless setting.

Our framework allows for many core tenets of SPT to be analyzed in a market with price impact, but extending certain results from frictionless SPT to the price impact setting remains open. For instance, rank-based generating functions are widely studied in SPT and allow one to incorporate the effects of \emph{leakage} into the analysis, which affects portfolio performance when stocks enter and exit the investment universe. As discussed in Remark~\ref{rem:rank-based}, rank-based generating functions lead to holdings processes $Q$ that are not semimartingales and, as such, are not covered by the results of this paper. Additionally, as highlighted in Section~\ref{sec:multiplicative}, a master formula decomposition, where the right-hand side is independent of the relative wealth process itself, does not seem to hold for multiplicatively generated portfolios. As such, extending the analysis to incorporate other types of functional generation, including rank-based functions and multiplicative generation, is an important topic for future research. Further open problems include optimal portfolio selection in the SPT price impact setting, the study of trading's effect on the capital distribution curve, the developing of equilibrium models for price formation, and the establishment of robustness guarantees for long-term performance of functionally generated portfolios.

\paragraph{Acknowledgments.} The author acknowledges Martin Larsson, Paul Mangers Bastian, Johannes Muhle-Karbe, Johannes Ruf and two anonymous referees for providing numerous comments that improved the manuscript.

\appendix

\section{Wealth equation derivations} \label{app:wealth_proofs}
The purpose of this section is to prove Proposition~\ref{thm:wealth} and Theorem~\ref{thm:relative_wealth} establishing the wealth and relative wealth equations respectively. For the benefit of the reader, we start with a heuristic derivation of the wealth equation \eqref{eqn:wealth_smooth}, for absolutely continuous strategies, which in the main text was our starting point for deriving the general wealth equation \eqref{eqn:wealth}.\footnote{We refer to \cite[Chapter~2.2]{webster.23} for a more detailed derivation.}  To this end, we look at a discrete time trading increment from time $t-\Delta t$ to $t$. We will write $\Delta Q(t)$ for $Q(t) - Q(t-\Delta t)$ and use analogous notation for $\Delta P(t)$ and $\Delta W^Q(t)$. At time $t-\Delta t$ the investor starts off with holdings $Q(t-\Delta t)$ and, by time $t$, ends up with the holdings $Q(t)$ after rebalancing. The self-financing condition stipulates that the rebalancing trades do not require an inflow or outflow of cash. In our current setting of fully invested trading strategies this leads to the condition $\Delta Q(t)^\top \widetilde P(t) = 0$, where $\widetilde P(t)$ is the effective execution price, which is typically different from $P(t)$ due to price impact. The self-financing condition can be equivalently rewritten as
\begin{equation} \label{eqn:discrete_self_financing}
    0 = P(t-\Delta t)^\top \Delta Q(t) + \Delta P(t)^\top \Delta Q(t) - \Delta Q(t)^\top \big( P(t)-\widetilde P(t)\big).
\end{equation}
Since $Q$ is absolutely continuous, we have that $\Delta Q(t) = \dot Q(t-\Delta t)\Delta t$, from which we see that the final term in \eqref{eqn:discrete_self_financing} is of higher order than the other two, since $ P(t) - \widetilde P(t) = O(\Delta Q(t)) = O(\Delta t)$. As such, in the limit as $\Delta t \to 0$ we recover the usual self-financing condition 
\begin{equation} \label{eqn:cont_self_financing}
    0 = P(t)^\top dQ(t) + d[P,Q](t).
\end{equation}
We can now derive \eqref{eqn:wealth} for absolutely continuous strategies,
\begin{align*} 
    \Delta W^Q(t) &  = Q(t)^\top P(t) - Q(t-\Delta t)^\top P(t-\Delta t)  \\
    & = Q(t-\Delta t)^\top \Delta P(t) + P(t-\Delta t)^\top \Delta Q(t) + \Delta Q(t)^\top \Delta P(t). 
\end{align*} 
Sending $\Delta t \to 0$ and using \eqref{eqn:cont_self_financing} we obtain $dW^Q(t) = Q(t)^\top dP(t)$, which is precisely \eqref{eqn:wealth}.

Our next goal is to prove the wealth equation \eqref{eqn:wealth} for semimartingale strategies.
We start by establishing the continuous extension claim for \eqref{eqn:wealth_smooth} in the following lemma. \begin{lem} \label{lem:wealth_approx}
    Let $Q$ be a continuous semimartingale.  Then there exists a sequence of predictable absolutely continuous holdings processes $(Q^n;n \in \N)$ such that the pointwise limit $W^Q(T) := \lim_{n \to \infty} W^{Q^n}(T)$ exists almost surely for every $T \geq 0$ and satisfies 
    \begin{equation} \label{eqn:wealth_predictable}  
    \begin{aligned}
    W^Q(T) & = w + \int_0^T Q(t)^\top dS(t) + I(T)^\top Q(T)-I(0)^\top Q(0) - \sum_{i=1}^d \bigg(\frac{H_i(T,J_i(T))}{K_i(
    T,T)} - \frac{H_i(0,J_i(0))}{K_i(0,0)}\bigg) \\
    &  + \sum_{i=1}^d\int_0^T\bigg(\frac{\partial_t H_i(t,J_i(t))}{K_i(t,t)} - \frac{\partial_t K_i(t,t) + \partial_sK_i(t,t)}{K_i^2(t,t)}H_i(t,J_i(t)) +\frac{h_i(t,J_i(t))}{K_i(t,t)}b^J_i(t,Q_i^{[0,t]})\bigg)dt.
\end{aligned}
\end{equation}
\end{lem}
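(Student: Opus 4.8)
\emph{Proof plan.} The plan is to take as approximating strategies the mollifications
$Q^n(t) := n\int_{(t-1/n)\vee 0}^{t} Q(s)\,ds$, where we extend $Q$ to negative times by $Q(0)$ so that $Q^n(0)=Q(0)$. Each $Q^n$ is continuous and adapted, hence predictable, and absolutely continuous in $t$ with locally bounded density $\dot Q^n(t)=n\big(Q(t)-Q((t-1/n)\vee 0)\big)$. For such a strategy the self-financing wealth is $W^{Q^n}(T)=w+\int_0^T Q^n(t)^\top dP^n(t)$ with $P^n=S+I^n$, $I^n_i(t)=h_i(t,J^n_i(t))$ and $J^n_i$ built from $Q^n$ via \eqref{eqn:impact}. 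I would split this as $\int_0^T Q^n(t)^\top dS(t)+\int_0^T Q^n(t)^\top dI^n(t)$: the first summand is a genuine stochastic integral that stays untouched, while in the second one both $Q^n_i$ and $J^n_i$ are absolutely continuous, so the integration-by-parts and chain-rule steps of Section~\ref{sec:market} that produce \eqref{eqn:wealth_smooth} are valid verbatim as ordinary pathwise calculus. Integrating \eqref{eqn:wealth_smooth} then gives exactly the identity \eqref{eqn:wealth_predictable} with $(Q^n,J^n,I^n)$ in place of $(Q,J,I)$.

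The next step is to let $n\to\infty$. Since $Q$ is continuous, $\sup_{t\le T}|Q^n(t)-Q(t)|$ is dominated by the modulus of continuity of $Q$ on $[0,T+1]$ evaluated at $1/n$, whence $Q^n\to Q$ uniformly on compacts almost surely, and moreover $\sup_{t\le T}|Q^n(t)|\le\sup_{t\le T+1}|Q(t)|$ uniformly in $n$. Feeding this into the representation \eqref{eqn:J_dynamics}, which exhibits $J^n_i$ as a functional of $Q^n_i$ that is continuous for the uniform topology (the term $\int_0^t\partial_sK_i(t,s)Q^n_i(s)\,ds$ is controlled using $\partial_sK_i\ge 0$ and $\overline K_i<\infty$ from Assumption~\ref{ass:impact_inputs}), yields $J^n_i\to J_i$ uniformly on compacts a.s.; similarly $b^J_i(t,Q^n_{[0,t]})\to b^J_i(t,Q_{[0,t]})$ uniformly on compacts via the explicit formula \eqref{eqn:bJ}. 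The $C^{1,2}$ regularity of $h_i$ (with $\partial_xh_i,\partial_{xx}h_i$ bounded) then gives uniform-on-compacts a.s.\ convergence of $I^n_i=h_i(\cdot,J^n_i)$, of $H_i(\cdot,J^n_i)$, and of $\partial_tH_i(\cdot,J^n_i)$ to the corresponding processes built from $J_i$. Consequently the boundary terms in \eqref{eqn:wealth_predictable} converge directly, and the finite-variation time integral converges by dominated convergence (its integrand converges uniformly on $[0,T]$ and is locally bounded there). For the one remaining stochastic term, $\int_0^T Q^n(t)^\top dS(t)\to\int_0^T Q(t)^\top dS(t)$ in probability by the dominated convergence theorem for stochastic integrals, using pointwise convergence $Q^n\to Q$ together with the uniform pathwise bound $|Q^n|\le\sup_{s\le\cdot}|Q(s)|$. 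Passing to a subsequence along which all of the above convergences hold a.s.\ and uniformly on each compact — so that a single exceptional null set serves every $T$ — and relabelling, we obtain that $W^Q(T):=\lim_n W^{Q^n}(T)$ exists almost surely for all $T\ge 0$ and coincides with the right-hand side of \eqref{eqn:wealth_predictable}.

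The main obstacle is precisely the passage to the limit in the stochastic integral $\int_0^T Q^n{}^\top dS$: because the mollified $Q^n$ is of finite variation it has vanishing covariation with $S$, so semimartingale-topology arguments fail and one must rely on the stochastic dominated convergence theorem. This is the reason for using a moving-average regularization rather than, say, a piecewise-constant one: it simultaneously provides pointwise (indeed uniform) convergence $Q^n\to Q$ and the uniform pathwise domination needed to apply that theorem. Everything else is routine verification that the nonlinear coefficients $h_i$, $H_i$, $\partial_tH_i$ and the drift functional $b^J_i$ are continuous in the uniform topology on paths, which follows immediately from Assumption~\ref{ass:impact_inputs} and the explicit expressions \eqref{eqn:J_dynamics} and \eqref{eqn:bJ}.
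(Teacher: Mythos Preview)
Your proof is correct and follows essentially the same approach as the paper: both mollify $Q$ against a causal kernel to produce predictable absolutely continuous approximations, apply the smooth-case wealth identity \eqref{eqn:wealth_smooth}, and pass to the limit using the stochastic dominated convergence theorem for the $\int Q^n\,dS$ term together with pathwise continuity arguments for the remaining terms. The paper employs a smooth compactly supported kernel in place of your box kernel and is somewhat terser about the limiting step, but the structure of the argument is identical (and your explicit extraction of a subsequence to upgrade convergence in probability to almost-sure convergence for every $T$ is, if anything, slightly more careful).
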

\begin{proof} Fix a $\psi \in C_c^\infty(\R)$, which is a nonnegative function satisfying  $\int_{-\infty}^\infty \psi(x)dx = 1$ and $\mathrm{supp}(\psi) = [0,1]$. Put $\psi^n(t) = \frac{n\psi(nt)}{\int_0^{nt} \psi(s)ds}$ for $n \in \N$ and define $Q^n(t) = \int_0^t Q(s)\psi^n(t-s)ds$.
Clearly, $Q^n$ is predictable, since it only depends on $Q(s)$ for $s \leq t$. Additionally, it is absolutely continuous so that the wealth process corresponding to that strategy is unambiguously given by \eqref{eqn:wealth_smooth}. 
By basic properties of mollifiers we have almost surely that $Q^n(t) \to Q(t)$ pointwise and in $L^1_{\mathrm{Loc}}(\R_+;\R^d)$ as $n \to \infty$. Moreover, from \eqref{eqn:J_dynamics} we see that almost surely $J^n(t) \to J(t)$ in $L^1_{\mathrm{Loc}}(\R_+;\R^d)$, where $J^n$ is the impact state process corresponding to $Q^n$. Finally, note that since $|Q^n(T)| \leq \sup_{t \leq T}|Q(t)|$ we have by the stochastic dominated convergence theorem \cite[Theorem~IV.2.12]{revuz1999continuous} that $\int_0^T Q^n(t)^\top dS(t) \to \int_0^T Q(t)^\top dS(t)$ as $n \to \infty$. As such, we see that sending $n \to \infty$ in \eqref{eqn:wealth_smooth} yields \eqref{eqn:wealth_predictable}.
\end{proof}
This process $W^Q$ is the investor's wealth process when using the trading strategy $Q$. 
We are now ready to prove Proposition~\ref{thm:wealth}.

\begin{proof}[Proof of Proposition~\ref{thm:wealth}] We will reformulate the right-hand side of \eqref{eqn:wealth_predictable}. Using It\^o's formula on $H_i(t,J_i(t))/K_i(t,t)$ we get
\begin{align*}
	\frac{H_i(T,J_i(T))}{K_i(T,T)} - 	\frac{H_i(0,J_i(0))}{K_i(0,0)}  & = \int_{0}^T\frac{h_i(t,J_i(t))}{K_i(t,t)}dJ_i(t) + \int_0^T\frac{\partial_x h_i(t,J_i(t))}{2K_i(t,t)}d[J_i](t) \\
	& + \int_0^T\bigg(\frac{\partial_t H_i(t,J_i(t))}{K_i(t,t)} - \frac{(\partial_t K_i(t,t) + \partial_s K_i(t,t))H_i(t,J_i(t))}{K_i(t,t)^2}\bigg)dt. 
\end{align*}
    Substituting this into \eqref{eqn:wealth_predictable} yields
\begin{equation} \label{eqn:wealth_intermediate} 
\begin{split} \raisetag{1cm}
	W^Q(T) - w  & = \int_0^T Q(t)^\top dS(t) + I(T)^\top Q(T) - I(0)^\top Q(0) + \sum_{i=1}^d\int_0^T\frac{h_i(t,J_i(t))}{K_i(t,t)}b^J_i(t,Q_i^{[0,t]})dt \\
    & \hspace{0.5cm}    -\sum_{i=1}^d \int_{0}^T \frac{h_i(t,J_i(t))}{K_i(t,t)}dJ_i(t) -\sum_{i=1}^d\int_0^T \frac{\partial_x h_i(t,J_i(t))}{2K_i(t,t)}d[J_i](t). 
\end{split} 
\end{equation}
Now recalling the relationship \eqref{eqn:dQ} and that $I_i(t) = h_i(t,J_i(t))$, we obtain 
\[\sum_{i=1}^d\int_0^T\frac{h_i(t,J_i(t))}{K_i(t,t)}b^J_i(t,Q_i^{[0,t]})dt -\sum_{i=1}^d \int_{0}^T \frac{h_i(t,J_i(t))}{K_i(t,t)}dJ_i(t) = -\sum_{i=1}^d \int_{0}^T I_i(t)dQ_i(t) .\]
Moreover, we have that 
\[\int_0^T\frac{\partial_x h_i(t,J_i(t))}{2K_i(t,t)}d[J_i](t) = \frac{1}{2}\int_0^T\partial_x h_i(t,J_i(t))d[J_i,Q_i](t) = \frac{1}{2}[I_i,Q_i](T).\]
Hence, substituting these identities into \eqref{eqn:wealth_intermediate} and integrating $I(T)^\top Q(T)$ by parts gives
\begin{align*}
	W^Q(T) - w  = &  \int_0^T Q(t)^\top dS(t) + I(T)^\top Q(T)-I(0)^\top Q(0) -\sum_{i=1}^d \int_{0}^T I_i(t)dQ_i(t) - \frac{1}{2}\sum_{i=1}^d[I_i,Q_i](T)  \\
	 = & \int_0^T Q(t)^\top dS(t)  + \sum_{i=1}^d\int_{0}^TQ_i(t) dI_i(t) + \sum_{i=1}^d [I_i,Q_i](T) - \frac{1}{2}\sum_{i=1}^d [I_i,Q_i](T)  \\
	 = & \int_{0}^T Q(t)^\top dP(t) + \frac{1}{2}\sum_{i=1}^d [I_i,Q_i](T).  
\end{align*}
This completes the proof.
\end{proof}

Next, we establish the relative wealth equation \eqref{eqn:relative_wealth}.
\begin{proof}[Proof of Theorem~\ref{thm:relative_wealth}] We start by noting the following three relationships,
\begin{equation} \label{eqn:numeraire_identities}
\begin{aligned}
    \frac{dW^\mathcal{M}(t)}{W^\mathcal{M}(t)} & = \frac{d\overline \bfp(t)}{\overline \bfp(t)}, \\
    \frac{d\mu_i(t)}{\mu_i(t)} & = \frac{d\bfp_i(t)}{\bfp_i(t)} -  \frac{d\overline \bfp(t)}{\overline \bfp(t)} + \frac{d[\overline \bfp](t)}{\overline \bfp^2(t)} - \frac{d[\bfp_i,\overline \bfp](t)}{ \bfp_i(t)\overline \bfp(t)}, &    i=1,\dots,d, \\
    [W^Q,W^\mathcal{M}](T) & = \int_0^T\sum_{i=1}^d \frac{w}{N_i\overline \bfp(0)}Q_i(t)d[\bfp_i,\overline \bfp](t), &    i=1,\dots,d.
\end{aligned} 
\end{equation}
 These three identities follow directly from the wealth equations \eqref{eqn:wealth} and \eqref{eqn:market_wealth}, together with It\^o's quotient rule applied to compute the dynamics of $\mu$. We now apply It\^o's formula to the quotient $W^Q/W^{\mathcal{M}}$ to get
\begin{align*}
    dV^Q(t) & = V^Q(t)\bigg(\frac{dW^Q(t)}{W^Q(t)} - \frac{dW^\mathcal{M}(t)}{W^\mathcal{M}(t)} + \frac{d[W^\mathcal{M}](t)}{(W^\mathcal{M}(t))^2} - \frac{d[W^Q,W^\mathcal{M}](t)}{W^Q(t)W^\mathcal{M}(t)} \bigg) 
    \\
    & = \frac{1}{W^\mathcal{M}(t)}\bigg(dW^Q(t)- W^Q(t)\frac{dW^\mathcal{M}(t)}{W^\mathcal{M}(t)} + W^Q(t)\frac{d[W^\mathcal{M}](t)}{(W^\mathcal{M}(t))^2} - \frac{d[W^Q,W^\mathcal{M}](t)}{W^\mathcal{M}(t)}\bigg).
\end{align*}   
Next, we use \eqref{eqn:numeraire_identities}, \eqref{eqn:wealth} and the fact that $W^Q(t) = Q(t)^\top P(t) = \sum_{i=1}^d \frac{Q_i(t)}{N_i}\bfp_i(t)$ to obtain
\begin{align*} 
V^Q(T) &= 1+ \int_0^T\frac{\overline \bfp(0)}{w\overline \bfp(t)}\bigg( \sum_{i=1}^d Q_i(t)dP_i(t) - \sum_{i=1}^d\frac{Q_i(t)}{N_i}\bfp_i(t)\bigg(\frac{d\overline \bfp(t)}{\overline \bfp(t)} -\frac{d[\overline \bfp](t)}{\overline \bfp^2(t)}\bigg)\\
& \hspace{3cm}- \sum_{i=1}^d \frac{Q_i(t)}{N_i}\frac{d[\bfp_i,\overline \bfp](t)}{\overline \bfp(t)}\bigg) + \frac{1}{2}\sum_{i=1}^d\int_0^T\frac{\overline \bfp(0)}{w\overline \bfp(t)}d[I_i,Q_i](t) \\
& = 1 + \int_0^T \sum_{i=1}^d \frac{\bfp_i(t)}{\overline \bfp(t)}\frac{\overline \bfp(0)Q_i(t)}{wN_i}\bigg(\frac{d\bfp_i(t)}{\bfp_i(t)} -  \frac{d\overline \bfp(t)}{\overline \bfp(t)} + \frac{d[\overline \bfp](t)}{\overline \bfp^2(t)} - \frac{d[\bfp_i,\overline \bfp](t)}{ \bfp_i(t)\overline \bfp(t)}\bigg) \\
& \hspace{1.9cm} + \frac{1}{2}\sum_{i=1}^d\int_0^T\frac{\overline \bfp(0)}{w\overline \bfp(t)}d[I_i,Q_i](t)\\
& = 1 + \int_0^T\sum_{i=1}^d\mu_i(t)\frac{\overline \bfp(0)Q_i(t)}{wN_i}\bigg(\frac{d\mu_i(t)}{\mu_i(t)}\bigg) + \frac{1}{2}\sum_{i=1}^d\int_0^T\frac{\overline \bfp(0)}{w\overline \bfp(t)}d[I_i,Q_i](t) \\
& = 1 +\int_0^T \sum_{i=1}^d \frac{\overline \bfp(0)Q_i(t)}{wN_i}d\mu_i(t) +  \frac{1}{2}\sum_{i=1}^d\int_0^T\frac{\overline \bfp(0)}{w\overline \bfp(t)}d[I_i,Q_i](t).
\end{align*} 
In the second equality we factored $\bfp_i(t) Q_i(t)/N_i$ and in the third we used that $\mu_i = \bfp_i/\overline \bfp$ and substituted the dynamics \eqref{eqn:numeraire_identities} for the market weights $\mu_i$. This is precisely \eqref{eqn:relative_wealth}.
\end{proof}

\section{Proofs for Section~\ref{sec:functionally_generated}}
\subsection{Proof of Proposition~\ref{prop:AP}} \label{sec:proof_AP}
\begin{proof} 
To simplify the exposition, we set for $t \geq 0$, $\mu \in \Delta^{d-1}_+$,
\[\Xi(t,\mu) =-\mathrm{diag}(N)\overleftrightarrow{\nabla^2 G}^{\mu}(t,\mu)\mathrm{diag}(N), \quad  C(t,\mu,J) = \frac{w}{\overline \bfp(p^0)} \mathrm{diag}\Big(\partial_x h(t,J) \circ K(t,t)\Big)\Xi(t,\mu)\]
so that $A^P(t,p,J) = I_{d \times d} + \overline \bfp^{-1}(p)C\big(t,\mu(p),J\big)$. It also follows from the representation \eqref{eqn:arrowM} that,
for any $\varphi\in \R^d$,
    \begin{equation} \label{eqn:arrowM_PSD}
        \varphi^\top \overleftrightarrow{\nabla^2 G}^{\mu}(t,\mu)\varphi =  (\varphi - \overline \varphi\mu)^\top \nabla^2 G(t,\mu)(\varphi - \overline\varphi\mu),
    \end{equation}
    where $\overline \varphi = \varphi^\top {\bf 1}_d$.
    
    Now let $(t,p,J) \in [0,\infty) \times (0,\infty)^d \times \R^d$ be given and assume that $G(t,\mu(p))$ satisfies \eqref{eqn:quadratic_form}. From \eqref{eqn:arrowM_PSD} we deduce that $\overleftrightarrow{\nabla^2 G}^{\mu(p)}(t,\mu(p))$ is a negative semidefinite matrix, since $(\varphi - \overline \varphi \mu)^\top {\bf 1}_d = 0$ for any $\varphi \in \R^d$. It then follows that $\Xi(t,\mu(p))$ is a positive semidefinite matrix. Since the product of two positive semidefinite matrices has nonnegative eigenvalues \cite[Corollary~7.6.2]{horn2013matrix} we have that $C(t,\mu(p),J)$ has nonnegative eigenvalues. From here it follows that all of the eigenvalues of $A^P(t,p,J)$ are real and not smaller than one. Indeed, $(\lambda,u)$ is an eigenvalue-eigenvector pair for $A^P(t,p,J)$ if and only if 
    \[A^P(t,p,J) u = \lambda u \iff C(t,\mu(p),J)u = \overline \bfp(p)(\lambda - 1)u,\]
    establishing the claim. 

    For the converse we assume that $G(t^*,\mu^*)$ does not satisfy \eqref{eqn:quadratic_form} for some $t^* \geq 0$ and $\mu^* \in \Delta^{d-1}_+$. Then there exists $x \in \R^d$ with $x^\top {\bf 1}_d = 0$ satisfying $x^\top \nabla^2 G(t^*,\mu^*)x > 0$. Setting $\varphi = x$ in \eqref{eqn:arrowM_PSD} we see that $\overleftrightarrow{\nabla^2 G}^{\mu^*}(t^*,\mu^*)$ fails to be negative semidefinite. We now establish that  $C(t^*,\mu^*,J^*)$ has a negative eigenvalue. Let $D = \mathrm{diag}(\partial_x h(t^*,J^*) \circ K(t^*,t^*))$ and note that $D$ is a diagonal matrix with strictly positive entries since $\partial_x h_i(t^*,J_i^*) > 0$ for every $i$ by assumption. Hence $C(t^*,\mu^*,J^*) = D^{1/2}(D^{1/2}\frac{w}{\overline \bfp(p^0)}\Xi(t^*,\mu^*)D^{1/2})D^{-1/2}$ is similar to the matrix $\widetilde \Xi(t^*,\mu^*) := D^{1/2}\Xi(t^*,\mu^*) D^{1/2}$ and therefore shares the same eigenvalues. But $\widetilde \Xi(t^*,\mu^*)$ is not positive semidefinite because $\Xi(t^*,\mu^*)$ is not. Indeed, taking $x \in \R^d$ such that $x^\top \Xi(t^*,\mu^*)x < 0$ and setting $y = D^{-1/2}x$ we have that $y^\top \widetilde \Xi(t^*,\mu^*) y < 0$. Hence,  $\widetilde \Xi(t^*,\mu^*)$ and, consequently, $C(t^*,\mu^*,J^*)$ must have a negative eigenvalue $\lambda^* < 0$. We denote by $u^*$ the corresponding eigenvector. 
    
    We are now ready to establish noninvertibility of $A^P$. To this end set $p^*_i = -\lambda^*\mu^*_i/N_i > 0$ so that $\overline \bfp(p^*) = -\lambda^*>0$ and $\mu(p^*) = \mu^*$. We have that
    \[A^P(t^*,p^*,J^*)u^* = u^* + \frac{1}{-\lambda^*} C(t^*,\mu^*,J^*)u^* = u^*-\frac{1}{\lambda^*}(\lambda^*u^*) = 0 \]
    proving the claim.
\end{proof}
\subsection{Deriving the SDE \texorpdfstring{\eqref{eqn:SDE}}{}}  \label{app:SDE_derivation}
We start by formally deriving \eqref{eqn:SDE}. To carry this out we temporarily suppose that $P, Q$ and $\mu$ are semimartingales so that equation \eqref{eqn:frictionless_additive_holdings}, with $\Gamma$ given by \eqref{eqn:Gamma} as derived in Section~\ref{sec:heuristic} holds. To simplify the exposition, in this section for any semimartingale $X$ we will write $X = X(0) + X^{FV} + X^M$ for its canonical semimartingale decomposition, where $X^{FV}$ is the finite variation part and $X^M$ is the local martingale part. 

We first derive $Q^M$. Applying It\^o's formula to the function $\partial_i G(t,\mu) + G(t,\mu)- \nabla G(t,\mu)^\top \mu$ appearing in the right-hand side of \eqref{eqn:frictionless_additive_holdings} and collecting the local martingale terms gives
\[dQ_i^M(t) = \frac{wN_i}{\overline \bfp(0)}\sum_{k=1}^d \Big(\partial_{ik}G(t,\mu(t))-\sum_{\ell=1}^d \partial_{k\ell} G(t,\mu(t))\mu_\ell(t)\Big)d\mu_k^M(t).\]
    Recalling the dynamics \eqref{eqn:numeraire_identities} for $\mu$ in terms of $P$, then gives
\begin{align} dQ^M_i(t)  & = \sum_{j=1}^d\frac{wN_iN_j}{\overline \bfp(0)\overline \bfp(t)}\sum_{k=1}^d\Big(\partial_{ik}G(t,\mu(t))-\sum_{\ell=1}^d \partial_{k\ell}G(t,\mu(t))\mu_\ell(t)\Big)(\delta_{jk}-\mu_k(t))dP^M_j(t) \nonumber \\
& = \sum_{j=1}^d A^Q_{ij}(t,P(t))dP^M_j(t), \label{eqn:Q^M}
\end{align}
where we recall that $A^Q$ is given by \eqref{eqn:AQ}.
Next, recalling \eqref{eqn:price} and \eqref{eqn:dJ} we have that
\[dP^M_i(t) = dS_i^M(t) + \partial_x h_i(t,J_i(t))dJ_i^M(t) = dS_i^M(t) + \partial_x h_i(t,J_i(t))K_i(t,t)dQ_i^M(t). \] Substituting in \eqref{eqn:Q^M}, writing in vector form, and collecting the $dP^M(t)$ terms on the left-hand side yields $A^P(t,P(t),J(t))dP^M(t) = dS^M(t)$, which gives
\begin{equation} \label{eqn:P^M}
    dP^M(t) = \beta^P(t,P(t),J(t))dS^M(t),
\end{equation}
where $\beta^P = (A^P)^{-1}$. In particular, by substituting into \eqref{eqn:Q^M}, we obtain the SDE coefficient $\beta^Q$ and by computing the quadratic variation from \eqref{eqn:P^M}, we obtain
\begin{equation} \label{eqn:[P]}
d[P](t) = \beta^P(t,P(t),J(t))d[S](t)\beta^P(t,P(t),J(t))^\top.
\end{equation}

Next, we compute $Q^{FV}$ by applying It\^o's formula to the right-hand side of \eqref{eqn:frictionless_additive_holdings}. In the following, we omit the function evaluations for brevity,
\begin{align*}dQ_i^{FV} & = \frac{wN_i}{\overline \bfp(0)}\bigg(\partial_{it}G dt + \sum_{j=1}^d \partial_{ij}Gd\mu_j^{FV} + \frac{1}{2}\sum_{j,k=1}^d \partial_{ijk}Gd[\mu_j,\mu_k] + \partial_t G dt + \sum_{j=1}^d \partial_{j}Gd\mu_j^{FV} \\
& \qquad + \frac{1}{2}\sum_{j,k=1}^d \partial_{jk}Gd[\mu_j,\mu_k] -\sum_{\ell=1}^d \partial_{\ell t}G\mu_\ell dt - \sum_{j=1}^d\Big(\sum_{\ell=1}^d \partial_{j\ell}G\mu_\ell +\partial_j G\Big)d\mu_j^{FV}\\
& \qquad - \frac{1}{2}\sum_{j,k=1}^d\Big(\sum_{\ell=1}^d\partial_{jk\ell}G\mu_\ell + 2 \partial_{jk}G\Big)d[\mu_j,\mu_k] -\partial_tGdt -\frac{1}{2}\sum_{j,k=1}^d\partial_{jk}Gd[\mu_j,\mu_k] \\
& \qquad+ \frac{1}{2}\sum_{\ell=1}^d \frac{wN_\ell^2}{\overline \bfp(0)\overline \bfp}\partial_xh_\ell K_\ell\sum_{j,k=1}^d \Big(\partial_{j\ell}G - \sum_{m=1}^d \partial_{j m}G\mu_m\Big)\Big(\partial_{k\ell}G - \sum_{m=1}^d \partial_{km}G\mu_m\Big)d[\mu_j,\mu_k]\bigg) \\
& = \frac{wN_i}{\overline \bfp(0)}\bigg(\Big(\partial_{it}G  - \sum_{\ell=1}^d \partial_{\ell t}G\mu_\ell\Big)dt + \sum_{k=1}^d \Big(\partial_{ik}G - \sum_{\ell=1}^d \partial_{k\ell}G\mu_\ell\Big)d\mu_k^{FV} \\
 &\qquad \quad + \frac{1}{2}\sum_{j,k=1}^d \Big(\partial_{ijk}G - 2\partial_{jk}G- \sum_{\ell=1}^d \partial_{jk\ell}G \mu_\ell  \\
 & \qquad \qquad \quad   + \sum_{\ell=1}^d \frac{wN_\ell^2}{\overline \bfp(0) \overline \bfp}\partial_xh_\ell K_\ell \Big(\partial_{j\ell}G - \sum_{m=1}^d \partial_{j m}G\mu_m\Big)\Big(\partial_{k\ell}G - \sum_{m=1}^d \partial_{k m}G\mu_m\Big)\Big) d[\mu_j,\mu_k]\bigg) \\
 & = \frac{wN_i}{\overline \bfp(0)}\bigg((\overrightarrow{\nabla \partial_tG})^{\mu}_idt + \sum_{k=1}^d (\overrightarrow{\nabla \partial_kG})^{\mu}_id\mu_k^{FV} + \frac{1}{2}\sum_{j,k=1}^d(\overrightarrow{\nabla \partial_{jk}G})^{\mu}_id[\mu_j,\mu_k]\\
 & \qquad \qquad - \frac{1}{2}\sum_{j,k=1}^d\Big(2\partial_{jk}G -\sum_{\ell=1}^d \frac{wN_\ell^2}{\overline \bfp(0)\overline \bfp}\partial_x h_\ell K_\ell(\overrightarrow{\nabla \partial_jG})^{\mu}_\ell(\overrightarrow{\nabla \partial_kG})^{\mu}_\ell\Big)d[\mu_j,\mu_k]\bigg).
 \end{align*}
Using the identities
 \begin{align*}
     d\mu_k^{FV} & = \frac{N_kdP_k^{FV}}{\overline \bfp} -\mu_k\sum_{j=1}^d\frac{N_jdP^{FV}_j}{\overline \bfp} + \mu_k \frac{d[\overline \bfp]}{\overline \bfp^2} -\frac{d[\bfp_k,\overline \bfp]}{\overline \bfp^2}, \\
     d[\mu_j,\mu_k] & = \frac{1}{\overline \bfp^2}\big(d[\bfp_j,\bfp_k] - \mu_jd[\bfp_k,\overline \bfp]-\mu_kd[\bfp_j,\overline \bfp]+\mu_j\mu_kd[\overline \bfp]\big),
 \end{align*} we obtain
 \begin{align}
     dQ_i^{FV} & =  \frac{wN_i}{\overline \bfp(0)}(\overrightarrow{\nabla \partial_tG})^{\mu}_idt + \sum_{j=1}^d\frac{wN_iN_j}{\overline \bfp(0)\overline \bfp}\sum_{k=1}^d(\overrightarrow{\nabla \partial_kG})^{\mu}_i(\delta_{jk}-\mu_k)dP^{FV}_j   \nonumber \\
     & \quad  +\frac{wN_i}{\overline \bfp(0)} \sum_{k=1}^d\Big(\mu_k (\overrightarrow{\nabla \partial_kG})^{\mu}_i\sum_{\ell, m=1}^d\frac{N_\ell N_m}{\overline \bfp^2}d[P_\ell,P_m] - (\overrightarrow{\nabla \partial_kG})^\mu_i\sum_{\ell=1}^dN_kN_\ell\frac{d[P_k,P_\ell]}{\overline \bfp^2}\Big) \nonumber\\
     & \qquad \qquad  
     + \frac{wN_i}{2\overline \bfp(0)\overline \bfp^2}\sum_{j,k=1}^d \Big((\overrightarrow{\nabla \partial_{jk}G})^{\mu}_i - 2\partial_{jk}G + \sum_{\ell=1}^d \frac{wN_\ell^2}{\overline \bfp(0)\overline \bfp}\partial_x h_\ell K_\ell(\overrightarrow{\nabla \partial_jG})^{\mu}_\ell(\overrightarrow{\nabla \partial_kG})^{\mu}_\ell\Big) \nonumber\\
     & \qquad \qquad \qquad \qquad \qquad \qquad \times \bigg(N_jN_kd[P_j,P_k] - \sum_{m=1}^d\mu_jN_kN_md[P_k,P_m] \nonumber \\
     & \qquad \qquad \qquad \qquad \qquad \qquad \qquad  -\sum_{m=1}^d\mu_kN_jN_md[P_j,P_m] + \sum_{m,n=1}^d\mu_j\mu_kN_mN_nd[P_m,P_n]\bigg) \nonumber\\
     &  =   \frac{wN_i}{\overline \bfp(0)}(\overrightarrow{\nabla \partial_tG})^{\mu}_idt + \sum_{j=1}^dA^Q_{ij}dP^{FV}_j  + \sum_{j,k=1}^d\Upsilon_{ijk}d[P_j,P_k], \label{eqn:Q^FV}
     \end{align}
     where in the last equality we group like terms and recall the notation introduced in \eqref{eqn:AQ} and \eqref{eqn:upsilon}.
     We can now obtain the full dynamics of $P$,
     \begin{align*}
     dP_i(t) & = dS_i(t) + dI_i(t) = dS_i(t) + \partial_t h_i(t,J_i(t))dt + \partial_x h_i(t,J_i(t))dJ_i(t) + \frac{1}{2}\partial_{xx}h_i(t,J_i(t))d[J_i](t) \\
     & = dS_i(t) + \big(\partial_t h_i(t,J_i(t)) +\partial_x h_i(t,J_i(t))b_i^J(t,Q_i^{[0,t]})\big)dt + \partial_x h_i(t,J_i(t))K_i(t,t)dQ_i(t) \\
     & \qquad \qquad \qquad \qquad \qquad\qquad \qquad \qquad \qquad \qquad + \frac{1}{2}\partial_{xx}h_i(t,J_i(t))K_i^2(t,t)d[Q_i](t),
     \end{align*}
     where we used \eqref{eqn:dJ} to expand the dynamics of $J$. 
     Substituting $dQ_i$ and $d[Q_i]$, using \eqref{eqn:Q^M} and \eqref{eqn:Q^FV}, and collecting $dP$ terms on the left-hand side, we obtain
\begin{align*}
         \sum_{j=1}^dA^P_{ij}(t,P(t),J(t))dP_j(t) & =  dS_i(t) + \partial_x h_i(t,J_i(t)) K_i(t,t) \sum_{j,k=1}^d  \Upsilon_{ijk}(t,P(t),J(t))d[P_j,P_k](t) \\
         & \hspace{-3.5cm}  + \Big(\partial_t h_i(t,J_i(t)) + \partial_x h_i(t,J_i(t))b^J_i(t,Q_i^{[0,t]}) + \frac{wN_i}{\overline \bfp(0)} \partial_x h_i(t,J_i(t)) K_i(t,t)  (\overrightarrow{\nabla \partial_tG})^{\mu(t)}_i(t,\mu(t))\Big)dt \\
         &  + \frac{1}{2}\partial_{xx}h_i(t,J_i(t))K_i^2(t,t)\Big(A^Q(t,P(t),J(t))d[P](t)A^Q(t,P(t),J(t))^\top\Big)_{ii}.
\end{align*}
Finally, substituting $d[P]$ from \eqref{eqn:[P]} and multiplying both sides by $\beta^P = (A^P)^{-1}$ yields \eqref{eqn:SDEP}. Inserting the derived expressions for $dP^M$, $dP^{FV}$ and $d[P]$ into \eqref{eqn:Q^M} and \eqref{eqn:Q^FV} yields \eqref{eqn:SDEQ}. Lastly, \eqref{eqn:SDEJ} follows by substituting the dynamics for $dQ$ into \eqref{eqn:dJ}.

\subsection{Proof of Theorem~\ref{thm:main}} \label{sec:main_proof}
Now that we have formally derived the SDE \eqref{eqn:SDE}, we are ready to establish Theorem~\ref{thm:main}.
\begin{proof}[Proof of Theorem~\ref{thm:main}]
We start by noting that the SDE \eqref{eqn:SDE} can be written in the standard form
\[dX(t) = \Phi(t,X^{[0,t]})dZ(t),\]
where $ X= (P,Q,J)$ is a $3d$-dimensional process, $Z = (t,S,[S])$ is the driving $(1+d+d^2)$-dimensional semimartingale and $\Phi:[0,\infty) \times C(\R^{3d}) \to  \R^{3d \times (1+d+d^2)}$ can be explicitly computed in terms of the SDE coefficients \eqref{eqn:SDE_coefficients}. 

Define the domains 
\[D_n = \{p \in (0,\infty)^d: p_i > 1/n \text{ and }\mu_i(p) > 1/n \text{ for every } i\} \times \{q \in \R^d: |q_i| < n \text{ for every } i\} \times \R^d ,\] and the corresponding exit times 
\[\xi_n = \inf\{t \geq 0: X(t) \not \in D_n\}\]
for every $n \in \N$. By inspecting the coefficients \eqref{eqn:SDE_coefficients}, recalling Assumption~\ref{ass:impact_inputs}, which ensures boundedness of the derivatives of $h$, as well as the hypothesis on $G$ given in the theorem statement, it is easily verified that $\Phi$ is bounded and Lipschitz continuous on $[0,T]\times \mathcal{D}_n$ for every $T \geq 0$, where
\[\mathcal{D}_n = \{ \omega \in C(\R^{3d}): \omega(t)\in D_n \text{ for every }t \geq 0\}.\]It\^o's existence and uniqueness theorem (see e.g., \cite[Theorem~IX.2.1]{revuz1999continuous}) together with a standard localization argument yields a pathwise unique solution to \eqref{eqn:SDE} on $[0,T \land \xi_n)$. Sending $T$ and $n$ to infinity yields a pathwise unique solution on $[0,\xi)$, where $\xi = \lim_{n \to \infty} \xi_n$. To complete the proof of \ref{item:SDE}, it suffices to establish the claimed representation for $\xi$. We first establish items \ref{item:right_process} and \ref{item:Q_additive} and then return to this step.

That $P,Q$ and $J$ are indeed the price, holdings and impact state processes from our financial model follows directly by reversing the calculations in Appendix~\ref{app:SDE_derivation} and Section~\ref{sec:heuristic}. These are now rigorous on the lifetime $[0,\xi)$, as we start from the processes constructed in item \ref{item:SDE} and through manipulations mostly involving It\^o's formula, obtain the representations $P = S + I$, where $I$ is given by \eqref{eqn:impact} and $Q$ is given by \eqref{eqn:frictionless_additive_holdings} with $\Gamma$ defined in \eqref{eqn:Gamma}. It is clear that the solution to \eqref{eqn:SDE} is the unique such semimartingale process $(P,Q,J)$. Indeed, by the calculations in Section~\ref{sec:heuristic} and Appendix~\ref{app:SDE_derivation}, any triple $(P,Q,J)$ satisfying \eqref{eqn:price}, \eqref{eqn:impact} and \eqref{eqn:frictionless_additive_holdings} must also satisfy the SDE \eqref{eqn:SDE}, which by pathwise uniqueness must then coincide with the processes obtained in \ref{item:SDE}. This proves item~\ref{item:right_process}. 

Item \ref{item:Q_additive} follows by noting that the holdings process $Q_i$  given by \eqref{eqn:frictionless_additive_holdings} is of the form $Q_i(t) = \frac{wN_i}{\overline \bfp(0)}(\partial_i  G(t,\mu(t)) + C(t))$ for $t \in [0,\xi)$, where $C(t) = 1 + G(t,\mu(t)) - G(0,\mu(0)) - \nabla G(t,\mu(t))^\top \mu(t) + \Gamma(t)$ is a scalar process common to all $i$. Substituting into the relative wealth equation \eqref{eqn:relative_wealth} and using the fact that $\sum_{i=1}^d d\mu_i(t) = 0$ yields
\[dV^Q(t) = \nabla G(t,\mu(t))^\top d\mu(t) + \frac{1}{2}\sum_{i=1}^d\frac{\overline \bfp (0)}{w\overline \bfp(t)}d[I_i,Q_i](t).\] The ensuing computations in Section~\ref{sec:heuristic} now rigorously lead to \eqref{eqn:additive_master_impact}, establishing that $Q$ is additively functionally generated by $G$ on $[0,\xi)$.

We now return to establishing the representation for $\xi$ given by \eqref{eqn:xi}.
From the definition of $\xi_n$ and $D_n$ above, it is clear that $\xi$ corresponds to the first time that $P_i(t) = 0$ or $\mu_i(t) = 0$ or $|Q_i(t)| = + \infty$. Since $\mu_i(t) = N_iP_i(t)/\overline \bfp(t)$, we have the equality of events
\[\{\mu_i(t) = 0 \text{ or } P_i(t) = 0\} = \{\mu_i(t) = 0 \text{ or } \overline \bfp(t) = 0\},\]
and we also note that if $\overline \bfp(t) = 0$ then $\mu_i(t)$ (understood in this degenerate case as $\limsup_{s \uparrow t} \mu_i(s)$) may be positive.
 As such, to establish the representation of $\xi$ and complete the proof of item~\ref{item:SDE}, we will show that if $|Q_i(\xi)| = +\infty$, then either $\mu_j(\xi) = 0$ for some index $j$ or $\overline \bfp(\xi) = 0$.

We begin by expanding the quadratic variation terms
 in \eqref{eqn:Gamma}. Note that since the diffusion coefficient $\beta^P$ is a function of $t, P(t)$ and $J(t)$, but not of $Q^{[0,t]}$, it follows that $d[\mu](t) = \Psi(t,P(t),J(t))d[S](t)$ for some $\R^{d\times d}$-valued function $\Psi$. Hence, from \eqref{eqn:frictionless_additive_holdings} and the expansion of the quadratic variation terms in \eqref{eqn:Gamma}, we obtain that $Q_i$ has the representation
 \begin{align*} 
 Q_i(t) = & \frac{wN_i}{\overline \bfp(0)}\bigg(\partial_i G(t,\mu(t)) + 1 + G(t,\mu(t)) - G(0,\mu(0)) - \nabla G(t,\mu(t))^\top \mu(t) \\
 & \qquad - \int_0^t \partial_t G(s,\mu(s))ds + \int_0^t\sum_{j,k=1}^d\widetilde\Psi_{jk}(s,P(s),J(s))d[S_j,S_k](s)\bigg), 
\end{align*} 
where $\widetilde \Psi$ is some $\R^{d \times d}$ valued function which aggregates all of the quadratic variation terms. Although $\widetilde \Psi$ can be explicitly computed similarly to the derivation of \eqref{eqn:Q^FV}, its explicit form is not needed here. We only note, by the boundedness of $G$ and its derivatives on $[0,T] \times \{\mu: \mu_i \geq 1/n \text{ for every } i\}$, as well as the boundedness of $\overline \bfp^{-1}(p)$ on $\{p:p_i \geq 1/n\}$, that each element $\widetilde \Psi_{jk}$ is bounded on $[0,T] \times D_n$. Hence we obtain the estimate
\begin{align*}
 \sup_{t \leq T \land \xi_n}|Q_i(t)| & \leq \frac{wN_i}{\overline \bfp(0)}\bigg(1+ \sup_{t \leq T \land \xi_n}\big|\partial_i G(t,\mu(t))\big| + 2\sup_{t \leq T \land \xi_n}\big|G(t,\mu(t))\big| + \sup_{t \leq T \land \xi_n}\big|\nabla G(t,\mu(t))^\top \mu(t)\big| \\
    & \qquad  + \sup_{t \leq T \land \xi_n}\big|\partial_t G(t,\mu(t))\big|T + \sum_{j,k=1}^d \sup_{t \leq T \land \xi_n}\big|\widetilde \Psi_{jk}(t,P(t),J(t))\big|\sqrt{[S_j](T)[S_k](T)}\bigg),
\end{align*}
where we used the Kunita--Watanabe Inequality \cite[Corollary~IV.1.16]{revuz1999continuous}  to handle the quadratic covariation of $S$ terms. The right-hand side only (possibly) tends to $+\infty$ as $n \to \infty$ if $\mu_j(t) \to 0$ for some $j$ or $\overline \bfp(t) \to 0$. Hence $\lim_{n \to \infty} \sup_{t \leq T \land \xi_n} |Q_i(t)| = \infty \implies \mu_j(\xi) = 0$ for some $j$ or $\overline \bfp(\xi) = 0$. This completes the proof of item~\ref{item:SDE}. 

To establish item~\ref{item:long-only}, we recall that the $\Gamma$ process of this additively functionally generated portfolio is given by \eqref{eqn:Gamma}. 
The first two terms on the right-hand side of \eqref{eqn:Gamma} are nonnegative because $G(\cdot,\mu)$ is nonincreasing and $G(t,\cdot)$ is concave. The final term is always nonnegative, yielding nonnegativity of $\Gamma$. Since $G(0,\mu(p^0)) \leq 1$, we have $V^Q(t) \geq G(t)$, establishing that $V^Q$ is nonnegative if $G$ is nonnegative. The nonnegativity of $Q_i$ now follows by the same argument as in the proof of Proposition~\ref{prop:additive_frictionless}, establishing item~\ref{item:long-only}. 

It remains to prove item \ref{item:positive_price}. Recalling the function $F_i$ defined in item~\ref{item:positive_price}, it follows from  \eqref{eqn:frictionless_additive_holdings} that for every $i$
\begin{align*}
    Q_i(t)-Q_i(s)  = \frac{wN_i}{\overline \bfp(0)}\big(F_i(\mu(t))-F_i(\mu(s)) + \Gamma(t) - \Gamma(s)\big),
\end{align*} 
 Hence, setting $\tau_n = \inf\{t \geq 0: F_i(\mu(t)) \geq\overline F_i - 1/n\}$, we see on the set $\{\tau_n < \infty\}$ that
\[Q_i(\tau_n) - \sup_{s \leq \tau_n} Q_i(s) \geq \frac{wN_i}{\overline \bfp(0)}\Big(F_i\big(\mu(\tau_n)\big)-\sup_{s \leq \tau_n} F_i\big(\mu(s)\big) + \Gamma(\tau_n) - \sup_{s \leq \tau_n} \Gamma(s)\Big) = 0,\] where the final equality follows from the definition of $\tau_n$ and the fact that $\Gamma$ is increasing. Let $\sigma = \lim_{n \to \infty} \tau_n$ and note that by condition \eqref{eqn:positive_condition} we have $\P$-a.s.\ that $\sigma \leq \inf\{t \geq 0: \mu_i(t) = 0 \}$. Since $P_i(t) = 0 \iff \mu_i(t) = 0$ on the set $\{0 < \overline \bfp(t) < \infty\}$, it follows from Proposition~\ref{prop:positive} and Remark~\ref{rem:positive} that $P_i$, and hence $\mu_i$, remains positive on $\{0 < \overline \bfp < \infty\}$.   This yields the characterization of the explosion time $\xi$ and completes the proof.
\end{proof}

\section{Proofs for Section~\ref{sec:relarb}} \label{app:relarb_proofs}
The purpose of this section is to prove Lemma~\ref{lem:relarb},
which we decompose into several smaller lemmas. Throughout this section, we let the setup and assumptions of Lemma~\ref{lem:relarb} hold and freely use the notation defined therein.
 We also introduce additional notation, including the constant $\overline \nu$ appearing in the statement of Lemma~\ref{lem:relarb}. In this section $\|x\|_p$ denotes the $\ell^p$ norm of a vector $x \in \R^d$ for $p \in [1,\infty]$. When $p = 2$, we simply write $\|x\|$ without the subscript. Note that $\widehat h_i(\cdot)$ is nondecreasing and, as such, admits a pseudoinverse that we take to be \[\widehat h_i^{-1}(x) = \begin{cases}
    \inf\{y \geq 0: h_i(y) = x\}, & x \geq 0, \\
    \sup\{y \leq 0: h_i(y) = x\}, & x < 0. 
\end{cases}\] Additionally, Assumption~\ref{ass:impact_inputs} ensures that $\widehat h_i'(\cdot)$ is bounded, which implies that $\overline {\widehat h'_i} := \sup_{ x \in \R} \widehat h_i'(x) < \infty$.
We now define
\begin{equation} \label{eqn:bar_nu}
    \overline \nu = \min_{i \in \{1,\dots,d\}}\min\{\nu^1_i,\nu^2_i,\nu^3\},
\end{equation} 
where 
\begin{align*}
 \nu^1_i & = -\frac{\overline \bfp(0)}{wN_i\overline K_i(\overline H + \overline F_i)}\widehat h^{-1}_i\Big(\frac{-\kappa_S\ell_S}{2\overline \phi_iN_i}\frac{\delta_S}{1-\delta_S/2}\Big), \\
 \nu^2_i & = \frac{\overline \bfp(0)}{wN_i\overline K_i(CT_0 + 2\overline H + \overline F_i)}\widehat h^{-1}_i\Big(\frac{\kappa_P \delta_S}{4\overline \phi_iN_i}\Big), \\
 \nu^3 & =\frac{1}{2}\Big(\frac{8w\|N\|_1\|N\|_\infty\|\overline{\widehat h'}\|_\infty \|\overline K\|_\infty \overline D}{\overline \bfp(0) \kappa_P}\Big)^{-1}\Big(1+\max_{i \in \{1,\dots,d\}}\nu^1_i\frac{8w\|N\|_1\|\overline{\widehat h'}\|_\infty \|\overline K\|_\infty \overline D}{\overline \bfp(0) \kappa_P}\Big)^{-1},
\end{align*}
with 
\[
 C  = \bigg(d\overline D +\frac{4d^2\overline D^2(\min_{i\in \{1,\dots,d\}}\nu^1_i)w\|N^2\|_\infty\|\overline \phi\|_\infty\|\overline{\widehat h'}\|_\infty\|\overline K\|_\infty }{\overline \bfp(0)\kappa_P}\bigg) \frac{d^2(1+d)\sigma_S^2\|\frac{1}{N}\|_1^2\|N\|^2 \kappa_S^2}{\kappa_P^2}.
\]
Finally, we define 
\begin{align*}
 \overline I_i & = \overline \phi_i\widehat h_i\Big( \nu\frac{wN_i\overline K_i}{\overline \bfp(0)}(CT_0 + 2\overline H + \overline F_i)\Big),  && \overline \bfi = \sum_{i=1}^d N_i\overline I_i,  \\
 \underline I_i & = - \overline \phi_i \widehat h_i\Big(- \nu\frac{wN_i\overline K_i}{\overline \bfp(0)}(\overline H + \overline F_i)  \Big) && \underline \bfi = \sum_{i=1}^d N_i\underline I_i.
 \end{align*}
Note that $\underline I_i$ is nonnegative since $\widehat h_i(x) \leq 0$ whenever $ x \leq 0$.

We begin the proof by establishing lower bounds on the impact and capitalization processes.
\begin{lem} \label{lem:impact_bound}
   The bounds
    \[\overline \bfp(t) \geq \kappa_P, \qquad I_i(t) \geq - \underline I_i \]
    hold for all $t \in [0,\xi)$ and $i=1,\dots,d$, where $\xi$ is the explosion time of the SDE \eqref{eqn:SDE}.
\end{lem}

\begin{proof}
From \eqref{eqn:frictionless_additive_holdings} we have that \begin{align*}
    Q_i(t) & = \frac{wN_i}{\overline \bfp(0)}\big(1 + \nu\psi(t)F_i(\mu(t))+\Gamma(t)\big),
\end{align*}  
where $\Gamma$ is given by \eqref{eqn:Gamma}.
From \eqref{eqn:impact} we have 
\begin{align}
    J_i(t) & = \frac{wN_i}{\overline \bfp(0)}\int_0^t K_i(t,s)d\Gamma(s) + \frac{wN_i\nu}{\overline \bfp(0)}\int_0^t K_i(t,s)d(\psi F_i(\mu))(s),\nonumber \\
    &  = \frac{wN_i}{\overline \bfp(0)}\int_0^t K_i(t,s)d\Gamma(s) + \frac{wN_i\nu}{\overline \bfp(0)}\bigg(K_i(t,t)\psi(t)F_i(\mu(t)) - \int_0^t \partial_s K_i(t,s)\psi(s)F_i(\mu(s))ds\bigg), \label{eqn:J_relarb}
\end{align}
where in the final step we used $\psi(0) = 0$.

We now work towards bounding $J_i$ from below. Recalling the concavity of $H$ and that $0 \leq \psi(t) \leq 1$, we see that all the terms making up $\Gamma$ in \eqref{eqn:Gamma} are increasing, with the exception of the term $t \mapsto- \int_0^{t\land T_0} \partial_t G(s,\mu(s))ds$. As such, keeping only this term in the integral against $\Gamma$ and using the bounds $0 < \underline F_i \leq F_i(\mu) \leq \overline F_i$ to estimate the second integral in \eqref{eqn:J_relarb}, we obtain the lower bound
\begin{align*} 
J_i(t) & \geq -\frac{wN_i\nu}{\overline \bfp(0)}\bigg(\int_0^{t \land T_0} K_i(t\land T_0,s)\psi'(s)H(\mu(s))ds - \overline F_i\int_0^t \partial_s K_i(t,s)ds\bigg) \\
& \geq  -\frac{wN_i\nu}{\overline \bfp(0)}\big(\overline K_i\overline H(\psi(t\land T_0) - \psi(0)) - \overline F_i(K_i(t,t) - K_i(t,0))\big)  \\
& \geq -\overline \nu\frac{wN_i \overline K_i }{\overline \bfp(0)}(\overline H + \overline F_i). 
\end{align*}
Applying $\widehat h_i(\cdot)$ to both sides and bounding $\phi_i(t)$ by $\overline \phi_i$ gives the lower bound  $I_i(t) \geq -\underline I_i$. 
As a consequence, we see that 
$P_i(t) \geq S_i(t) -\underline I_i$. Since $\nu \leq \overline \nu \leq \nu^1_i$ we have that
$\underline I_i \leq \frac{\kappa_S\ell_S\delta_S}{2N_i(1-\delta_S/2)}$,
from which it follows that 
\[\underline \bfi = \sum_{i=1}^d N_i\underline I_i \leq \frac{d\kappa_S\ell_S\delta_S}{2(1-\delta_S/2)} \leq \frac{\kappa_S\delta_S}{2(1-\delta_S/2)},\]
where we used the fact that $d\ell_S \leq 1$.
Hence, we see that 
\begin{align}
    \overline \bfp(t) & = \overline \bfs(t) +\sum_{i=1}^d N_iI_i(t) \geq \kappa_S - \underline \bfi 
     \geq \kappa_S - \frac{\kappa_S\delta_S}{2(1-\delta_S/2)} = \kappa_S\frac{1-\delta_S}{1-\delta_S/2} = \kappa_P.      \label{eqn:bfp_lower_bound_S}
\end{align}
 This completes the proof. 
    \end{proof}
 
\begin{lem} \label{lem:Gamma_bounds}
    The process $\Gamma$ satisfies the bounds 
    \begin{equation}
        \label{eqn:Gamma_bounds}
        -\nu \overline H \leq \Gamma(t) \leq \nu(\overline H + CT_0), \qquad \text{for }t \in [0,T_1 \land \xi).
    \end{equation}
        Moreover, $\Gamma(t) \leq \nu(\overline H+Ct)$ for all $t \in [0,\xi)$.

    \end{lem}
\begin{proof}
    As in the proof of Lemma~\ref{lem:impact_bound}, note that the terms making up $\Gamma$ in \eqref{eqn:Gamma} are all increasing with the exception of $t \mapsto -\int_0^{t \land T_0}\partial_t G(s,\mu(s))ds$, which is decreasing. As such, we have the lower bound
    \begin{equation} \label{eqn:Gamma_lower_bound} \Gamma(t) \geq - \int_0^{t \land T_0} \partial_t G(s,\mu(s))ds = -\nu \int_0^{t \land T_0} \psi'(s)H(\mu(s))ds \geq - \nu \overline H(\psi(t \land T_0) - \psi(0)) \geq - \nu \overline H,
    \end{equation}
    which establishes the lower bound. 

    For the upper bound, we instead work with the other terms in \eqref{eqn:Gamma},
    \begin{align} 
    \Gamma(t) & \leq - \nu  \int_{T_1}^{T_1 \lor t} \psi'(s)H(\mu(s))ds  - \frac{\nu}{2}\int_0^t\sum_{i,j=1}^d \psi(s)\partial_{ij}H(\mu(s))d[\mu_i,\mu_j](s) \nonumber \\
& \qquad + \frac{\nu^2}{2}\int_0^t \sum_{i=1}^d \frac{wN_i^2}{\overline \bfp(0) \overline \bfp(s)}\phi_i(s)\widehat h_i'(J_i(s))K_i(s,s)\psi^2(s)d[F_i(\mu)](s) \nonumber  \\
\leq &  -\nu \overline H\int_{T_1}^T \psi'(s)ds + \frac{\nu \overline D }{2}\sum_{i,j=1}^d [\mu_i,\mu_j](t) + \nu^2 \frac{w\|N^2\|_\infty\|\overline \phi\|_\infty\|\overline{\widehat h'}\|_\infty\|\overline K\|_\infty}{2\overline \bfp(0)\kappa_P}\sum_{i=1}^d[F_i(\mu)](t) \nonumber \\
\leq & \nu\Big( \overline H + \frac{d\overline D}{2}\sum_{i=1}^d [\mu_i](t) + \frac{(\min_{i\in \{1,\dots,d\}} \nu_i^1) w\|N^2\|_\infty\|\overline \phi\|_\infty\|\overline{\widehat h'}\|_\infty\|\overline K\|_\infty}{2\overline \bfp(0)\kappa_P}\sum_{i=1}^d[F_i(\mu)](t)\Big),\label{eqn:Gamma_QV_bound}
\end{align}
 where in the second inequality we used $|\partial_{ij}H| \leq \overline D$ to bound the second term, and bounded each factor in the integrand of the third term by its supremum. In the final inequality we factored out $\nu$, noted that $-\int_{T_1}^T \psi'(s)ds = \psi(T_1) - \psi(T) = 1$, used the bound $\sum_{i,j=1}^d [\mu_i,\mu_j](t) \leq d \sum_{i=1}^d [\mu_i](t)$, which is a consequence of the Cauchy--Schwarz inequality, and additionally bounded the remaining $\nu$ in front of the third term by $\min_{i  \in \{1,\dots,d\}}\nu^1_i$.

Next, recalling the notation \eqref{eqn:arrowM}, we compute 
\begin{equation} \label{eqn:QV_Fi_bound}
\begin{aligned}
\sum_{i=1}^d [F_i(\mu)](t) = \sum_{i,j,k=1}^d\int_0^t  (\overrightarrow{\nabla \partial_jH})^{\mu(s)}_i(\mu(s))(\overrightarrow{\nabla \partial_kH})^{\mu(s)}_i(\mu(s))d[\mu_j,\mu_k](s) & \leq 4\overline D^2 \sum_{i,j,k=1}^d[\mu_j,\mu_k](t) \\
& \leq 4d^2\overline D^2 \sum_{i=1}^d [\mu_i](t),
\end{aligned}
\end{equation}
where in the first inequality we used the bound $|\partial_{ij}H(\mu(s))| \leq \overline D$ to bound each second partial derivative appearing in $(\overrightarrow{\nabla \partial_jH})^{\mu(s)}_i(\mu(s))$ and $(\overrightarrow{\nabla \partial_kH})^{\mu(s)}_i(\mu(s))$, and in the second inequality we again used $\sum_{j,k=1}^d[\mu_j,\mu_k](t) \leq d \sum_{i=1}^d [\mu_i](t)$.

Next, we bound the quadratic variation of the market weights, \begin{equation} \label{eqn:QV_mu_bound}
    d[\mu_i]= d\bigg[\int_0^\cdot \frac{d\bfp_i}{\overline \bfp} - \int_0^\cdot \frac{\mu_i}{\overline \bfp}d\overline \bfp\bigg] \leq \frac{2}{\overline \bfp^2}(d[\bfp_i] + \mu_i^2d[\overline \bfp]) \leq \frac{2}{\overline \bfp^2}(N_i^2d[P_i] + d\times \mu_i^2\sum_{j=1}^d N_j^2d[P_j]).
\end{equation}
Recalling the dynamics of $P$ given by \eqref{eqn:SDEP}, we see, omitting function evaluations for brevity, that
\begin{align*}
    d[ P_i]& \leq \sum_{j=1}^d d[P_j]=  \mathrm{Tr}(\beta^Pd[S](\beta^P)^\top) \\ 
    & = \mathrm{Tr}\Big((\beta^P)^\top \beta^P\mathrm{diag}\Big(\frac{1}{N}\Big)\mathrm{diag}(\bfs (t))d[\log S](t)\mathrm{diag}(\bfs (t))\mathrm{diag}\Big(\frac{1}{N}\Big)\Big)  \\
    & \leq \mathrm{Tr}((\beta^P)^\top\beta^P)\overline \bfs^2\Big(\sum_{j=1}^d \frac{1}{N_j}\Big)^2\sigma^2 dt  \leq d\sigma^2\Big\|\frac{1}{N}\Big\|_1^2 \overline \bfs^2dt,
    \end{align*}
where in the second equality we used the cyclical property of trace and rewrote $d[S]$ in terms of $d[\log S]$. In the final two inequalities we used the bound $\mathrm{Tr}((\beta^P)^\top\beta^P) \leq d$, which follows from the fact that the eigenvalues of $(\beta^P)^\top\beta^P$ are bounded by one courtesy of Proposition~\ref{prop:AP}. Substituting this into \eqref{eqn:QV_mu_bound} and writing the expression in integral form yields the bound 
\begin{align}
[\mu_i](t) & \leq 2d\sigma^2\Big\|\frac{1}{N}\Big\|_1^2\bigg(N_i^2\int_0^t \frac{\overline \bfs^2(s)}{\overline \bfp^2(s)}ds +\|N\|^2d \int_0^t \mu_i^2(s)\frac{\overline \bfs^2(s)}{\overline \bfp^2(s)}ds\bigg) \nonumber \\
& \leq 2d(1+d)\sigma^2\Big\|\frac{1}{N}\Big\|_1^2\|N\|^2\int_0^t \frac{\overline \bfs^2(s)}{\overline \bfp^2(s)}ds  
 \leq 2d(1+d)\sigma^2\Big\|\frac{1}{N}\Big\|_1^2\|N\|^2\int_0^t \frac{\overline \bfs^2(s)}{(\overline \bfs(s)-\underline \bfi)^2}ds \nonumber  \\
& \leq  2d(1+d)\sigma^2\Big\|\frac{1}{N}\Big\|_1^2\|N\|^2 \frac{\kappa_S^2}{(\kappa_S-\underline \bfi)^2}t \leq  2d(1+d)\sigma^2\Big\|\frac{1}{N}\Big\|_1^2\|N\|^2 \frac{\kappa_S^2}{\kappa_P^2}t \label{eqn:QV_mu_final_bound},
\end{align}    
where in the second-to-last inequality we used that the function $x \mapsto \frac{x}{x-c}$ for $c > 0$ is decreasing on $(c,\infty)$ and in the final inequality we used \eqref{eqn:bfp_lower_bound_S}. From \eqref{eqn:Gamma_QV_bound}, \eqref{eqn:QV_Fi_bound} and \eqref{eqn:QV_mu_final_bound}, we obtain the bound 
\[\Gamma(t) \leq \nu(\overline H +Ct),\] 
as required. Moreover, it is clear that for $t \in [0,T_0]$, this implies the second inequality in \eqref{eqn:Gamma_bounds}. However, by definition of $T_1$, we also have that $\Gamma(t) \leq 0$ for $t \in (T_0,T_1]$. Since the right-hand side of \eqref{eqn:Gamma_bounds} is positive, this completes the proof.
\end{proof}

\begin{lem} \label{lem:nonexplosive} We have $I_i(t) \leq \overline I_i \leq \frac{\delta_S\kappa_P}{4N_i}$ and $\mu_i(t) \geq \ell_P$ for every $i=1,\dots,d$ and $t \in [0,T_1]$. 
  Moreover, Assumption~\ref{ass:nonexplosion} is satisfied.
\end{lem}

\begin{proof} 
We start by obtaining an upper bound for $J_i$.  From \eqref{eqn:J_relarb} and the fact that all terms in $\Gamma$ of \eqref{eqn:Gamma} are increasing except $t \mapsto -\int_0^{t \land T_0} \partial_t G(s,\mu(s))ds$, which is decreasing, we obtain
\begin{align*} 
J_i(t) & =   \frac{wN_i}{\overline \bfp (0)}\bigg(\int_0^t K_i(t,s)d\Big(\Gamma + \int_0^{\cdot \land T_0} \partial_t G(\cdot,\mu)\Big)(s) - \int_0^{t\land T_0} K_i(t,s)\partial_t G(s,\mu(s))ds\bigg) \nonumber \\
& \qquad +  \frac{wN_i\nu}{\overline \bfp(0)}\bigg(K_i(t,t)\psi(t)F_i(\mu(t)) - \int_0^t \partial_s K_i(t,s)\psi(s)F_i(\mu(s))ds\bigg) \nonumber \\
& \leq \frac{wN_i \overline K_i}{\overline \bfp(0)}\Big(\Gamma(t) + \int_0^{t \land T_0} \partial_t G(s,\mu(s))ds\Big) +\frac{wN_i  \nu\overline K_i \overline F_i}{\overline \bfp(0)}, 
\end{align*}
where we also used the nonnegativity of $F_i(\cdot)$ and $\partial_s K(t,s)$. From the estimates in \eqref{eqn:Gamma_lower_bound} we see that  $\int_0^{t \land T_0} \partial_t G(s,\mu(s))ds \leq \nu \overline H$, from which we deduce that
\begin{equation} \label{eqn:Ji_upper_bound}
    J_i(t) \leq \frac{wN_i \overline K_i}{\overline \bfp(0)}(\Gamma(t) + \nu \overline H + \nu \overline F_i).
\end{equation}
From here and the upper bound on $\Gamma(t)$ provided by Lemma~\ref{lem:Gamma_bounds}, we obtain
\[I_i(t) \leq \overline \phi_i \widehat h_i\Big(\nu\frac{wN_i  \overline K_i}{\overline \bfp(0)}(CT_0 + 2\overline H +  \overline F_i)\Big)= \overline I_i \leq \frac{\delta_S\kappa_P}{4N_i}, \qquad t \in [0,T_1],\] 
where in the last inequality we also used that $\nu \leq \overline \nu \leq \nu^2_i$. For general $t$, by substituting the bound $\Gamma(t) \leq \nu(\overline H + Ct)$ from Lemma~\ref{lem:Gamma_bounds}, into the right-hand side of \eqref{eqn:Ji_upper_bound} we can obtain a time-dependent upper bound on $I_i(t)$ for all $t \in [0,\xi)$. In particular, this yields that $\overline \bfp(t) \leq \overline \bfs(t) +  f(t) < \infty$, for an explicit function $f(t)$ whose precise form is not needed here. This establishes that $\overline \bfp$ cannot blow up in finite time.  

We now turn our attention to establishing nonexplosion of the SDE \eqref{eqn:SDE}. Since the total market capitalization $\overline \bfp(t)$ is bounded from below, from the representation \eqref{eqn:xi}, we see that the proof will be complete once we show that $\mu_i$ is positive almost surely for every $i$.
To this end, we use the market weight floor for the frictionless market to see that
    \begin{equation} \label{eqn:mu_lower_bound}
        \mu_i(t) = \frac{\bfs_i(t) + N_iI_i(t)}{\overline \bfp(t)}  = \frac{\bfs_i(t) + N_iI_i(t)}{\overline \bfs(t)}\frac{\overline \bfs(t)}{\overline \bfp(t)} \geq \Big(\ell_S - \frac{N_i\underline I_i}{\kappa_S}\Big)\frac{\overline \bfs(t)}{\overline \bfp(t)}.
    \end{equation}
   Since $\overline \nu \leq \nu_i^1$ we have that
   \begin{equation} \label{eqn:intermediate_bound}
       \frac{N_i\underline I_i}{\kappa_S} \leq \frac{\delta_S}{2(1-\delta_S/2)}\ell_S < \ell_S
   \end{equation} so that $\ell_S - \frac{N_i\underline I_i}{\kappa_S} > 0$. We previously established that $\overline \bfp$ cannot blow up in finite time, so this establishes $\P(\xi = \infty) = 1$.

   It remains to show that $\mu_i(t) \geq \ell_P$ for $t \in [0,T_1]$. Using \eqref{eqn:mu_lower_bound}, the first inequality in \eqref{eqn:intermediate_bound} and that $I_i(t) \leq \overline I_i \leq \frac{\delta_S\kappa_P}{4N_i}$, we estimate that
   \[\mu_i(t) \geq \Big(\ell_S - \frac{N_i\underline I_i}{\kappa_S}\Big)\frac{\overline \bfs(t)}{\overline \bfs(t) + \sum_{j=1}^d N_jI_j(t)} \geq \ell_S\Big(\frac{1-\delta_S}{1-\delta_S/2}\Big)\frac{\kappa_S}{\kappa_S + \overline \bfi} \geq \frac{\ell_S\kappa_P}{\kappa_S + d\delta_S\kappa_P/4} = \ell_P, \]
   where we also used that the map $x \mapsto x/(x+c)$ is increasing for $c > 0$. This completes the proof.
\end{proof}

\begin{lem} \label{lem:diversity}
   The observed market is diverse on $[0,T_1]$ with diversity constant $\delta_P$. 
\end{lem}
\begin{proof}
Using diversity of the frictionless market and Lemma~\ref{lem:impact_bound} we have for $t \in [0,T_1]$, that
\begin{align}
    \mu_i(t)  = \frac{\bfs_i(t) + N_iI_i(t)}{\overline \bfs(t) + \sum_{j=1}^d N_jI_j(t)} &  \leq \frac{(1-\delta_S)\overline \bfs(t) + N_i \overline I_i}{\overline \bfs(t)  - \underline \bfi} \leq \frac{(1-\delta_S)\kappa_S + N_i \overline I_i}{\kappa_S -\underline \bfi} \nonumber \\
    & \leq \frac{(1-\delta_S)\kappa_S + N_i \overline I_i}{\kappa_P} = 1- \frac{\delta_S}{2} + \frac{N_i\overline I_i}{\kappa_P}, \label{eqn:mu_bound}
\end{align}
where in the second-to-last inequality we used the fact that $x \mapsto (ax+b)/(x-c)$ is a decreasing function on $(c,\infty)$ whenever $a,b,c > 0$ and in the final inequality we used the estimates obtained in \eqref{eqn:bfp_lower_bound_S}. 
Using the bound $N_i\overline I_i \leq  \delta_S\kappa_P/4$, courtesy of Lemma~\ref{lem:nonexplosive}, we see from
\eqref{eqn:mu_bound} that $\mu_i(t) \leq 1-\delta_S/4$ for every $i \in \{1,\dots,d\}$ and $t \in [0,T_1]$, which completes the proof.
\end{proof}

\begin{lem} \label{lem:nondegeneracy} The observed market is nondegenerate on $[0,T_1]$ with constant $\epsilon_P$.
\end{lem}
\begin{proof}
    From the dynamics \eqref{eqn:SDEP} and the identity $a^P_{ij}(t)dt = d[\log P_i,\log P_j](t) = \frac{d[P_i,P_j](t)}{P_i(t)P_j(t)}$,
    we compute that
   \begin{equation} \label{eqn:ap}
       a^P(t) = \widetilde \beta^P(t)a^S(t)\widetilde \beta^P(t)^\top,
    \end{equation}
    where
    \[\widetilde \beta^P(t) = \mathrm{diag}(N) \mathrm{diag}\Big(\frac{1}{\bfp(t)}\Big)\beta^P(t,P(t),J(t))\mathrm{diag}(\bfs (t))\mathrm{diag}\Big(\frac{1}{N}\Big).\]

Next, we estimate the largest eigenvalue of $A^P(A^P)^\top$, where $A^P$ is given by \eqref{eqn:AP}. To facilitate this computation, set
\[X_{ij}(t) = \frac{wN_iN_j\phi_i(t)\widehat h_i'(J_i(t))K_i(t,t)}{\overline \bfp(0)\overline \bfp(t)}\] and write $A^P(t)$ for $A^P(t,P(t),J(t))$. With our choice of $G$, we have 
\[A^P_{ij}(t) = \begin{cases} 
    1 + \nu \psi(t) X_{ii}(t)\overleftrightarrow{\nabla^2 H}_{ii}^{\mu(t)}(t,\mu(t)), & \text{if } i = j, \\
    \nu \psi(t) X_{ij}(t)\overleftrightarrow{\nabla^2 H}_{ij}^{\mu(t)}(t,\mu(t)), & \text{if } i \ne j.
\end{cases}\] From here it follows that 
\begin{equation} \label{eqn:AP_squared}(A^P(A^P)^\top)_{ij}(t) = \begin{cases}
    \begin{aligned} 1 + & 2\nu \psi(t) X_{ii}(t)\overleftrightarrow{\nabla^2 H}_{ii}^{\mu(t)}(t,\mu(t)) \\
    & + \nu^2 \psi^2(t)\sum_{k=1}^d X^2_{ik}(t)(\overleftrightarrow{\nabla^2 H}_{ik}^{\mu(t)}(t,\mu(t)))^2,
    \end{aligned} & \text{if } i =j, \\
    \begin{aligned} & \nu \psi(t)(X_{ji}(t)+X_{ij}(t))\overleftrightarrow{\nabla^2 H}_{ij}^{\mu(t)}(t,\mu(t)) \\
    & \quad  + \nu^2\psi^2(t) \sum_{k=1}^d X_{ik}(t)X_{jk}(t)\overleftrightarrow{\nabla^2 H}_{ik}^{\mu(t)}(t,\mu(t)) \overleftrightarrow{\nabla^2 H}_{jk}^{\mu(t)}(t,\mu(t)), 
    \end{aligned}  & \text{if } i \ne j.
\end{cases}
\end{equation}
Looking at the sum of the off-diagonal entries, we obtain
\begin{equation} \label{eqn:R_bound}
    R_i(t) := \sum_{j\ne i} \big|(A^P(A^P)^\top)_{ij}(t)\big| \leq \nu N_i\Big(\frac{8w\|N\|_1\|\overline{\widehat h'}\|_\infty \|\overline K\|_\infty \overline D}{\overline \bfp(0) \kappa_P}\Big)\Big(1+\nu^1_i\frac{8w\|N\|_1\|\overline{\widehat h'}\|_\infty \|\overline K\|_\infty \overline D}{\overline \bfp(0) \kappa_P}\Big),
\end{equation}
where we used the bound $|\overleftrightarrow{\nabla^2 H}_{ij}^{\mu(t)}(t,\mu(t))| \leq 4 \overline D$ and bounded one instance of $\nu$ by $\nu^1_i$.
By the Ger\v{s}gorin circle theorem \cite[Theorem~6.1.1]{horn2013matrix}, we have that each eigenvalue of $A^P(t)$ is within radius $R_i(t)$ of the diagonal entry $A^P_{ii}(t)$ for some $i \in \{1,\dots,d\}$.
Bounding the diagonal entries of \eqref{eqn:AP_squared} using the same approach as for $R_i$ in \eqref{eqn:R_bound}, we obtain
\begin{align*}
    \lambda_{\max}(A^P(A^P)^\top(t)) & \leq \max_{i=1,\dots,d}((A^P(A^P)^\top)_{ii}(t) + \max_{i=1,\dots,d} R_i(t)) \\
     & \hspace{-1.75cm} \leq  1 + 2\nu \Big(\frac{8w\|N\|_1\|N\|_\infty\|\overline{\widehat h'}\|_\infty \|\overline K\|_\infty \overline D}{\overline \bfp(0) \kappa_P}\Big)\Big(1+\max_{i\in \{1,\dots,d\}}\nu^1_i\frac{8w\|N\|_1\|\overline{\widehat h'}\|_\infty \|\overline K\|_\infty \overline D}{\overline \bfp(0) \kappa_P}\Big) \leq 2,
\end{align*}where the last inequality used $\nu \leq \overline \nu \leq \nu^3$.
As such, we have
\[\lambda_{\min}\big(\beta^P(\beta^P)^\top(t,P(t),J(t))\big) = \lambda_{\max}^{-1}(A^P(A^P)^\top (t)) \geq \frac{1}{2}.\]
With this estimate in hand, we are ready to obtain a lower bound on the smallest eigenvalue of $a^P(t)$. Indeed, from \eqref{eqn:ap} and the inequality  
$\lambda_{\min}(BCB^\top) \geq \lambda_{\min}(C)\lambda_{\min}(BB^\top)$, which holds for any matrix $B$ and for a positive semidefinite matrix $C$, we have
\begin{align*}
    \lambda_{\min}(a^P(t)) & \geq \frac{\lambda_{\min}(a^S(t))}{\|N\|^2_\infty\|\frac{1}{N}\|_\infty^2} \Big(\min_{i \in \{1,\dots,d\}}\bfs_i^2(t)\Big)\lambda_{\min}\big(\beta^P(\beta^P)^\top (t,P(t),J(t))\big)\bigg(\min_{i \in \{1,\dots,d\}}\frac{1}{\bfp_i^2(t)}\bigg) \\
    & \geq \frac{\epsilon_S}{2\|N\|_\infty^2\|\frac{1}{N}\|_\infty^2} \Big(\min_{i \in \{1,\dots,d\}}\mu_i^S(t)\Big)^2\bigg(\min_{i \in \{1,\dots,d\}}\frac{1}{\mu_i^2(t)}\bigg) \frac{\overline \bfs^2(t)}{\overline \bfp^2(t)}\\
    & \geq \frac{\epsilon_S \ell_S^2}{2\|N\|^2_\infty\|\frac{1}{N}\|^2_\infty}\frac{\overline \bfs^2(t)}{\big(\overline \bfs(t) + \sum_{i=1}^d N_iI_i(t)\big)^2}  \\
    & \geq \frac{ \epsilon_S \ell_S^2}{2\|N\|^2_\infty\|\frac{1}{N}\|^2_\infty}\frac{\kappa_S^2}{(\kappa_S + \overline \bfi)^2} \geq  \frac{ \epsilon_S \ell_S^2}{2\|N\|^2_\infty\|\frac{1}{N}\|^2_\infty}\frac{\kappa_S^2}{(\kappa_S +  d\delta_S\kappa_P/4)^2}  = \epsilon_P,
\end{align*}
where in the last line we used the fact that $x \mapsto x/(x+c)$ is increasing for $c > 0$ and the bound $\sum_{i=1}^d N_iI_i(t) \leq \overline \bfi \leq  d\delta_S\kappa_P/4$, which holds for $t \in [0,T_1]$ by Lemma~\ref{lem:nonexplosive}. 
This completes the proof.
\end{proof}

\begin{lem} \label{lem:stopping_time}
    We have that $\P(T_1 < T) = 1$.
\end{lem}

\begin{proof}Note that if $T_1 = T_0$, then clearly $T > T_1$ holds since $T > T_0$. It therefore suffices to consider the case $T_1 > T_0$, in which case, by continuity of $\Gamma$, we have $\Gamma(T_1) = 0$ on the set $\{T_1 < \infty\}$.
We argue by contradiction by assuming that  $\P(T \leq T_1) > 0$ and we work on this event in what follows.
    
We proceed in a similar fashion to the proof of Theorem~\ref{thm:relarb_implicit}. First, note that the process $t \mapsto -\frac{1}{2}\int_0^\cdot \sum_{i,j=1}^d \partial_{ij}G(\cdot,\mu)d[\mu_i,\mu_j]$ is strictly increasing on $(0,T_0)$, due to the strict positivity of $\psi$, the strong concavity of $H$ and the ellipticity of $d[\mu]$. As such, we obtain the strict inequality
    \begin{align*}
        \Gamma(T) & > -\int_0^{T_0} \partial_t G(s,\mu(s))ds -\frac{1}{2} \int_{T_0}^{T} \sum_{i,j=1}^d \partial_{ij}G(s,\mu(s))d[\mu_i,\mu_j](s) \\
        & \geq - \nu \overline H + \frac{\nu m}{2}\sum_{i=1}^d [\mu_i](t),
    \end{align*}
    where in the first inequality we also dropped the nonnegative impact term appearing in the second line of \eqref{eqn:Gamma}. The second inequality used \eqref{eqn:Gamma_lower_bound} for the first term and strong concavity of $H$ for the second, as we did previously in \eqref{eqn:VQ_lower_bound}. By Lemmas~\ref{lem:diversity} and \ref{lem:nondegeneracy}, the market is diverse and nondegenerate on $[0,T_1]$. Since we are assuming $T \in [0,T_1]$, arguing as in \eqref{eqn:QV_estimates} and the preceding estimates, we obtain
    \[\Gamma(T) > - \nu \overline H + \frac{\nu m \epsilon_P\delta_P^2}{2d}(T- T_0) = - \nu \overline H + \frac{\nu m \epsilon_P\delta_P^2}{2d}T^* =   0,\]
    where the final equality used the definition of $T^*$.
    This contradicts the definition of $T_1$ and establishes that $\P(T_1 < T) = 1$.
\end{proof}
Lemmas~\ref{lem:impact_bound}--\ref{lem:stopping_time} now prove Lemma~\ref{lem:relarb}.

\bibliographystyle{plain}
\setlength{\bibsep}{3pt}
\small
\bibliography{reference}

\end{document}